\def\R {{\mathbb R}}
\def\cE {{\mathcal{E}}}
\def\H01{{H_0^1(\Omega)}}
\def\L2{{L^2(\Omega)}}
\newcommand*{\eps}[0]{{\epsilon}}
\newtheorem{theorem}{Theorem}[section] 
\newtheorem{lemma}{Lemma}[section]
\newtheorem{remark}{Remark}[section]
\newtheorem{proposition}{Proposition}
\newtheorem{algorithm}{Algorithm}[section]
\DeclareMathOperator*{\argmin}{arg\,min}
\newcommand{\bT}{\boldsymbol{\theta}}
\newcommand{\bP}{\boldsymbol{P}}
\newcommand{\bps}{\boldsymbol{\psi}}
\newcommand{\bU}{\boldsymbol{U}}
\newcommand{\bX}{\boldsymbol{X}}
\newcommand{\bx}{\boldsymbol{x}}
\newcommand{\bF}{\boldsymbol{F}}
\newcommand{\bW}{\boldsymbol{W}}
\newcommand{\bS}{\boldsymbol{\sigma}}
\newcommand{\Beq}{\begin{equation}}
\newcommand{\Eeq}{\end{equation}}
\newcommand{\beq}{\begin{equation*}}
\newcommand{\eeq}{\end{equation*}}
\newcommand{\bal}{\begin{align}}
\newcommand{\eal}{\end{align}}
\renewcommand{\L}{\langle}
\newcommand{\bp}{\begin{prob}}
\newcommand{\ep}{\end{prob}}
\newcommand{\bpr}{\begin{proof}}
\newcommand{\epr}{\end{proof}}
\newcommand{\bel}[1]{\begin{equation}\label{#1}}
\newcommand{\ee}{\end{equation}}
\date{}
\author{
Souvik Roy{\footnote{(Corresponding Author)
souvik.roy@uta.edu, Department of Mathematics, The University of Texas at Arlington, Arlington, TX 76019-0407, USA}}\and
Zui Pan{\footnote{zui.pan@uta.edu, College of Nursing and Health Innovation, The University of Texas at Arlington, Arlington, TX 76019-0407, USA}}\and
Suvra Pal{\footnote{suvra.pal@uta.edu, Department of Mathematics, The University of Texas at Arlington, Arlington, TX 76019-0407, USA}}
  }
\begin{document}

\title{A Fokker-Planck feedback control framework for optimal personalized therapies in colon cancer-induced angiogenesis }

\maketitle
\begin{abstract}
In this paper, a new framework for obtaining personalized optimal treatment strategies in colon cancer-induced angiogenesis is presented. The dynamics of colon cancer is given by a It\'o stochastic process, which helps in modeling the randomness present in the system. The stochastic dynamics is then represented by the Fokker-Planck (FP) partial differential equation (PDE) that governs the evolution of the associated probability density function. The optimal therapies are obtained using a three step procedure. First, a finite dimensional FP-constrained optimization problem is formulated that takes input individual noisy patient data, and is solved to obtain the unknown parameters corresponding to the individual tumor characteristics. Next, a sensitivity analysis of the optimal parameter set is used to determine the parameters to be controlled, thus, helping in assessing the types of treatment therapies. Finally, a feedback FP control problem is solved to determine the optimal combination therapies. Numerical results with the combination drug, comprising of Bevacizumab and Capecitabine, demonstrate the efficiency of the proposed framework.
\end{abstract}

{ \bf Keywords}: {Fokker-Planck optimization, non-linear conjugate gradient, Weibull distribution, anti-angiogenic drugs, chemotherapy.}\\

{ \bf MSC}: {35R30, 49J20, 49K20,  62D99, 65M08, 82C31}

\section{Introduction}

Colon cancer is a primary cause contributing to the worldwide cancer related deaths \cite{Fitz2013}. Due to the absence of symptoms at an early stage of colon cancer, its detection in patients usually occurs when the cancer becomes metastatic, due to the lack of early symptoms \cite{Schm2012}. Thus, it is of paramount importance to devise rapid and effective treatment strategies. For this purpose, one needs to have a proper understanding of the dynamics of relevant biomarkers to track the cancer progression and, further, determine the controllable tumor-sensitive biological factors. In this context, angiogenesis is an important biomarker that describes the formation of blood vessels, where new vasculature develops in order to support the tumor as it increases in size. Angiogenesis has been found to be a crucial prognostic factor in colon cancer \cite{Folk1,Folk2}. In the past, there have been numerous experimental studies related to the investigation of the angiogenesis induced by colon cancer in order to develop targeted therapies \cite{Baud2015}. For e.g., the authors in \cite{Ferr2003} study the dynamics of vascular endothelial growth factor (VEGF), which belongs to the class of tumor angiogenic factors (TAF) that promotes angiogenesis. In \cite{Ferr2005}, the authors investigate the role of a humanized monoclonal antibody targeting VEGF, known as Bevacizumab, as an anti-angiogenic drug. In \cite{Sapi2015}, the authors consider two different treatment strategies with low and high doses of Bevacizumab in experimental mice, and track the effects on tumor growth. In addition to Bevacizumab, effects of another anti-angiogenic drug called ziv-aflibercept was studied in \cite{Sun2012}. Recently, the authors in \cite{Yin2020} analyzed the role of HIF-$1\alpha$, VEGF and microvascular density in primary and metastatic tumors. For a comprehensive discussion on the angiogenesis pathway and some important biomarker discoveries for colon cancer, we refer the readers to \cite{Mousa2015}.

For treatment of colon cancer, it is important to devise optimal therapies that takes into account the maximum allowable drug amount and side-effects of the drugs. This process requires the need of experimental studies of drug effects with a combination of different drugs, like anti-angiogenic and chemotherapeutic drugs. However, since these drugs are quite costly, it is expensive to perform in-vitro and in-vivo experimental studies for testing effects of such drugs. This motivates the need to use computational frameworks as an alternate cost effective option for testing optimal therapies. For this purpose, it is important to understand the angiogenesis pathway mechanisms through mathematical pharmacokinetic models. Traditionally, pharmacokinetic models are described by a complex, non-linear system of deterministic differential equations, governing the dynamics of a biological process and drug interactions. There are several deterministic pharmacokinetic dynamical models that describe the process of angiogenesis in colon cancer. The authors in \cite{Bald1985} describe a diffusion-based model for TAF and growth of capillaries. In \cite{Chap1}, the authors describe the diffusion  of  the  TAF  into  the  surrounding  host  tissue  and  the  response  of  the endothelial  cells  to  the  chemotactic  stimulus through a system of partial differential equations (PDE). Another PDE-based model was used \cite{Chap2} to describe the features of a growing capillary network. In \cite{Byrne1995}, a PDE system is used to describe the migration of capillary sprouts in response to a chemoattractant field set up by TAF. Furthermore, a successful or failed neovascularization of the tumor is described through  the existence of traveling wave solutions of this system. The authors in \cite{Vila2017} develop a PDE-based model for vascular regression and regrowth. Since, the dynamics for angiogenesis contain inherent randomness due to phenotypic heterogenity \cite{Dick2008}, a more realistic dynamical model for angiogenesis can be given using stochastic differential equations. In this context, a few stochastic models have been developed to describe the process of angiogenesis. In \cite{Moha2008}, the authors develop a Markov chain model to investigate the changes of microvessel densities in tumor. A simplified stochastic geometric model was built by the authors in \cite{Capa2009} to describe a spatially distributed angiogenic processes. For a detailed review of the other available dynamical models for angiogenesis, we refer the readers to the review papers \cite{Chap3,Pierce2008,Rie2015}.

In addition, there are several works that use pharmacokinetic models to determine optimal treatments for angiogenesis-induced cancers.
In \cite{Chap4}, it was demonstrated that  administering anti-angiogenesis treatment first allows for more effective delivery of chemotherapy. In \cite{Powa2013}, the authors consider a pharmacokinetic cellular automata model to incorporate the cytotoxic effects of chemotherapy drugs. The authors in \cite{Argy2016} use a  two-compartmental model to capture pharmacokinetic properties of Bevacizumab into an ordinary differential equations (ODE)-based vascular tumour growth model. Recently, in \cite{Stur2018}, the authors determine an optimal treatment strategy for colon cancer-induced angiogenesis using a combination of Bevacizumab and FOLFOX (a chemotherapy drug). A major drawback was pointed out in \cite{Argy2016,Stur2018} to allude to the fact that the parameter estimation process was done using assembled data from disparate sources, such as multiple biological studies and clinical assays. Now, the coefficients in pharmacokinetic models, representing unknown parameters, describe an individual patient tumor characteristics. Since, the properties of tumor vary from patient to patient, an accurate estimation of these parameters is important in, subsequently, developing effective treatment strategies. Thus, the traditional parameter estimation methods results in an inadequate validation and are not useful in devising personalized therapies. Another drawback in the aforementioned works is that optimal treatment strategies are devised based on ODE-based pharmacokinetic models, that fail to capture the randomness in the dynamical process.

We contribute to the field of pharmacokinetic cancer research by presenting an effective approach to develop personalized therapies for colon cancer-induced angiogenesis. Our approach is based on a new coupled parameter estimation-sensitivity analysis technique, in the realm of PDE-constrained optimal control framework. The starting point of this estimation process is to consider a recent dynamical model for angiogenesis, given in \cite{Cser2019}. The model describes the evolution of three variables: the proliferating tumor volume, the vasculature volume in tumor and the dynamics of tumor angiogenic factors (TAF). To incorporate randomness of the tumor-induced angiogenesis dynamics, we extend the dynamical model presented in \cite{Cser2019}, to a It\^o stochastic process. To develop personalized therapies the first step is to determine the unknown coefficients or parameters of this stochastic process, that represent the individual-specific properties, from given patient data, by solving a PDE-constrained optimization problem. But due to the presence of random variables, one needs to consider expectation cost functionals for the optimization problem. To solve this problem, one can then use the method of dynamic programming to determine the necessary Hamilton-Jacobi-Bellman (HJB) equations. But this poses severe challenges due to the complexity nature of the underlying dynamical stochastic process. 

A more convenient framework for determining the unknown parameters is to use a deterministic setup through the Fokker-Planck (FP) equations, that represents the evolution of the joint probability density function (PDF) associated to the random variables in stochastic process. Usually, the experimental data contains random noise that arises due to the inherent cell measurement errors by different methods. Thus, while developing parameter estimation methods, one needs to incorporate the presence of noisy data into the estimation process. In this context, the FP optimization framework provides a robust mechanism to encompass a wide range of objective functionals that can incorporate noisy data measurements while providing accurate estimates of the parameters. Furthermore, one can use the FP open-loop and feedback control frameworks to efficiently solve highly non-linear optimal control problems. Such FP control frameworks have been used in past for problems arising in control of collective and crowd motion \cite{Roy2016,Roy2018a}, investigating pedestrian motion from a game theoretic perspective \cite{Roy2017}, reconstructing cell-membrane potentials\cite{Annun2021}, mean field control problems \cite{Borzi2020}, controlling production of Subtilin \cite{Thal2016}. To the best of our knowledge, this is the first work that considers the FP framework to devise optimal treatment strategies for cancer.

To determine the optimal combination therapies using a FP framework, we use a three step process: first, we formulate and solve a finite dimensional FP-constrained optimization problem to obtain the unknown patient-specific parameters. In the second step, we use a Monte Carlo-based sampling technique, called latin hypercube sampling (LHS), together with partial rank correlation coefficient (PRCC) analysis to determine the most sensitive parameters. In the third and final step, we formulate a feedback FP control problem, where the control functions represent feedback drug concentrations, corresponding to combination therapies. We use a combination therapy comprising of an anti-angiogenic inhibitor, Bevacizumab, and chemotherapeutic drug, Capecitabine, for our study. The FP control problem is solved to obtain the optimal combination drug dosages and optimal time of administering the drug. We remark that the motivation for considering feedback dosage concentrations is because, in clinical practice, effective dosages are administered not at regular intervals but in adaptive mode based on the parameter readings of the patient (for e.g., see the review article \cite{Almeida2018}). Feedback strategies for cancer treatment have been mathematically very less explored in literature \cite{Algoul2010,Tang2016}. But the authors in \cite{Tang2016} do suggest that feedback type treatments result in a better and optimal tumor control and drug dosage strategies. Also, in \cite{Tang2016}, it is mentioned that personalized therapies are more natural and feasible. The novelty of our work is the development of a new feedback treatment framework for obtaining personalized and rapid treatment mechanism for colon cancer-induced angiogenesis.

In the next section, we describe a three-step FP control framework for devising optimal treatments, which is based on solving two FP-optimization problems. Section \ref{sec:theory} is concerned with the theoretical properties of the FP optimization problems. In Section \ref{sec:numerical_FP}, we present a numerical scheme based on a combined splitting technique for time discretization and Chang-Cooper scheme for spatial discretization for the FP equations. We prove the properties of conservativeness, positivity, and second order convergence of the scheme. Section \ref{sec:uncertainty} is devoted to the theory of the uncertainty quantification and sensitivity analysis of the estimated parameter set using the LHS-PRCC technique. In Section \ref{sec:results}, we validate our framework by estimating parameters using synthetically generated data and real data from \cite{Cser2019,Sapi2015}. Furthermore, we apply the LHS-PRCC technique on both generated data and real data to identify the most sensitive parameters with respect to an output of our interest. Finally, we obtain the optimal combination therapies, comprising of Bevacizumab and Capecitabine, and show the correspondence of the results with experimental findings. A section of conclusion completes the exposition of our work.

\section{A Fokker-Planck framework for effective combination therapies}\label{sec:FP}

In this section, we present a Fokker-Planck framework for treatment assessment in colon cancer-induced angiogenesis. The starting point is to describe the dynamics of angiogenesis using a coupled system of ordinary differential equations (ODEs), based on the model given in \cite{Cser2019}. The following are the variables associated with different types of cell populations whose evolution we track over time.
\begin{enumerate}

\item $V(t)$-  the total tumor cell volume (cm$^3$)
\item $B(t)$-  the vasculature volume in the tumor (cm$^3$)
\item $T(t)$-  the concentration of tumor angiogenic factors (TAF) in the tumor (mg/ml),

\end{enumerate}
where $t$ is the time variable. The governing system of ODEs, representing the dynamics of the aforementioned variables, are given as follows
\begin{equation}\label{eq:ODE}
\begin{aligned}
&\dfrac{dV}{d\tau} =c\gamma V - \alpha_1 u_1(V,t)V,~ V(0) = V_0,\\
&\dfrac{dB}{d\tau} = c_{e}c\gamma V + c_v TB- \alpha_2 u_2(B,t)B,~ B(0) = B_0,\\
&\dfrac{dT}{d\tau} = c_T(1-\gamma) - q_TT- \alpha_3 u_3(T,t)T,~ T(0) = T_0.\\
\end{aligned}
\end{equation}
The unknown patient parameters that need to be determined is the parameter vector $\boldsymbol{\theta} = (c,c_e,c_v,c_T,q_T,\gamma)$, defined as follows
\begin{enumerate}
\item $c$- growth rate of tumor (day$^{-1}$)
\item $c_e$- rate of internalization of new vasculature from the environment
\item $c_v$- rate of formation of new blood vessels due to TAF (ml mg$^{-1}$ day$^{-1}$)
\item $c_T$- rate of production of TAF (mg ml$^{-1}$ day$^{-1}$)
\item $q_T$- rate of removal of TAF from tumor (day$^{-1}$)
\item $\gamma$- ratio of well-supported tumor cells inside the tumor volume
\end{enumerate}
The parameter $\gamma$ is one of the important parameters of interest as it determines the ratio of the tumor cells inside the tumor volume that receive nutrients from outside the tumor. In \cite{Cser2019}, $\gamma$ is a function of $t$ that initially decreases when the tumor volume is close to zero but stabilizes quickly after the volume reaches a threshold level. We consider our modeling framework with a non-zero starting tumor volume and, thus, we assume that $\gamma$ is constant.

The functions $u_1(V,t),u_2(B,t),u_3(T,t)$ represent dosages of combination of different chemotherapeutic and anti-angiogenic drugs, like Bevacizumab and Capecitabine. We consider the feedback dosages to be dependent only on the specific variable on which the drug affects the most \cite{Jack2000}. The constants $\alpha_i,~ i = 1,2,3$ represent efficiency of the drugs. For stabilization and scalability of the numerical algorithms, we non-dimensionalize the ODE system \eqref{eq:ODE} using the following non-dimensionalized state and time variables, and parameters
\begin{equation}\label{eq:Nvar}
\begin{aligned}
& \bar{V} = k_1 V,~ \bar{B} = k_2 B,~ \bar{T} = k_3 T,~t = k_4 \tau, \\
& \bar{c} = \frac{c}{k_4},~ \bar{c}_e = \frac{c_ek_2}{k_1},~ \bar{c}_v = \frac{c_v}{k_3k_4},~ \bar{c}_T = \frac{c_Tk_3}{k_4},~ \bar{q}_T = \frac{q_T}{k_4},~ \bar{u}_i = \frac{\alpha_i u_i}{k_4},~ i = 1,2,3.
\end{aligned}
\end{equation}
Then, the transformed non-dimensionless ODE system is given as follows
\begin{equation}\label{eq:NODE}
\begin{aligned}
&\dfrac{d\bar{V}}{dt} =\bar{c}\gamma \bar{V}- \bar{\alpha}_1 \bar{u}_1(\bar{V},t)\bar{V},~ \bar{V}(0) = \bar{V}_0,\\
&\dfrac{d\bar{B}}{dt} = \bar{c}_{e}\bar{c}\gamma \bar{V} + \bar{c}_v \bar{T}\bar{B} -\bar{\alpha}_2 \bar{u}_2(\bar{B},t)\bar{B},~ \bar{B}(0) = \bar{B}_0,\\
&\dfrac{d\bar{T}}{dt} = \bar{c}_T(1-\gamma) - \bar{q}_T\bar{T} - \bar{\alpha}_3 \bar{u}_3(\bar{T},t)\bar{T},~ \bar{T}(0) = \bar{T}_0.\\
\end{aligned}
\end{equation}

The system of ODEs given in \eqref{eq:NODE} can be written in a compact form as follows
\begin{equation}\label{eq:ODE_compact}
\begin{aligned}
&\dfrac{d\bX}{dt} = \bF(\bX,\bT, \bU(\bX,t)),\\
&\bX(0) = \bX_0,
\end{aligned}
\end{equation}
where $\bX(t) = (\bar{V}(t),\bar{B}(t),\bar{T}(t))^T$, and, without loss of generality, $\bT = (\bar{c},\bar{c}_e,\bar{c}_v,\bar{c}_T,\bar{q}_T,\gamma)$ and $\bU = (\bar{u}_1(\bar{V},t),\bar{u}_2(\bar{B},t),\bar{u}_3(\bar{T},t))$.

We extend the ODE system \eqref{eq:ODE} to the following system of It\^o stochastic differential equation corresponding to \eqref{eq:NODE} 
\begin{equation}\label{eq:ItoODE}
\begin{aligned}
&\dfrac{d\bar{V}}{dt} =\bar{c}\gamma \bar{V}-  \bar{u}_1(\bar{V},t)\bar{V}+ \sigma_1(\bar{V},\bar{B},\bar{T}) ~dW_1(t),~ \bar{V}(0) = \bar{V}_0,\\
&\dfrac{d\bar{B}}{dt} = \bar{c}_{e}\bar{c}\gamma \bar{V} + \bar{c}_v \bar{T}\bar{B}-\bar{u}_2(\bar{B},t)\bar{B}+ \sigma_2(\bar{V},\bar{B},\bar{T}) ~dW_2(t),~ \bar{B}(0) = \bar{B}_0,\\
&\dfrac{d\bar{T}}{dt} = \bar{c}_T(1-\gamma) - \bar{q}_T\bar{T}-  \bar{u}_3(\bar{T},t)\bar{T}+ \sigma_3(\bar{V},\bar{B},\bar{T}) ~dW_3(t),~ \bar{T}(0) = \bar{T}_0.\\
\end{aligned}
\end{equation}
where $dW_i,~ i = 1,2,3$ are one-dimensional Wiener processes, and $\sigma_i,~ i=1,2,3$,  are positive. The expressions for $\sigma_i$ will be obtained using experimental data, as described in Section \ref{sec:results}.

Using a compact notation, we can write \eqref{eq:ItoODE} as 
 \begin{equation}\label{eq:ItoODE_compact}
\begin{aligned}
&\dfrac{d\bX}{dt} =\bF(\bX,\bT, \bU(\bX,t)) + \bS(\bX)~d\bW(t) ,\\
&\bX(0) = \bX_0,
\end{aligned}
\end{equation}
where 
\[
d\bW(t) = 
\begin{pmatrix}
&dW_1(t)\\
&dW_2(t)\\
&dW_3(t)\\
\end{pmatrix}
\]
is a 3-dimensional Wiener process with stochastically independent components and 
\[
\bS(\bx) = 
\begin{pmatrix}
&\sigma_1(\bX) &0 & 0\\
&0&\sigma_2(\bX)&0 \\
&0&0&\sigma_3(\bX) \\
\end{pmatrix}
\]
is the dispersion matrix, with $\sigma_i(\bX) > 0$ for all $\bX$.

We now characterize the state of the stochastic process, describing the evolution of $\bX(t)$ through \eqref{eq:ItoODE_compact}, by its probability density function (PDF). For this purpose, we first assume that the process \eqref{eq:ItoODE_compact} is constrained to stay in a 
bounded convex domain with Lipschitz boundaries, thus $X(t) \in \Omega\subset\R^3_+ = \lbrace x\in\mathbb{R}^3:x_i \geq 0,~ i= 1,2,3 \rbrace$, by virtue of a reflecting barrier on $\partial \Omega$. This is due to the maximum carrying capacity inside a human being. Let $x = (x_1,x_2,x_3)^T$. Now define $f(x,t)$ as the PDF for the stochastic process described by \eqref{eq:ItoODE_compact}, i.e., $f(x,t)$ is the probability of $\bX(t)$ assuming the value $x$ at time $t$. Then, the evolution of the PDF of the process modeled by \eqref{eq:ItoODE_compact} is given through the following Fokker-Planck (FP) equations
\begin{equation}\label{eq:FP}
\begin{aligned}
&\partial_t f(x,t)+\nabla \cdot (\bF(x,\bT,u(x,t))~f(x,t)) = 
\frac{1}{2}\nabla \cdot (\bS^2(\bX) \nabla f(x,t)),\\
&f(x,0) =  f_0(x),
\end{aligned}
\end{equation}
where $f_0(x)$ represents the 
initial PDF distribution that satisfies the following
\begin{equation}\label{eq:initial_cond}
 f_0 \geq 0,\qquad\int_\Omega f_0(x)dx = 1,
 \end{equation}
 The function $f_0(x)$ represents the distribution of the initial state $X_0$ of the process and the domain of definition of the FP problem is $Q=\Omega\times(0,T_f)$, where $T_f$ is the final time of observation. The reflecting barrier conditions assumed on the process correspond to  flux zero boundary conditions for the FP equation \eqref{eq:FP}. For this purpose, we write \eqref{eq:FP} in flux form as 
\begin{equation}\label{eq:FPflux}
\partial_t f(x,t)=\nabla\cdot \mathcal{F}, \qquad f(x,0) =  f_0(x) ,
\end{equation}
where the flux $F$ is given component-wise by 
\begin{equation}\label{eq:flux_def}
\mathcal{F}_j(x,t;f) = \frac{\sigma_j^2(x)}{2}\partial_{x_j} f-\bF_j(x,\bT)f, ~ j = 1,2,3.
\end{equation}
Then, the flux zero boundary conditions can be formulated as follows
\begin{equation}\label{eq:nfbc}
\mathcal{F}\cdot \hat n = 0 \qquad \mbox{ on } \partial\Omega\times(0,T),
\end{equation}
where $\hat n$ is the unit outward normal on $\partial\Omega$. 

We consider $\bT\in U_{ad} = \lbrace y\in \mathbb{R}^6: 0 \leq y_i \leq M_i,~ i=1,\cdots, 6,~ M_i>0\rbrace$, and $u(x,t) = (\bar{u}_1(x_1,t),\bar{u}_2(x_2,t),\bar{u}_3(x_3,t))$ in the admissible set
\[
\bar{u}_i(x_i,t) \in V^i_{ad} = \lbrace \bar{u}_i \in L^2([0,T;H^1(\Omega_i)):\forall (x,t) \in Q,~ 0 \leq u_i(x_i,t) \leq D_i,~ D_i>0\rbrace,~ i = 1,2,3,
\]
where $M_i$ are chosen so as to represent the feasible biological ranges of the parameters, $D_i$ is the maximum tolerable dose for the drug represented by $\bar{u}_i$, and $\Omega_i$ is the one dimensional subdomain in the $i^{th}$ direction.

\begin{remark}

We note that $u$, as a function of $x$, is in $H^1(\Omega)$, represents the fact that the drug dosages are smooth functions of $V,B,T$. This implies that during dose administration, a feasible strategy is to 
increase or decrease the dose feedback in a smooth way depending on the values of $V,B,T$, without severe discontinuities. However, as a function of time, the dosage profiles can have a non-smooth structure, which is why $u$ is $L^2$ with respect to $t$. This is practically motivated due to the fact that the dosage profile is more sensitive and adaptive to $V,B,T$ rather than with respect to $t$ (see e.g., \cite{Gatenby2009}).
\end{remark}

\subsection{FP algorithm for optimal combination therapies}
We now describe the algorithm for obtaining the optimal combination therapies. We follow a three-step process as described below:

\begin{enumerate} 
\item We first estimate the patient specific unknown parameter vector $\bT$, given the values of $f(x,t)$ at specific time instants $t_1,\cdots, t_N$ as $f^*_i(x),~i=1,\cdots,N$. For this purpose, we solve the following optimization problem
\begin{equation}\label{eq:min_problem}
\begin{aligned}
\boldsymbol{\theta}^* = \argmin_{\bT \in U_{ad}} J_1(f,\boldsymbol{\theta}) := &\dfrac{\alpha}{2} \int_Q (f(x,t) - f^*(x,t))^2~dx + \dfrac{\beta}{2}\|\boldsymbol{\theta}\|_{l^2}^2,
\end{aligned}
\end{equation}
subject to the FP system \eqref{eq:FP},\eqref{eq:initial_cond},\eqref{eq:nfbc} with $\bar{u}_i = 0,~ i = 1,2,3$, where $f^*(x,t)$ is the data function formed by interpolating the patient data $f^*_i(x)$.

\item In the next step, we determine the subset of the optimal parameter set $\bT^*$ that is sensitive with respect to the tumor volume $V$. This will be achieved through a global uncertainty and sensitivity analysis using the Latin hypercube sampling-partial rank correlation coefficient method (LHS-PRCC), as described in Section \ref{sec:uncertainty}.

\item Using the information of the sensitive parameters from the previous step, we now decide on the type of drugs to be chosen, and the number of different drugs to be used, represented by the number of $\bar{\alpha}_i \neq 0$. We then formulate a second FP optimization problem as follows:
\begin{equation}\label{eq:second_min_problem}
\begin{aligned}
\min_{\bar{u}_i \in V^i_{ad},\bar{\alpha}_i\neq 0} J_2(\bar{u}_i,f):=\dfrac{\nu}{2} \int_\Omega (f(x,T_f) - f_d(x))^2~dx + \sum_{i=1,
\bar{\alpha}_i \neq 0}^3\dfrac{\beta_i}{2}\int_0^{T_f} \|\bar{u}_i\|_{H^1(\Omega)}^2~dt,
\end{aligned}
\end{equation}
subject to the FP system \eqref{eq:FP},\eqref{eq:initial_cond},\eqref{eq:nfbc},
\end{enumerate}
where $f_d$ is a target PDF at the final time $T_f$.
At the end of Step 3, we not only obtain the types of drugs that can be used for treatment but also the optimal drug concentration and the dosage profile over time. 

\section{Theory of the Fokker-Planck parameter estimation problem}\label{sec:theory}
In this section, we discuss some theoretical results related to FP system \eqref{eq:FP} and  the existence of solutions of the optimization problem \eqref{eq:min_problem}. For this purpose, we denote the FP system  \eqref{eq:FP},\eqref{eq:initial_cond},\eqref{eq:nfbc} as $\cE(f_0,\bT,u)=0$. We also define $V_{ad} = V_{ad}^1 \times V_{ad}^2 \times V_{ad}^3.$ We first discuss the existence of weak solutions of \eqref{eq:FP}.  We have the following proposition.

\begin{proposition}\label{th:proposition1}
Let $f_0 \in H^1(\Omega)$, $f_0\ge 0$, $\bT\in U_{ad}$, and $u \in V_{ad},~ i=1,2,3$. Then, there exists an unique non-negative solution of $\cE(f_0,\bT,u)=0$ given by 
$f \in L^2(0,T;H^1(\Omega)) \cap C([0,T];L^2(\Omega))$.
\end{proposition}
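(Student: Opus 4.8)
The plan is to treat \eqref{eq:FP} with the no-flux condition \eqref{eq:nfbc} as a linear parabolic equation in divergence form, apply the classical Lions-type existence theory in a Gelfand triple, and then recover non-negativity by a Stampacchia truncation argument. Pairing the flux form \eqref{eq:FPflux} with a test function $v\in H^1(\Omega)$ and integrating by parts, the boundary term vanishes by \eqref{eq:nfbc}, so a weak solution is an $f$ with $f(0)=f_0$ satisfying, for a.e.\ $t$ and all $v\in H^1(\Omega)$,
\[
\langle\partial_t f(t),v\rangle+a(t;f(t),v)=0,\qquad a(t;w,v):=\int_\Omega\sum_j\frac{\sigma_j^2(x)}{2}\,\partial_{x_j}w\,\partial_{x_j}v\,dx-\int_\Omega w\,\bF\cdot\nabla v\,dx,
\]
in the triple $H^1(\Omega)\hookrightarrow L^2(\Omega)\hookrightarrow(H^1(\Omega))'$. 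Two structural facts must be checked: (i) uniform ellipticity, which holds because each $\sigma_j$ is continuous and strictly positive on the compact set $\overline\Omega$, so $\sigma_j^2(x)/2\ge\theta_0>0$; and (ii) $\bF\in L^\infty(Q)$, which holds because on the bounded domain $\Omega$ the components of $\bF(x,\bT,u(x,t))$ are polynomials in $x$ with coefficients controlled by $\bT\in U_{ad}$ and by $0\le\bar u_i\le D_i$ — note only measurability and boundedness in $t$ are needed here, so the merely $L^2$-in-time regularity of $u$ is not an obstacle. Then $a(t;\cdot,\cdot)$ is bounded on $H^1(\Omega)$ uniformly in $t$, and estimating the drift term by $\|\bF\|_{L^\infty}\|w\|_{L^2}\|\nabla w\|_{L^2}$ and Young's inequality yields a G\r{a}rding inequality $a(t;w,w)+\lambda\|w\|_{L^2}^2\ge c_0\|w\|_{H^1(\Omega)}^2$ with $\lambda,c_0>0$ independent of $t$.

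With these properties and $f_0\in H^1(\Omega)\subset L^2(\Omega)$, I would invoke the standard existence--uniqueness theorem for abstract parabolic equations (Lions--Magenes, or Dautray--Lions, or Evans): there is a unique $f\in L^2(0,T;H^1(\Omega))$ with $\partial_t f\in L^2(0,T;(H^1(\Omega))')$ solving the weak problem and attaining $f(0)=f_0$, together with the energy estimate $\|f\|_{L^2(0,T;H^1)}+\sup_{[0,T]}\|f\|_{L^2}\le C\|f_0\|_{L^2}$. The embedding $L^2(0,T;H^1)\cap H^1(0,T;(H^1)')\hookrightarrow C([0,T];L^2(\Omega))$ then gives $f\in C([0,T];L^2(\Omega))$, which is exactly the claimed regularity class. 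Uniqueness is in fact immediate from linearity: the difference of two solutions solves the homogeneous problem with zero initial data, and testing with it and applying Gr\"onwall's lemma forces it to vanish.

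It remains to prove $f\ge0$. Since $f\in L^2(0,T;H^1(\Omega))$, so does its negative part $f^-:=\max\{-f,0\}$, with $\nabla f^-=-\nabla f$ on $\{f<0\}$ and $\nabla f^-=0$ elsewhere, hence $f^-$ is an admissible test function. By the chain-rule lemma for functions in $L^2(0,T;H^1)$ with time derivative in $L^2(0,T;(H^1)')$, the map $t\mapsto\tfrac12\|f^-(t)\|_{L^2}^2$ is absolutely continuous and
\[
\tfrac12\tfrac{d}{dt}\|f^-(t)\|_{L^2}^2=-\langle\partial_t f(t),f^-(t)\rangle=a(t;f(t),f^-(t))=-\int_\Omega\sum_j\frac{\sigma_j^2}{2}|\partial_{x_j}f^-|^2\,dx+\int_\Omega f^-\,\bF\cdot\nabla f^-\,dx,
\]
using $f=-f^-$ and $\nabla f=-\nabla f^-$ on $\{f<0\}$. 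Bounding the last integral by $\|\bF\|_{L^\infty}\|f^-\|_{L^2}\|\nabla f^-\|_{L^2}$ and using Young's inequality absorbs the gradient term into the dissipative term, leaving $\tfrac{d}{dt}\|f^-(t)\|_{L^2}^2\le C\|f^-(t)\|_{L^2}^2$; since $f_0\ge0$ gives $f^-(0)=0$, Gr\"onwall's lemma yields $f^-\equiv0$, i.e.\ $f\ge0$ a.e.\ in $Q$. The main obstacle here is not any single deep step but the careful verification that the abstract parabolic framework genuinely applies — in particular uniform ellipticity from strict positivity of the $\sigma_j$ on $\overline\Omega$ and $L^\infty$-boundedness of the drift despite $u$ being only $L^2$ in time — together with the sign bookkeeping in the truncation argument; everything else is the routine Lions machinery. (Should one want the stronger regularity suggested by $f_0\in H^1$, testing with $\partial_t f$ together with parabolic regularity would upgrade $f$ to $L^\infty(0,T;H^1)\cap L^2(0,T;H^2)$, but this is not needed for the stated conclusion.)
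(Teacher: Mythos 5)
Your proof is correct. Note, though, that the paper does not actually supply a proof of this proposition: it is stated bare, followed only by a remark that classical techniques (citing Tao) give additional $H^2$ regularity in space. So there is nothing to compare against line by line; your argument fills a gap the authors left open, and it does so by exactly the route their setup invites — the weak formulation you write down is the paper's own identity \eqref{eq:L2_inner_prod}, and the Lions/Gelfand-triple machinery plus Stampacchia truncation is the standard way to get existence, uniqueness, the class $L^2(0,T;H^1(\Omega))\cap C([0,T];L^2(\Omega))$, and non-negativity. Two small points worth making explicit if this were to be inserted into the paper. First, your uniform ellipticity step needs $\inf_{\overline\Omega}\sigma_j>0$, whereas the paper only assumes pointwise positivity $\sigma_i(\bX)>0$; on the bounded $\overline\Omega$ this follows once one also assumes continuity of the $\sigma_j$ (satisfied by the concrete choice $\sigma_i(x)=0.5(x_i^{1.2}+0.001)$ used later, and indeed the authors elsewhere use $\|\bS^{-1}\|_2$, implicitly assuming non-degeneracy), so you are right to flag it as a hypothesis being verified rather than free. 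Second, your observation that only $\bF\in L^\infty(Q)$ — and not $\nabla\cdot\bF\in L^\infty$ — is needed is important here, since $u$ is merely $H^1$ in $x$ and $L^2$ in $t$; keeping the drift on the test-function side of the integration by parts is what makes the argument go through, and you handle that correctly.
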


We remark that using classical techniques \cite{Tao},  one can get the $H^2(\Omega)$ regularity in space. Next, because of \eqref{eq:FPflux} and \eqref{eq:nfbc}, we can prove the following proposition that states conservation of the total probability. 
\begin{proposition}\label{th:FPcons}
The FP system given in \eqref{eq:FP},\eqref{eq:initial_cond},\eqref{eq:nfbc} is conservative.
\end{proposition}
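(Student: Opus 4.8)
The plan is to show that $\frac{d}{dt}\int_\Omega f(x,t)\,dx = 0$, so that the total probability $\int_\Omega f(x,t)\,dx$ is preserved in time and hence equal to $\int_\Omega f_0(x)\,dx = 1$ for all $t \in (0,T_f)$ by \eqref{eq:initial_cond}. First I would invoke Proposition \ref{th:proposition1} to guarantee that a unique solution $f \in L^2(0,T;H^1(\Omega)) \cap C([0,T];L^2(\Omega))$ exists, so that all the manipulations below are justified (in particular $f(\cdot,t) \in H^1(\Omega)$ for a.e.\ $t$, and the flux $\mathcal{F}$ defined in \eqref{eq:flux_def} lies in $L^2(\Omega)$ in each component, so its normal trace on $\partial\Omega$ makes sense and the divergence theorem applies).

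The key computation is short. Starting from the flux form \eqref{eq:FPflux}, integrate over $\Omega$ and differentiate under the integral sign (legitimate since $\partial_t f \in L^2(0,T;H^1(\Omega)^*)$ and $\Omega$ is bounded):
\begin{equation*}
\frac{d}{dt}\int_\Omega f(x,t)\,dx = \int_\Omega \partial_t f(x,t)\,dx = \int_\Omega \nabla\cdot\mathcal{F}(x,t;f)\,dx.
\end{equation*}
Then apply the divergence theorem on the bounded Lipschitz domain $\Omega$ to convert the right-hand side to a boundary integral,
\begin{equation*}
\int_\Omega \nabla\cdot\mathcal{F}(x,t;f)\,dx = \int_{\partial\Omega} \mathcal{F}\cdot\hat n\,ds = 0,
\end{equation*}
where the last equality is exactly the flux-zero boundary condition \eqref{eq:nfbc}. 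Integrating in time from $0$ to $t$ and using the initial normalization \eqref{eq:initial_cond} gives $\int_\Omega f(x,t)\,dx = \int_\Omega f_0(x)\,dx = 1$, which is the assertion of conservativeness.

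The only real obstacle is a matter of rigor rather than of idea: justifying the interchange of $\frac{d}{dt}$ with $\int_\Omega$ and the use of the divergence theorem at the regularity level provided by Proposition \ref{th:proposition1}. I would handle this by working with the weak formulation of \eqref{eq:FP} and choosing the test function $\varphi \equiv 1$ (which lies in $H^1(\Omega)$ since $\Omega$ is bounded); the weak form already encodes the no-flux boundary condition, so the boundary term is absent by construction and the identity $\frac{d}{dt}\int_\Omega f\,dx = 0$ holds in the distributional sense in $t$, then pointwise after using $f \in C([0,T];L^2(\Omega))$. One should also note positivity of $f$ (from Proposition \ref{th:proposition1}) is not needed for conservation itself but confirms that $f(\cdot,t)$ remains a genuine probability density. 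No additional hypotheses on $\bF$, $\bS$, or the smoothness of $\partial\Omega$ beyond those already assumed are required.
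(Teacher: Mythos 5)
Your argument is correct and is essentially the paper's own proof: the paper likewise tests the weak formulation of \eqref{eq:FP} against $\psi\in H^1(\Omega)$, uses the flux-zero boundary condition \eqref{eq:nfbc} to eliminate the boundary term, and then sets $\psi=1$ so that both right-hand-side terms vanish, yielding $\int_\Omega f(x,t)\,dx=\int_\Omega f_0(x)\,dx=1$. Your formal divergence-theorem computation followed by the remark that the rigorous version is the weak form with test function $\varphi\equiv 1$ matches this exactly.
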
 
\begin{proof}
Multiplying \eqref{eq:FP} by $\psi \in H^1(\Omega)$, integrating by parts, and using the flux zero boundary conditions \eqref{eq:nfbc}, we obtain the following
\begin{equation}\label{eq:L2_inner_prod}
\begin{aligned}
\int_\Omega \frac{\partial f}{\partial t} \psi dx 
&= - \frac{1}{2} \int_\Omega \bS^2\nabla f \cdot \nabla \psi dx + \int_\Omega (\bF f) \cdot \nabla \psi \, dx,\\
&= - \frac{1}{2} \int_\Omega \bS^2\nabla f \cdot \bS\nabla \psi dx + \int_\Omega (\bF f) \cdot \nabla \psi \, dx. 
\end{aligned}
\end{equation}
Choosing $\psi=1$, we obtain $\int_\Omega f(x,t)dx= \int_\Omega f_0(x)dx = 1$ for all 
$t \in (0,T]$ and this proves the result.
\end{proof}
The following proposition gives a stability property of our FP system.
\begin{proposition}\label{th:stability}
The solution $f$ of the FP system \eqref{eq:FP},\eqref{eq:initial_cond},\eqref{eq:nfbc} satisfies the following stability estimate
\begin{equation}\label{eq:FPstability}
\| f(t) \|_{L^2(\Omega)} 
\le  \|  f_0 \|_{L^2(\Omega)}  \exp \left(  \|\bS^{-1}\|^2_2 N^2 t \right),
\end{equation}
where $N = \sup_{\Omega\times U_{ad} \times V_{ad}} |\bF(x,\bT,u)|$.
\end{proposition}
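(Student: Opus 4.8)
The plan is to establish \eqref{eq:FPstability} by the standard $L^2$ energy method for the Fokker--Planck equation: test the weak form against $f$ itself, use the zero-flux boundary condition to kill the boundary term, bound the transport term by absorbing its gradient into the diffusion term, and close with Grönwall's inequality. First I would invoke Proposition~\ref{th:proposition1} to guarantee that the solution $f$ has the regularity $f\in L^2(0,T;H^1(\Omega))\cap C([0,T];L^2(\Omega))$ with $\partial_t f\in L^2(0,T;H^{-1}(\Omega))$, so that $f(\cdot,t)$ is an admissible test function in \eqref{eq:L2_inner_prod} and, by the usual Lions--Magenes lemma, $t\mapsto\|f(t)\|_{L^2(\Omega)}^2$ is absolutely continuous with $\tfrac{d}{dt}\|f(t)\|_{L^2(\Omega)}^2 = 2\langle \partial_t f(t), f(t)\rangle$.

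Choosing $\psi=f$ in \eqref{eq:L2_inner_prod}, and using that $\bS$ is diagonal and symmetric so that $\bS^2\nabla f\cdot\nabla f=|\bS\nabla f|^2$, together with the vanishing of the boundary term by the flux-zero condition \eqref{eq:nfbc}, gives the energy identity
\[
\frac12\frac{d}{dt}\|f(t)\|_{L^2(\Omega)}^2 = -\frac12\int_\Omega |\bS\nabla f|^2\,dx + \int_\Omega f\,\bF\cdot\nabla f\,dx .
\]
For the transport term I would use the pointwise bound $|\nabla f| = |\bS^{-1}(\bS\nabla f)| \le \|\bS^{-1}\|_2\,|\bS\nabla f|$ and Cauchy--Schwarz to get
\[
\left|\int_\Omega f\,\bF\cdot\nabla f\,dx\right| \le N\,\|\bS^{-1}\|_2\,\|f(t)\|_{L^2(\Omega)}\,\|\bS\nabla f\|_{L^2(\Omega)},
\]
with $N=\sup_{\Omega\times U_{ad}\times V_{ad}}|\bF|$. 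Applying Young's inequality in the form $ab\le\tfrac12 a^2+\tfrac12 b^2$ to peel off $\|\bS\nabla f\|_{L^2(\Omega)}$, the resulting $\tfrac12\|\bS\nabla f\|_{L^2(\Omega)}^2$ cancels against the diffusion term, leaving
\[
\frac{d}{dt}\|f(t)\|_{L^2(\Omega)}^2 \le \|\bS^{-1}\|_2^2\,N^2\,\|f(t)\|_{L^2(\Omega)}^2 .
\]
Grönwall's inequality then yields $\|f(t)\|_{L^2(\Omega)}^2\le\|f_0\|_{L^2(\Omega)}^2\exp(\|\bS^{-1}\|_2^2 N^2 t)$, and taking square roots gives \eqref{eq:FPstability}; in fact one obtains the slightly sharper constant $\tfrac12$ in the exponent, which the stated bound subsumes.

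The computation is essentially routine once the energy identity is in place, so there is no serious obstacle; the one point that genuinely needs care is that the constant $\|\bS^{-1}\|_2$ appearing in the estimate is finite, i.e. that the dispersion coefficients $\sigma_i$ are bounded away from zero on $\Omega$ rather than merely positive. This uniform ellipticity is implicit in the statement and should be recorded as a standing hypothesis on $\bS$; under it, the absorption step above is legitimate. A secondary, purely technical point is the justification for using $f$ as a test function and for differentiating $\|f(t)\|_{L^2(\Omega)}^2$ in time, which follows from the regularity in Proposition~\ref{th:proposition1} together with a standard density argument.
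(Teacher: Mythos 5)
Your proof is correct and follows essentially the same route as the paper: test the weak form with $\psi=f$, use the flux-zero boundary condition, absorb the transport term into the diffusion term via Young's inequality weighted by $\|\bS^{-1}\|_2$, and conclude with Grönwall. Your added remarks on the regularity needed to justify the chain rule, on the implicit uniform ellipticity of $\bS$, and on the sharper factor $1/2$ in the exponent are all accurate refinements of the paper's argument.
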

\begin{proof}
Choosing $\psi=f(\cdot,t)$ in \eqref{eq:L2_inner_prod}, we have 
\begin{equation}\label{eq:FPstability0}
\frac{\partial }{\partial t} \| f(t) \|^2_{L^2(\Omega)} 
= -  \| \bS \nabla f(t) \|^2_{L^2(\Omega)}  + 2 \int_\Omega (\bF f(t)) \cdot \bS^{-1}\bS \nabla f(t) \, dx.
\end{equation}

To estimate the last term in \eqref{eq:FPstability0}, we use the Young's inequality, 
$2bd \le kb^2 + d^2/k$ with $k = \|\bS^{-1}\|_2$, the $L^2$ matrix norm of $\bS^{-1}$, and obtain the following
$$
\frac{\partial }{\partial t} \| f(t) \|^2_{L^2(\Omega)} 
\le   \|\bS^{-1}\|^2_2 N^2 \|  f(t) \|^2_{L^2(\Omega)}  . 
$$
Using Gronwall's inequality, we arrive at the desired result.
\end{proof}

Next, we state and prove some further properties of the solution to \eqref{eq:FP} that is needed for proving the existence of optimal $\bT^*$ and $u^*$.
For this purpose, we have the following proposition.

\begin{proposition} \label{eq:propXX}
Let $f_0 \in H^1(\Omega)$, $f_0\ge 0$, and $\bT\in U_{ad}$. Then, if 
$f$ is a solution to $\cE(f_0,\bT,u)=0$, the following inequalities hold
\begin{equation}
\|f\|_{L^\infty(0,T;L^2(\Omega))} \le c_1 \|f_0\|_{L^2(\Omega)}, 
\label{eq:ineq2}
\end{equation}
\begin{equation}
\|\partial_t f\|_{L^2(0,T;H^{-1}(\Omega))} \le (c_2 + c_3N) \|f_0\|_{L^2(\Omega)} ,
\label{eq:ineq3}
\end{equation}
where $c_1$, $c_2$, $c_3$ are positive constants and $N$ is defined in Proposition \ref{th:stability}. Further, if $\|\bS^{-1}\|^2_2 > \dfrac{1}{N} $, then the following inequality holds
\begin{equation}
\|f\|_{L^2(0,T;H^1(\Omega))} \le  c_4 \|f_0\|_{L^2(\Omega)} , 
\label{eq:ineq1}
\end{equation}
where $c_4$ is a positive constant depending on $N$.
\end{proposition}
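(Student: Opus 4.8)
The plan is to read off all three estimates from the energy identity \eqref{eq:FPstability0} and the weak identity \eqref{eq:L2_inner_prod} already in hand; the only new idea, relative to the proof of Proposition~\ref{th:stability}, is to integrate in time and --- for \eqref{eq:ineq1} --- to retain the dissipation term instead of discarding it. Estimate \eqref{eq:ineq2} is immediate: taking the supremum over $t\in[0,T]$ in \eqref{eq:FPstability} gives $\|f\|_{L^\infty(0,T;L^2(\Omega))}\le e^{\,\|\bS^{-1}\|_2^2 N^2 T}\,\|f_0\|_{L^2(\Omega)}$, so one may take $c_1=e^{\,\|\bS^{-1}\|_2^2 N^2 T}$; in particular $\int_0^T\|f(t)\|_{L^2(\Omega)}^2\,dt\le T c_1^2\|f_0\|_{L^2(\Omega)}^2$, which I reuse below.

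For \eqref{eq:ineq1} I would return to \eqref{eq:FPstability0} and estimate the sign-indefinite convective term by Young's inequality with a weight $\mu>0$,
\[
2\int_\Omega(\bF f)\cdot\nabla f\,dx \le \mu\|\bF f\|_{L^2(\Omega)}^2 + \tfrac1\mu\|\nabla f\|_{L^2(\Omega)}^2 \le \mu N^2\|f\|_{L^2(\Omega)}^2 + \tfrac1\mu\|\nabla f\|_{L^2(\Omega)}^2 ,
\]
together with the elementary bound $\|\bS\nabla f\|_{L^2(\Omega)}^2\ge \|\bS^{-1}\|_2^{-2}\|\nabla f\|_{L^2(\Omega)}^2$, valid since $\bS$ is diagonal with smallest entry $\ge\|\bS^{-1}\|_2^{-1}$. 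Substituting into \eqref{eq:FPstability0} yields
\[
\frac{d}{dt}\|f(t)\|_{L^2(\Omega)}^2 + \Bigl(\|\bS^{-1}\|_2^{-2}-\tfrac1\mu\Bigr)\|\nabla f(t)\|_{L^2(\Omega)}^2 \le \mu N^2\|f(t)\|_{L^2(\Omega)}^2 .
\]
I would choose $\mu$ so that the bracketed coefficient is strictly positive --- this is where the hypothesis $\|\bS^{-1}\|_2^2>1/N$ is used, to make such a choice available in the argument --- then integrate over $(0,T)$, drop the nonnegative term $\|f(T)\|_{L^2(\Omega)}^2$, and absorb $\int_0^T\|f\|_{L^2(\Omega)}^2\,dt$ via \eqref{eq:ineq2}. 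This produces $\int_0^T\|\nabla f(t)\|_{L^2(\Omega)}^2\,dt\le C(N)\|f_0\|_{L^2(\Omega)}^2$, and combining with \eqref{eq:ineq2} gives \eqref{eq:ineq1} with $c_4=c_4(N)$.

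For the time-derivative bound \eqref{eq:ineq3} I would test the weak identity \eqref{eq:L2_inner_prod} against $\psi\in H^1(\Omega)$ with $\|\psi\|_{H^1(\Omega)}\le1$:
\[
\bigl|\langle\partial_t f(t),\psi\rangle\bigr| \le \tfrac12\|\bS^2\nabla f(t)\|_{L^2(\Omega)}\|\nabla\psi\|_{L^2(\Omega)} + N\|f(t)\|_{L^2(\Omega)}\|\nabla\psi\|_{L^2(\Omega)} ,
\]
so, the $\sigma_j$ being bounded, $\|\partial_t f(t)\|_{H^{-1}(\Omega)}\le C\|\nabla f(t)\|_{L^2(\Omega)} + N\|f(t)\|_{L^2(\Omega)}$; squaring, integrating over $(0,T)$, and inserting the time-integrated bounds from \eqref{eq:ineq1} and \eqref{eq:ineq2} gives \eqref{eq:ineq3}, with the term $N\|f\|_{L^2(\Omega)}$ accounting for the $c_3 N$ contribution.

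The only delicate point is \eqref{eq:ineq1}: because $2\int_\Omega(\bF f)\cdot\nabla f$ has no definite sign, controlling the full gradient norm forces one to keep a positive multiple of the dissipation $\|\bS\nabla f\|^2$ after the Young split and to convert it to $\|\nabla f\|^2$ through the coercivity $\|\bS\nabla f\|^2\ge\|\bS^{-1}\|_2^{-2}\|\nabla f\|^2$; the hypothesis $\|\bS^{-1}\|_2^2>1/N$ is exactly the quantitative ``enough diffusion relative to drift'' condition that makes these two requirements compatible, and it is the reason $c_4$ must depend on $N$. Everything else reduces to routine Gronwall/energy bookkeeping built on Proposition~\ref{th:stability} and \eqref{eq:L2_inner_prod}.
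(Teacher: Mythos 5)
Your treatment of \eqref{eq:ineq2} and your overall strategy --- integrate the energy identity \eqref{eq:FPstability0}, split the convective term by Young's inequality, absorb the gradient contribution into the dissipation, and obtain the time-derivative bound by duality from \eqref{eq:L2_inner_prod} --- match the paper's. But there are two problems in how you assemble the pieces. First, a logical ordering issue: \eqref{eq:ineq3} is asserted for every $\bT\in U_{ad}$ with no hypothesis on $\bS$, whereas \eqref{eq:ineq1} is only claimed under the extra condition $\|\bS^{-1}\|_2^2>1/N$. Your derivation of \eqref{eq:ineq3} feeds the time-integrated gradient bound from \eqref{eq:ineq1} into the duality estimate, so as written it establishes \eqref{eq:ineq3} only under that extra condition. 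The paper keeps \eqref{eq:ineq3} independent of \eqref{eq:ineq1}, deriving it directly from \eqref{eq:L2_inner_prod} and the $L^\infty(0,T;L^2(\Omega))$ bound \eqref{eq:FPstability}. If you insist on routing through the gradient bound, you must make explicit that your version of \eqref{eq:ineq1} is in fact unconditional (see below); otherwise \eqref{eq:ineq3} inherits a hypothesis it is not supposed to carry.

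Second, your explanation of where the hypothesis $\|\bS^{-1}\|_2^2>1/N$ enters is not correct. In your weighted Young split the requirement is $\|\bS^{-1}\|_2^{-2}-1/\mu>0$, i.e.\ $\mu>\|\bS^{-1}\|_2^2$, and such a $\mu$ exists for every $\bS$ with no reference to $N$ at all ($N$ only inflates the Gronwall factor $\mu N^2$). So your argument proves \eqref{eq:ineq1} unconditionally and never actually invokes the stated hypothesis --- contradicting your closing claim that it is ``exactly the quantitative condition that makes these two requirements compatible.'' The threshold appears in the paper because the paper reuses the fixed Young weight $k=\|\bS^{-1}\|_2$ from Proposition \ref{th:stability}, which produces the prefactor $(N\|\bS^{-1}\|_2^2-1)$ multiplying $\int_0^T\|\nabla f(t)\|^2_{L^2(\Omega)}\,dt$, and positivity of that prefactor is precisely the stated condition. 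You should either adopt that fixed weight and track the resulting coefficient, or state plainly that with a free weight the hypothesis is superfluous in your argument; leaving the claim as written misrepresents what your computation shows.
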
 
\begin{proof}
The inequality (\ref{eq:ineq2}) follows from (\ref{eq:FPstability}), with 
$$c_1 = \exp \left(  \|\bS^{-1}\|^2_2 N^2 t \right).
$$
To prove inequality (\ref{eq:ineq3}), we define the dual of the $H^1(\Omega)$ norm, given by $H^{-1}(\Omega)$, as follows
$$
\| \partial_t f\|_{H^{-1}(\Omega)} = \sup_{\substack{\psi \in H_0^1(\Omega)\\ \psi\neq {0}}} \dfrac{\langle \partial_t f, \psi \rangle_{L^2(\Omega)}}{\|\psi\|_{H_0^1(\Omega)}}.
$$
From (\ref{eq:L2_inner_prod}), using (\ref{eq:FPstability}) we get 
$$
\langle \partial_t f, \psi \rangle_{L^2(\Omega)} \leq (c_2+c_3N)\|f_0\|_{L^2(\Omega)} \|\psi\|_{H_0^1(\Omega)},
$$
where 
$$
c_2 = \dfrac{\|\bS\|^2_2}{2},~ c_3 = c_1^2.
$$
To prove (\ref{eq:ineq1}), we first integrate (\ref{eq:FPstability0}) in $(0,T)$ to obtain
\begin{equation*}
\| f(T) \|^2_{L^2(\Omega)} -\| f_0 \|^2_{L^2(\Omega)}
= -  \int_0^T\| \bS \nabla f(t) \|^2_{L^2(\Omega)}~dt  + 2 \int_0^T\int_\Omega (\bF f(t)) \cdot \nabla f(t) \, dxdt. 
\end{equation*}
Using the Young's inequality, we have
\begin{equation*}
 \int_0^T\|  \bS\nabla f(t) \|^2_{L^2(\Omega)}~dt \leq \| f_0 \|^2_{L^2(\Omega)}
+ \int_0^T\Bigg(  N \| f(t) \|^2_{L^2(\Omega)} + N\|\bS^{-1}\|^2_2\| \bS \nabla f(t) \|^2_{L^2(\Omega)}\Bigg)\, dt. 
\end{equation*}
This implies
\begin{equation}\label{eq1}
(N\|\bS^{-1}\|^2_2 - 1) \int_0^T\| \nabla f(t) \|^2_{L^2(\Omega)}~dt \leq \| f_0 \|^2_{L^2(\Omega)}
+ N\int_0^T\| f(t) \|^2_{L^2(\Omega)} \, dt. 
\end{equation}
Adding $(N\|\bS^{-1}\|^2_2 - 1)\int_0^T\| f(t) \|^2_{L^2(\Omega)}~dt$ to (\ref{eq1}) we have the following
\begin{equation}\label{eq2}
(N\|\bS^{-1}\|^2_2 - 1)\int_0^T\Bigg(\| f(t) \|^2_{L^2(\Omega)}+\| \nabla f(t) \|^2_{L^2(\Omega)}\Bigg)~dt \leq \| f_0 \|^2_{L^2(\Omega)}
+N\int_0^T\| f(t) \|^2_{L^2(\Omega)} \, dt. 
\end{equation}
Using (\ref{eq:FPstability}), we have
\begin{equation}\label{eq3}
\int_0^T\| f(t) \|^2_{L^2(\Omega)} ~dt
\le \|  f_0 \|^2_{L^2(\Omega)}  \int_0^T \exp \left(  \|\bS^{-1}\|^2_2 N^2 t \right)~dt= 
\dfrac{1}{\|\bS^{-1}\|^2_2 N^2}\Bigg[\exp \left(\|\bS^{-1}\|^2_2 N^2 T \right)-1\Bigg]\|  f_0 \|^2_{L^2(\Omega)}.
\end{equation}
Therefore, we obtain
\begin{equation}
\begin{aligned}
&(N\|\bS^{-1}\|^2_2 - 1)\int_0^T\Bigg(\| f(t) \|^2_{L^2(\Omega)}+\| \nabla f(t) \|^2_{L^2(\Omega)}\Bigg) ~dt\\
&\leq \dfrac{1}{\|\bS^{-1}\|^2_2 N^2}\Bigg[\exp \left(\|\bS^{-1}\|^2_2 N^2 T \right)-1 + \|\bS^{-1}\|^2_2 N^2\Bigg]\|  f_0 \|^2_{L^2(\Omega)}.
\end{aligned}
\end{equation}
This proves (\ref{eq:ineq1}) with $c_4 = \sqrt{\dfrac{1}{(N\|\bS^{-1}\|^2_2 - 1)\|\bS^{-1}\|^2_2 N^2}\Bigg[\exp \left(\|\bS^{-1}\|^2_2 N^2 T \right)-1 + \|\bS^{-1}\|^2_2 N^2\Bigg]}.$
\end{proof}

From the results above, we obtain that the mapping $\Lambda : U_{ad} \times V_{ad} \to C([0,T];H^1(\Omega))$, $(\bT,u) \to f=\Lambda(\bT,u)$ is continuous. Further, using arguments given in \cite{MA}, we can prove that this mapping is also Fr\'echet differentiable. In the next proposition, we discuss some properties of the cost functionals $J_1,J_2$ given in \eqref{eq:min_problem} and \eqref{eq:second_min_problem}, which can be proved using the fact that the PDF $f$ is non-negative.
 
\begin{proposition}\label{proposition3} 
The objective functionals $J_1,J_2$, given in \eqref{eq:min_problem} and \eqref{eq:second_min_problem}, are sequentially weakly lower 
semi-continuous (w.l.s.c.), bounded from below, coercive on $U_{ad},V_{ad}$. respectively, and are Fr\'echet differentiable.  
\end{proposition}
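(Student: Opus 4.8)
I would verify the four properties one at a time, in each case reducing to the quadratic structure of the two cost functionals composed with the control-to-state map $\Lambda$, whose continuity and Fr\'echet differentiability were recorded after Proposition \ref{eq:propXX}. Boundedness from below is immediate: $J_1$ and $J_2$ are sums of terms of the form $\tfrac{c}{2}\|\cdot\|^2$ with $c>0$, so $J_1,J_2\ge 0$. For coercivity, note that the control penalty $\sum_{\bar\alpha_i\ne 0}\tfrac{\beta_i}{2}\int_0^{T_f}\|\bar u_i\|_{H^1(\Omega)}^2\,dt$ in $J_2$ is exactly a multiple of the squared norm of $u$ in $V_{ad}\subset L^2(0,T_f;H^1(\Omega))$, so $J_2\to\infty$ as $\|u\|_{L^2(0,T_f;H^1)}\to\infty$ and its sublevel sets in $V_{ad}$ are bounded; for $J_1$ the Tikhonov term $\tfrac{\beta}{2}\|\bT\|_{l^2}^2$ is coercive on $\R^6$, and since $U_{ad}$ is in addition bounded this is trivial.

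For weak lower semicontinuity of $J_1$, observe that $U_{ad}\subset\R^6$ is compact, so a weakly convergent sequence $\bT_n\rightharpoonup\bT$ converges strongly with $\bT\in U_{ad}$; by continuity of $\Lambda$ the states $f_n:=\Lambda(\bT_n,0)\to f:=\Lambda(\bT,0)$ in $C([0,T];H^1(\Omega))$, hence in $L^2(Q)$, so both $\int_Q(f_n-f^*)^2$ and $\|\bT_n\|_{l^2}^2$ converge and $J_1$ is in fact continuous, a fortiori w.l.s.c. For $J_2$ I would take $u_n\rightharpoonup u$ in $V_{ad}$; each $V_{ad}^i$ is convex and closed, hence weakly closed, so $u\in V_{ad}$. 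The box constraint gives a uniform $L^\infty$ bound on the $u_n$, and since $N$ in Proposition \ref{th:stability} is a supremum over all of $U_{ad}\times V_{ad}$, the a priori bounds of Proposition \ref{eq:propXX} hold with constants independent of $n$ (assuming, as there, $\|\bS^{-1}\|_2^2>1/N$): $f_n:=\Lambda(u_n)$ is bounded in $L^2(0,T;H^1(\Omega))$ with $\partial_t f_n$ bounded in $L^2(0,T;H^{-1}(\Omega))$. By Aubin--Lions a subsequence satisfies $f_n\to f$ strongly in $L^2(Q)$ and $f_n\rightharpoonup f$ in $L^2(0,T;H^1(\Omega))$; since $f_n$ is also bounded in $W:=\{g: g\in L^2(0,T;H^1(\Omega)),\ \partial_t g\in L^2(0,T;H^{-1}(\Omega))\}\hookrightarrow C([0,T];L^2(\Omega))$, evaluation at $T_f$ being bounded linear, hence weakly continuous, gives $f_n(\cdot,T_f)\rightharpoonup f(\cdot,T_f)$ in $L^2(\Omega)$. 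One then passes to the limit in the weak formulation \eqref{eq:L2_inner_prod}: the diffusion term closes by weak convergence of $\nabla f_n$, while the drift term $\int_\Omega(\bF(x,\bT,u_n)f_n)\cdot\nabla\psi\,dx$ is \emph{bilinear} in $(u_n,f_n)$ — each $\bar u_i$ enters $\bF$ only through the product $-\bar u_i(x_i,t)x_i$ — and closes because the strong $L^2(Q)$ convergence of $f_n$ combined with the weak $L^2(Q)$ convergence of $u_n$ passes to the limit; hence $f=\Lambda(u)$. Finally $f\mapsto\tfrac{\nu}{2}\|f(T_f)-f_d\|_{L^2(\Omega)}^2$ is w.l.s.c. under $f_n(T_f)\rightharpoonup f(T_f)$, and $u\mapsto\sum\tfrac{\beta_i}{2}\int_0^{T_f}\|\bar u_i\|_{H^1(\Omega)}^2\,dt$ is convex and strongly continuous, hence w.l.s.c.; adding the two yields $\liminf_n J_2\ge J_2$ at the limit.

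Fr\'echet differentiability then follows by the chain rule applied to the reduced functionals $\hat J_1(\bT)=J_1(\Lambda(\bT,0),\bT)$ and $\hat J_2(u)=J_2(u,\Lambda(u))$: the outer maps $f\mapsto\tfrac{\alpha}{2}\|f-f^*\|_{L^2(Q)}^2$, $f\mapsto\tfrac{\nu}{2}\|f(T_f)-f_d\|_{L^2(\Omega)}^2$, $\bT\mapsto\tfrac{\beta}{2}\|\bT\|_{l^2}^2$ and $u\mapsto\sum\tfrac{\beta_i}{2}\int_0^{T_f}\|\bar u_i\|_{H^1(\Omega)}^2\,dt$ are quadratic, hence $C^\infty$ on the relevant Hilbert spaces, while $\Lambda$ is Fr\'echet differentiable, and composition preserves this. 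The only genuinely delicate point in the whole argument is the w.l.s.c. of $J_2$, and inside it the passage to the limit in the bilinear drift--density term: this is exactly where the compactness furnished by the a priori bounds of Proposition \ref{eq:propXX} (through Aubin--Lions) is indispensable, since the product of two merely weakly convergent sequences need not converge. Everything else is a routine consequence of convexity, the continuity of $\Lambda$, the non-negativity and mass conservation of $f$, and the quadratic form of the costs.
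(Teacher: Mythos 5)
Your proposal is correct, but there is nothing in the paper to compare it against: the authors state this proposition without proof, offering only the one-line remark that it ``can be proved using the fact that the PDF $f$ is non-negative.'' What you have written is the standard direct-method verification that the paper leaves implicit, and it is sound. Boundedness from below and coercivity are indeed immediate from the quadratic structure (and, as you note, coercivity of $J_2$ in the $L^2(0,T_f;H^1(\Omega))$ norm genuinely needs the penalty term, since the box constraint in $V_{ad}$ bounds $u$ only in $L^\infty$, not in $H^1$). Your treatment of w.l.s.c.\ correctly isolates the only delicate point, namely passing to the limit in the drift term $\int_\Omega (\bF(x,\bT^*,u_n)f_n)\cdot\nabla\psi\,dx$ of \eqref{eq:L2_inner_prod}, which is bilinear in $(u_n,f_n)$ and therefore requires the strong $L^2(Q)$ convergence of $f_n$ supplied by the a priori bounds of Proposition \ref{eq:propXX} and Aubin--Lions; this is exactly the compactness mechanism the paper itself invokes later in the proof of Theorem \ref{th:existence}, so your argument is consistent with the authors' intended machinery. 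Two minor remarks: first, you inherit the hypothesis $\|\bS^{-1}\|_2^2>1/N$ from Proposition \ref{eq:propXX} to get the $L^2(0,T;H^1(\Omega))$ bound --- the paper silently assumes this as well, so flagging it explicitly is an improvement, not a defect; second, your proof nowhere uses the non-negativity of $f$ that the authors cite as the key ingredient, and indeed none is needed for these four properties given the squared-norm structure of both costs, so the paper's hint appears to be a red herring rather than something you have missed.
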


We now state and prove the existence of the optimal parameter set $\bT^*$ and the optimal drug dosage concentration vector $u^*$ in the following theorem.
\begin{theorem}\label{th:existence}
Let $f_0 \in H^1(\Omega)$ satisfy \eqref{eq:initial_cond} and let $J_1,J_2$ be given as in \eqref{eq:min_problem} and \eqref{eq:second_min_problem}. Then, there exists pairs 
$(f_1^*,\bT^*) \in C([0,T];H^1(\Omega)) \times U_{ad}$ and $(f_2^*,u^*) \in C([0,T];H^1(\Omega)) \times V_{ad}$ such that $f_1^*,f_2^*$ are solutions of $\cE(f_0,\bT^*,0)=0,\cE(f_0,\bT^*,u^*)=0$, respectively, and $\bT^*,u^*$ minimize $J_1,J_2$ in $U_{ad},V_{ad}$, respectively. 
\end{theorem}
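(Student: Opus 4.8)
The plan is to use the direct method of the calculus of variations, applied separately to the two minimization problems \eqref{eq:min_problem} and \eqref{eq:second_min_problem}, since the argument is essentially identical in structure. First I would treat the parameter estimation problem: let $(\bT_n)\subset U_{ad}$ be a minimizing sequence for $J_1$, and set $f_n=\Lambda(\bT_n,0)$, the corresponding solutions of $\cE(f_0,\bT_n,0)=0$ guaranteed by Proposition \ref{th:proposition1}. Since $U_{ad}$ is a closed, bounded subset of $\R^6$, it is compact, so along a subsequence $\bT_n\to\bT^*\in U_{ad}$. By the uniform bounds of Proposition \ref{eq:propXX} — namely \eqref{eq:ineq2}, \eqref{eq:ineq3}, and \eqref{eq:ineq1} (noting that $N$ is bounded uniformly over $U_{ad}\times V_{ad}$ by its definition in Proposition \ref{th:stability}, so $c_1,\dots,c_4$ may be taken independent of $n$) — the sequence $(f_n)$ is bounded in $L^2(0,T;H^1(\Omega))$ with $(\partial_t f_n)$ bounded in $L^2(0,T;H^{-1}(\Omega))$. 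Hence, passing to a further subsequence, $f_n\rightharpoonup f^*$ weakly in $L^2(0,T;H^1(\Omega))$, $\partial_t f_n\rightharpoonup \partial_t f^*$ weakly in $L^2(0,T;H^{-1}(\Omega))$, and by the Aubin–Lions lemma $f_n\to f^*$ strongly in $L^2(0,T;L^2(\Omega))=L^2(Q)$ and in $C([0,T];L^2(\Omega))$.

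Next I would pass to the limit in the weak formulation \eqref{eq:L2_inner_prod} to show $f^*=\Lambda(\bT^*,0)$, i.e.\ $\cE(f_0,\bT^*,0)=0$. In the diffusion term $\frac12\int_\Omega \bS^2\nabla f_n\cdot\nabla\psi\,dx$ the matrix $\bS$ does not depend on $\bT$, so weak convergence of $\nabla f_n$ suffices. In the drift term $\int_\Omega(\bF(x,\bT_n)f_n)\cdot\nabla\psi\,dx$ I would use that $\bF$ is affine (hence continuous and uniformly Lipschitz) in $\bT$ together with the strong convergence $f_n\to f^*$ in $L^2(Q)$: writing $\bF(x,\bT_n)f_n-\bF(x,\bT^*)f^*=(\bF(x,\bT_n)-\bF(x,\bT^*))f_n+\bF(x,\bT^*)(f_n-f^*)$, the first piece tends to zero because $\bT_n\to\bT^*$ and $(f_n)$ is bounded in $L^2$, and the second because $f_n\to f^*$ strongly. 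The non-negativity and the initial condition $f^*(0)=f_0$ are preserved under these convergences (the latter using $f_n\in C([0,T];L^2(\Omega))$ and the convergence there). Then, since $J_1$ is sequentially weakly lower semicontinuous by Proposition \ref{proposition3} and $(f_n,\bT_n)\to(f^*,\bT^*)$ in the appropriate weak/strong senses, $J_1(f^*,\bT^*)\le\liminf_n J_1(f_n,\bT_n)=\inf J_1$, so $(f^*,\bT^*)$ is a minimizer; coercivity and boundedness below from Proposition \ref{proposition3} ensure the infimum is finite and minimizing sequences exist. Finally the regularity $f^*\in C([0,T];H^1(\Omega))$ follows from Proposition \ref{th:proposition1} applied to the limit equation. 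For the second problem I would run the same scheme with a minimizing sequence $(u_n)\subset V_{ad}$: here $V_{ad}$ is bounded, closed and convex in the Hilbert space $\prod_i L^2(0,T;H^1(\Omega_i))$, hence weakly sequentially compact, so $u_n\rightharpoonup u^*\in V_{ad}$; the only extra care is that the drift now also contains the feedback terms $\bar u_i(x_i,t)x_i$, which are bilinear in $(u,f)$, so to pass to the limit in $\int_Q \bar u_{i,n}\,x_i\,f_n\,\psi$ I would combine the weak convergence of $u_n$ in $L^2$ with the strong convergence of $f_n$ in $L^2(Q)$ (a weak–strong product-convergence argument), which is exactly where the Aubin–Lions compactness is indispensable.

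The main obstacle I expect is the limit passage in the drift/feedback term of the second problem, where a product of two weakly converging quantities would in general fail to converge to the product of the limits; the resolution is the strong $L^2(Q)$ compactness of $(f_n)$ supplied by \eqref{eq:ineq1}, \eqref{eq:ineq3} and Aubin–Lions, so that at least one factor converges strongly. A secondary technical point is that the bound \eqref{eq:ineq1} requires the standing hypothesis $N\|\bS^{-1}\|_2^2>1$ from Proposition \ref{eq:propXX}; I would either assume this condition throughout (it only says the noise is not too small relative to the drift) or, alternatively, obtain the needed $L^2(0,T;H^1(\Omega))$ bound directly from Proposition \ref{th:proposition1} together with the energy identity \eqref{eq:FPstability0} integrated in time, which gives control of $\|\bS\nabla f\|_{L^2(Q)}$ and hence of $\|\nabla f\|_{L^2(Q)}$ since $\sigma_i>0$ are bounded below on $\overline\Omega$. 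Everything else — compactness of $U_{ad}$, weak compactness of $V_{ad}$, w.l.s.c.\ and coercivity of $J_1,J_2$ — is either elementary or already recorded in Propositions \ref{proposition3} and \ref{eq:propXX}.
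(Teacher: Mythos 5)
Your proposal follows essentially the same route as the paper's proof: the direct method with a minimizing sequence, compactness of $U_{ad}$ (resp.\ weak compactness of $V_{ad}$), the a priori bounds of Proposition \ref{eq:propXX}, Aubin--Lions compactness, and sequential weak lower semicontinuity of the functionals. You in fact supply more detail than the paper does — notably the weak--strong argument for the limit passage in the drift/feedback term and the observation that \eqref{eq:ineq1} rests on the standing hypothesis $\|\bS^{-1}\|_2^2>1/N$ — so there is nothing to correct.
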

\begin{proof} 
We first prove the existence of minimizer of $J_1$ in \eqref{eq:min_problem}. Since $J_1$ is bounded below, there exists a minimizing sequence $(\bT^m) \in U_{ad}$. Since $U_{ad}\subset \mathbb{R}^6$, and $J_1$ is sequentially w.l.s.c. as well as coercive in $U_{ad}$, 
this sequence is bounded. Therefore, it contains a convergent 
subsequence $(\bT^{m_l})$ in $U_{ad}$ such that $u^{m_l} \rightarrow \bT^*$. 
Correspondingly, the sequence $(f^{m_l})$, where $f^{m_l}=\Lambda(\bT^{m_l},0)$, 
is bounded in $L^2(0,T; H^1(\Omega) )$ by \eqref{eq:ineq1}, while the 
sequence of the time derivatives, $(\partial_t f^{m_l})$, 
is bounded in $L^2(0,T; H^{-1}(\Omega))$ by \eqref{eq:ineq3}. Therefore, both the sequences 
converge weakly to $f_1^*$ and $\partial_t f_1^* $, respectively. From the above discussion, we obtain weak convergence of the sequence $(\bF(\bT^{m_k},0)f^{m_l})$ in $L^2(0,T,L^2(\Omega))$. It now follows that $f_1^*=\Lambda(\bT^*,0)$, 
and the pair $(f_1^*,\bT^*)$ minimizes $J_1$.

For proving existence of a minimizer of $J_2$, given in \eqref{eq:second_min_problem}, we can follow the same arguments as above noting the fact that $V_{ad}$ being a closed subspace of a Hilbert space and $J_2$ being coercive in $V_{ad}$ yields a convergent subsequence $(u_{m_l})$ of a minimizing sequence $(u_m)$ for $J_2$, and the compactness result of Aubin-Lions \cite{Lions1969} yields strong 
convergence of a subsequence $(f^{m_k})$ of a sequence $(f^{m_l} = \Lambda(\bT^*,u_{m_l}))$ in $L^2(0,T,L^2(\Omega))$. 
\end{proof}
We now introduce the following reduced functionals
\begin{equation}\label{reduced_functional}
\hat{J}_1(\bT)=J_1(\Lambda(\bT,0),\bT),~\hat{J}_2(u)=J_2(\Lambda(\bT^*,u),u) .
\end{equation} 
The following proposition shows the differentiability of the reduced functionals $\hat{J}_1,\hat{J}_2$ that can be proved using similar arguments as in \cite{trobuch}.
 
\begin{proposition}\label{proposition6}
The reduced functionals $\hat{J}_1(\bT),\hat{J}_2(u)$ is differentiable, and their derivatives are given by
$$
\begin{aligned}
&d\hat J_1(\bT) \cdot \bps_1= \Big\langle{ \beta \bT - \int_Q\nabla_{\bT} \bF\cdot \nabla p_1~dxdt , \bps_1 }\Big\rangle_{L^2}, \qquad \forall \bps_1 \in U_{ad},\\
&(d\hat J_2(u) \cdot \bps_2)_i= \Big\langle{ \mu(\Omega_j)\mu(\Omega_k)(\beta_i \bar{u}_i - \beta_i \Delta_i \bar{u}_i) -\int_{\Omega_j}\int_{\Omega_k}\alpha_ix_i f\cdot \nabla_{x_i} p_2~dx_jdx_k , (\bps_2)_i }\Big\rangle_{L^2}, \\
 &\hspace{100mm}i,j,k = 1,2,3,~ j,k\neq i,\qquad \forall \bps_2 \in V_{ad},
\end{aligned}
$$
where $p_1$ is the solution to the adjoint equation 
\begin{equation*}
\begin{aligned}
-\partial_t p_1(x,t)-f(x,t)(\bF(x,\bT,0)\cdot \nabla p_1(x,t)) - &
\frac{1}{2}\nabla \cdot (\bS^2(x) \nabla p_1(x,t)) = -\alpha (f(x,t)-f^*(x,t)), ~ \mbox{in } \Omega \times(0,T_f)\\
\frac{\partial{p_1}}{\partial{n}}  = 0, \qquad & \mbox{ on } \partial\Omega\times(0,T_f),
\end{aligned}
\end{equation*}
with $p_1(x,T)=0$ and $f$ satisfying $\cE(f_0,\bT,0)  = 0$, and 
$p_2$ is the solution to the adjoint equation 
\begin{equation*}
\begin{aligned}
-\partial_t p_2(x,t)-f(x,t)(\bF(x,\bT^*,u)\cdot \nabla p_2(x,t)) - &
\frac{1}{2}\nabla \cdot (\bS^2(x) \nabla p_2(x,t)) = 0,~ \mbox{in } \Omega\times(0,T_f)\\
\frac{\partial{p}_2}{\partial{n}}  = -\nu (f(x,T_f)-f_d(x)), \qquad & \mbox{ on } \partial\Omega\times(0,T_f),
\end{aligned}
\end{equation*}
with $p_2(x,T)=0$ and $f$ satisfying $\cE(f_0,\bT^*,u)  = 0$.
\end{proposition}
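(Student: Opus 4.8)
The differentiability of $\hat J_1$ and $\hat J_2$ is immediate from the chain rule: by the discussion preceding the statement the control-to-state map $\Lambda:U_{ad}\times V_{ad}\to C([0,T];H^1(\Omega))$ is Fr\'echet differentiable (arguments of \cite{MA}), and by Proposition~\ref{proposition3} the cost functionals $J_1,J_2$ are Fr\'echet differentiable; hence $\hat J_1(\bT)=J_1(\Lambda(\bT,0),\bT)$ and $\hat J_2(u)=J_2(\Lambda(\bT^*,u),u)$ are differentiable as compositions, with $d\hat J_1(\bT)\cdot\bps_1=\partial_f J_1\cdot(D_{\bT}\Lambda(\bT,0)\bps_1)+\partial_{\bT}J_1\cdot\bps_1$ and analogously for $\hat J_2$. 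The substance of the proposition is the adjoint representation of these derivatives, which I would obtain by the Lagrangian (adjoint-state) technique, following \cite{trobuch}.

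For $\hat J_1$, introduce the Lagrangian $\mathcal L_1(f,\bT,p_1)=J_1(f,\bT)+\langle \cE(f_0,\bT,0),p_1\rangle$, where $\langle\cE(\cdot),p_1\rangle$ is the weak form \eqref{eq:L2_inner_prod} of the FP system tested against $p_1$ and integrated over $Q$. Since $f=\Lambda(\bT,0)$ solves $\cE(f_0,\bT,0)=0$, one has $\hat J_1(\bT)=\mathcal L_1(\Lambda(\bT,0),\bT,p_1)$ for every sufficiently regular $p_1$. Differentiating in the direction $\bps_1$ and writing $\delta f=D_{\bT}\Lambda(\bT,0)\bps_1$ (which solves the linearized FP equation $\partial_t\delta f+\nabla\cdot(\bF\,\delta f+(\nabla_{\bT}\bF\cdot\bps_1)f)-\tfrac12\nabla\cdot(\bS^2\nabla\delta f)=0$ with $\delta f(\cdot,0)=0$ and the linearized flux-zero boundary condition), I would collect all terms containing $\delta f$ into $\langle\partial_f\mathcal L_1,\delta f\rangle$; choosing $p_1$ to solve $\partial_f\mathcal L_1=0$ annihilates them. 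Integrating by parts in space and time and using $p_1(\cdot,T_f)=0$, $f(\cdot,0)=f_0$, the homogeneous Neumann-type condition on $p_1$, and the flux-zero condition \eqref{eq:nfbc} on $f$, this stationarity condition is precisely the backward parabolic problem for $p_1$ with source $-\alpha(f-f^*)$ stated above; well-posedness and the needed regularity of $p_1$ follow by applying Proposition~\ref{th:proposition1} and the $H^2$-regularity remark (\cite{Tao}) to the time-reversed equation, which has the same principal part as the FP equation. What then remains is $d\hat J_1(\bT)\cdot\bps_1=\langle\partial_{\bT}\mathcal L_1,\bps_1\rangle=\big\langle \beta\bT-\int_Q\nabla_{\bT}\bF\cdot\nabla p_1\,dx\,dt,\ \bps_1\big\rangle_{L^2}$, as claimed.

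For $\hat J_2$ I would repeat the argument with $\mathcal L_2(f,u,p_2)=J_2(u,f)+\langle\cE(f_0,\bT^*,u),p_2\rangle$. Here $u$ enters only through the drift $\bF(x,\bT^*,u)$, and by the structure of \eqref{eq:ItoODE} the $i$-th component of $\bF$ depends on $\bar u_i$ only via the term $-\alpha_i x_i\bar u_i(x_i,t)$, so $\partial_{\bar u_i}\bF\cdot(\bps_2)_i$ contributes $-\alpha_i x_i (\bps_2)_i$ in slot $i$; the terminal term $\tfrac{\nu}{2}\|f(\cdot,T_f)-f_d\|_{L^2(\Omega)}^2$ supplies the inhomogeneous data for $p_2$, and the stationarity condition $\partial_f\mathcal L_2=0$ yields, after integration by parts, the backward problem for $p_2$ written above. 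Since $\bar u_i$ is a function of $x_i$ alone while the state lives on $\Omega=\Omega_1\times\Omega_2\times\Omega_3$, reducing the $L^2(0,T_f;H^1(\Omega))$ pairing to one over $\Omega_i$ is a Fubini computation: integrating $\beta_i\|\bar u_i\|_{H^1(\Omega)}^2$ over the redundant directions produces the factors $\mu(\Omega_j)\mu(\Omega_k)$ and the operator $\beta_i\bar u_i-\beta_i\Delta_i\bar u_i$, while integrating the drift-sensitivity term over $\Omega_j\times\Omega_k$ gives $\int_{\Omega_j}\int_{\Omega_k}\alpha_i x_i f\,\nabla_{x_i}p_2\,dx_j\,dx_k$, which is the stated componentwise formula for $d\hat J_2(u)$.

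I expect the main obstacle to be making the formal Lagrangian calculus rigorous at the available regularity level, i.e.\ two points. First, one must establish well-posedness and enough regularity of the \emph{backward} adjoint problems so that the space--time integrations by parts and the duality pairings $\langle\partial_t\delta f,p_1\rangle$ and $\langle\nabla\cdot(\bS^2\nabla\delta f),p_1\rangle$ are legitimate; this reduces to re-running the a priori estimates of Propositions~\ref{th:proposition1}--\ref{eq:propXX} for the time-reversed equations (which carry an extra source and lower-order terms but the same elliptic part). Second, the boundary bookkeeping must be handled carefully: because the flux-zero condition \eqref{eq:nfbc} mixes the drift and diffusion contributions, the separate drift and diffusion boundary integrals generated by integration by parts have to be recombined into $\mathcal F\cdot\hat n$ (and its linearization) before they cancel against the Neumann condition on the adjoint. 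Once these are in place, the identification of the adjoint systems and of the two gradient formulas is a direct calculation.
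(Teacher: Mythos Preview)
Your proposal is correct and is exactly the approach the paper intends: the paper gives no proof of this proposition at all, stating only that it ``can be proved using similar arguments as in \cite{trobuch},'' and your Lagrangian/adjoint-state derivation is precisely the standard Tr\"oltzsch machinery being invoked. In fact you have supplied considerably more detail than the paper does, including the boundary-term bookkeeping and the Fubini reduction for the $\hat J_2$ gradient, neither of which the paper spells out.
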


The optimality conditions corresponding to the minimization problem \eqref{eq:min_problem} can now be written as

\begin{equation}\label{eq:opt_FP}
\begin{aligned}
&\partial_t f(x,t)+\nabla \cdot (\bF(x,\bT,0)~f(x,t)) = 
\frac{1}{2}\nabla \cdot (\bS^2(x) \nabla f(x,t)), \qquad \mbox{ in } \Omega\times(0,T_f),\\
&f(x,0) =  f_0(x), \qquad \mbox{ in } \Omega,\\
&\mathcal{F}\cdot \hat n = 0, \qquad \mbox{ on } \partial\Omega\times(0,T_f).
\end{aligned}\tag{FOR1}
\end{equation}

\begin{equation}\label{eq:opt_adj}
\begin{aligned}
-\partial_t p_1(x,t)-f(x,t)(&\bF(x,\bT,0)\cdot \nabla p_1(x,t)) -
\frac{1}{2}\nabla \cdot (\bS^2(x) \nabla p_1(x,t)) = -\alpha (f(x,t) - f^{*}(x,t)), ~\mbox{ in } \Omega\times(0,T_f),\\
& p_1(x,T)=0, \qquad \mbox{ in } \Omega,\\
&\frac{\partial{p}_1}{\partial{n}}  = 0, \qquad  \mbox{ on } \partial\Omega\times(0,T_f).\\
\end{aligned}\tag{ADJ1}
\end{equation}

\begin{equation}\label{eq:opt_cond}
\Big\langle{ \beta \bT - \int_Q\nabla_{\bT} \bF\cdot \nabla p_1~dxdt , \bps_1 }\Big\rangle_{L^2} \geq 0,\qquad \forall \bps \in U_{ad}. \tag{OPT1}
\end{equation}

The optimality conditions corresponding to the minimization problem \eqref{eq:second_min_problem} can be written as

\begin{equation}\label{eq:opt_FP2}
\begin{aligned}
&\partial_t f(x,t)+\nabla \cdot (\bF(x,\bT^*,u)~f(x,t)) = 
\frac{1}{2}\nabla \cdot (\bS^2(x) \nabla f(x,t)), \qquad \mbox{ in } \Omega\times(0,T_f),\\
&f(x,0) =  f_0(x), \qquad \mbox{ in } \Omega,\\
&\mathcal{F}\cdot \hat n = 0, \qquad \mbox{ on } \partial\Omega\times(0,T_f).
\end{aligned}\tag{FOR2}
\end{equation}

\begin{equation}\label{eq:opt_adj2}
\begin{aligned}
-\partial_t p_2(x,t)-f(x,t)(&\bF(x,\bT^*,u)\cdot \nabla p_2(x,t)) -
\frac{1}{2}\nabla \cdot (\bS^2(x) \nabla p_2(x,t)) = 0, ~\mbox{ in } \Omega\times(0,T_f),\\
& p_2(x,T)= -\nu (f(x,T_f)-f_d(x)), \qquad \mbox{ in } \Omega,\\
&\frac{\partial{p}_2}{\partial{n}}  = 0, \qquad  \mbox{ on } \partial\Omega\times(0,T_f).\\
\end{aligned}\tag{ADJ2}
\end{equation}

\begin{equation}\label{eq:opt_cond2}
\begin{aligned}
&\Big\langle{ \mu(\Omega_j)\mu(\Omega_k)(\beta_i \bar{u}_i - \beta_i \Delta_i \bar{u}_i) -\int_{\Omega_j}\int_{\Omega_k}\alpha_ix_i f\cdot \nabla_{x_i} p_2~dx_jdx_k , (\bps_2)_i }\Big\rangle_{L^2} \geq 0, \\
 &\hspace{60mm}i,j,k = 1,2,3,~ j,k\neq i,\qquad \forall \bps_2 \in V_{ad}. 
 \end{aligned}
 \tag{OPT2}
\end{equation}

\section{Numerical discretization schemes for solving the FP optimality system}\label{sec:numerical_FP}

\subsection{Discretization of the forward and adjoint FP equations}

In this section, we describe the numerical discretization schemes for solving the forward and adjoint FP equations given in \eqref{eq:opt_FP}- \eqref{eq:opt_adj} and \eqref{eq:opt_FP2}- \eqref{eq:opt_adj2} . For this purpose, we consider a sequence of uniform grids $\lbrace\Omega_h\rbrace_{h>0}$ given by  
$$
\Omega_h = \lbrace(x_1,x_2,x_3)\in\mathbb{R}^3:(x_{1i},x_{2j},x_{3k}) = (x_{10}+ih,x_{20}+jh,x_{30}+kh)\rbrace,
$$
where $(i,j,k)\in
\lbrace{0,\hdots,N_{x_1}}\rbrace\times \lbrace{0,\hdots,N_{x_2}}\rbrace\times \lbrace{0,\hdots,N_{x_3}}\rbrace\cap\Omega$ and $N_{x_i}$ represents the number of grid points along the $i^{th}$ coordinate direction. We also define $\delta{t}=T/N_t$ to be the time step, where $N_t$ denotes the maximum number of time steps. With this setting, we now consider the discretized domain for $\Omega$ as follows
$$
Q_{h,\delta{t}} = \lbrace{(x_{1i},x_{2j},x_{3k},t_m):\, (x_{1i},x_{2j},x_{3k})\in\Omega_h,~t_m=m\delta{t},~0\leq m\leq N_t}\rbrace.
$$
We denote the value of $f(x,t)$ on the discrete domain $Q_{h,\delta{t}}$ as $f_{i,j}^m$. 

For the spatial discretization, we will use the Chang-Cooper (CC) scheme \cite{CC}, which is represented by the following discretization of the flux term in \eqref{eq:opt_FP} at time $t_m$ 
$$
\nabla\cdot{\mathcal{F}}=\frac{1}{h}\left[(\mathcal{F}_{i+\frac{1}{2},j,k}^m-\mathcal{F}_{i-\frac{1}{2},j,k}^m)+(\mathcal{F}_{i,j+\frac{1}{2},k}^m-\mathcal{F}_{i,j-\frac{1}{2},k}^m) + (\mathcal{F}_{i,j,k+\frac{1}{2}}^m-\mathcal{F}_{i,j,k-\frac{1}{2}}^m) \right],
$$
where $\mathcal{F}_{i+\frac{1}{2},j,k}^m,~\mathcal{F}_{i,j+\frac{1}{2},k}^m,~\mathcal{F}_{i,j,k+\frac{1}{2}}^m$ represent the numerical flux in the $i,j,k$ directions, respectively, at the point $(x_{1i},x_{2j},x_{3k})$. The numerical flux $\mathcal{F}_{i+\frac{1}{2},j,k}^m$ in the $i^{th}$ direction is given as follows
\begin{equation}\label{eq:numflux}
\mathcal{F}_{i+\frac{1}{2},j,k}^m = \left[{(1-\delta_i)B_{i+\frac{1}{2},j,k,m} + \frac{\sigma_i^2}{2h}}\right]f_{i+1,j}^{m} - 
\left[{\frac{\sigma_i^2}{2h}-\delta_iB_{i+\frac{1}{2},j,k,m} }\right]f_{i,j}^{m},
\end{equation}
where
\begin{equation}\label{eq:coefB}
B_{i+\frac{1}{2},j,m	} = -\bF_1(x_{1i+\frac{1}{2}},x_{2j},x_{3k},\bT,u),
\end{equation}
and 
\begin{equation}\label{eq:coefd}
\begin{aligned}
&\delta_i = \frac{1}{w_{i+\frac{1}{2},j}^m}-\frac{1}{\exp(w_{i+\frac{1}{2},j,k}^m)-1}, 
\qquad w_{i+\frac{1}{2},j,k}^m = 2hB_{i+\frac{1}{2},j,k}/\sigma_i^2.\\
\end{aligned}
\end{equation}
A similar formulae also holds true for the fluxes in the other directions.

For discretizing the time derivative, we will use the Douglas-Gunn (D-G) scheme. The D-G scheme is a three-step method that gives a consistent discretization of the FP equation at each step. At every step, the scheme is implicit in one direction only that results in a simpler system to solve. The D-G scheme is coupled with the CC scheme that results in a fully discretized scheme for solving the FP equation \eqref{eq:opt_FP}. We call this scheme as the DG3-CC scheme. Below, we describe the formulation of the fully discrete DG3-CC scheme. We introduce an auxiliary time steps $t_{m^*},t_{m^{**}}$. For notational convenience, we only use indices in the flux $\mathcal{F}$ that represent the flux in the corresponding direction and drop the other indices. For e.g., $\mathcal{F}_{i+\frac{1}{2}}$ represents $\mathcal{F}_{i+\frac{1}{2},j,k}$, the flux in the $i^{th}$ direction.
\begin{equation}\label{eq:DG4CC}
\begin{aligned}
\frac{f_{i,j,k} ^{m^*}-f_{i,j,k} ^{m}}{\delta{t}} &= \frac{1}{2h}(\mathcal{F}_{i+\frac{1}{2}}^{m^*}-\mathcal{F}_{i-\frac{1}{2}}^{m^*})\\
&+\frac{1}{2h}(\mathcal{F}_{i+\frac{1}{2}}^{m}-\mathcal{F}_{i-\frac{1}{2}}^{m})
+\frac{1}{h}(\mathcal{F}_{j+\frac{1}{2}}^{m}-\mathcal{F}_{j-\frac{1}{2}}^{m}) +
\frac{1}{h}(\mathcal{F}_{k+\frac{1}{2}}^{m}-\mathcal{F}_{k-\frac{1}{2}}^{m}),\\
\frac{f_{i,j,k} ^{m^{**}}-f_{i,j,k} ^{m}}{\delta{t}} &=\frac{1}{2h}(\mathcal{F}_{i+\frac{1}{2}}^{m^*}-\mathcal{F}_{i-\frac{1}{2}}^{m^*})+\frac{1}{2h}(\mathcal{F}_{j+\frac{1}{2}}^{m^{**}}-\mathcal{F}_{j-\frac{1}{2}}^{m^{**}})\\
&+\frac{1}{2h}(\mathcal{F}_{i+\frac{1}{2}}^{m}-\mathcal{F}_{i-\frac{1}{2}}^{m})
+\frac{1}{2h}(\mathcal{F}_{j+\frac{1}{2}}^{m}-\mathcal{F}_{j-\frac{1}{2}}^{m}) +
\frac{1}{h}(\mathcal{F}_{k+\frac{1}{2}}^{m}-\mathcal{F}_{k-\frac{1}{2}}^{m}), \\
\frac{f_{i,j,k} ^{m+1}-f_{i,j,k} ^{m}}{\delta{t}} &=\frac{1}{2h}(\mathcal{F}_{i+\frac{1}{2}}^{m^*}-\mathcal{F}_{i-\frac{1}{2}}^{m^*})+\frac{1}{2h}(\mathcal{F}_{j+\frac{1}{2}}^{m^{**}}-\mathcal{F}_{j-\frac{1}{2}}^{m^{**}})+
\frac{1}{2h}(\mathcal{F}_{k+\frac{1}{2}}^{m+1}-\mathcal{F}_{k-\frac{1}{2}}^{m+1})\\
&+\frac{1}{2h}(\mathcal{F}_{i+\frac{1}{2}}^{m}-\mathcal{F}_{i-\frac{1}{2}}^{m})
+\frac{1}{2h}(\mathcal{F}_{j+\frac{1}{2}}^{m}-\mathcal{F}_{j-\frac{1}{2}}^{m}) +
\frac{1}{2h}(\mathcal{F}_{k+\frac{1}{2}}^{m}-\mathcal{F}_{k-\frac{1}{2}}^{m}), \\
\end{aligned}
\end{equation}
with the initial condition $f^0_{i,j,k} = f_0(x_{1i},x_{2j},x_{3k})$,
for all $(i,j,k)\in \lbrace 1,\hdots,N_x-1\rbrace$. The flux zero boundary conditions in the $i^{th}$ direction is given as follows 
\begin{equation}\label{nfdis}
\begin{aligned}
&\mathcal{F}(N_x-1/2,j,k)=0,~ \mathcal{F}(1/2,j,k) = 0, \qquad \forall j,k=0,\hdots , N_x.\\
\end{aligned}
\end{equation}
A similar condition holds for flux zero boundary condition in the other directions.
We now analyze some properties of the DG3-CC scheme \eqref{eq:DG4CC}-\eqref{nfdis}. The following lemma shows that the DG3-CC scheme is conservative.
\begin{lemma}\label{lemma:conservative}
The DG3-CC scheme (\ref{eq:DG4CC})-(\ref{nfdis}) is conservative in the discrete sense.
\end{lemma}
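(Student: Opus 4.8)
The plan is to sum the discrete update equations over all interior grid points and show that the total discrete mass $\sum_{i,j,k} f_{i,j,k}^{m} h^3$ is preserved from step $m$ to step $m+1$, by exploiting telescoping of the flux differences together with the discrete flux-zero boundary conditions \eqref{nfdis}. The key observation is that the DG3-CC scheme \eqref{eq:DG4CC} is written entirely in terms of differences of numerical fluxes $\mathcal{F}_{\cdot+1/2}-\mathcal{F}_{\cdot-1/2}$ in each coordinate direction, so summation in that direction telescopes to boundary flux values only.

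First I would sum the third (final) equation of \eqref{eq:DG4CC} over all interior indices $(i,j,k)\in\{1,\dots,N_{x}-1\}^3$, multiplied by $h^3$. On the left we get $\big(\sum_{i,j,k} f_{i,j,k}^{m+1} - \sum_{i,j,k} f_{i,j,k}^{m}\big) h^3 / \delta t$. On the right, consider the term $\frac{1}{2h}\sum_{i,j,k}(\mathcal{F}_{i+1/2}^{m^*}-\mathcal{F}_{i-1/2}^{m^*})$: for each fixed $(j,k)$, the sum over $i$ telescopes to $\mathcal{F}_{N_x-1/2,j,k}^{m^*}-\mathcal{F}_{1/2,j,k}^{m^*}$, which vanishes by \eqref{nfdis} (the same argument applies with the auxiliary time levels, since the boundary condition is imposed at every time level). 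The same telescoping-plus-boundary argument kills the $j$-direction and $k$-direction flux sums at levels $m^{**}$, $m+1$, and $m$. Hence the entire right-hand side sums to zero, giving $\sum_{i,j,k} f_{i,j,k}^{m+1} h^3 = \sum_{i,j,k} f_{i,j,k}^{m} h^3$, and by induction this equals $\sum_{i,j,k} f^0_{i,j,k} h^3$, the discrete analogue of $\int_\Omega f_0\,dx = 1$ from \eqref{eq:initial_cond}. This is exactly the statement that the scheme is conservative in the discrete sense.

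The main technical point — the only place one must be slightly careful — is the bookkeeping of the auxiliary half-steps: one should verify that when summing the first two equations of \eqref{eq:DG4CC} is \emph{not} actually needed (the claim only concerns the net change over a full step $m\to m+1$), so it suffices to sum the third equation alone, and the intermediate quantities $f^{m^*}, f^{m^{**}}$ appear only inside flux differences that telescope away regardless of their values. I would make explicit that the discrete flux-zero condition \eqref{nfdis} is understood to hold at the auxiliary levels $m^*, m^{**}$ as well, so that every boundary flux term $\mathcal{F}_{N_x-1/2}-\mathcal{F}_{1/2}$ at each of the levels $m, m^*, m^{**}, m+1$ vanishes. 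With that understanding the argument is a direct computation; I expect no genuine obstacle, only the need to organize the six flux-difference groups in the third equation and observe each telescopes to zero.
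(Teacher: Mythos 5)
Your proposal is correct and follows essentially the same route as the paper's proof: sum the final equation of \eqref{eq:DG4CC} over all grid indices, let each directional flux difference telescope to the boundary flux values, and invoke the zero-flux conditions \eqref{nfdis} to conclude $\sum_{i,j,k}f_{i,j,k}^{m+1}=\sum_{i,j,k}f_{i,j,k}^{m}$. Your added care about the auxiliary levels $m^*,m^{**}$ (and the observation that only the third equation need be summed) is a point the paper leaves implicit, but it does not change the argument.
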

\begin{proof} 
Summing over all $i,j$ in the last equation of \eqref{eq:DG4CC}, we obtain
\begin{equation}\label{sum}
\begin{aligned}
\sum_{i,j,k}\frac{f_{i,j,k} ^{m+1}-f_{i,j,k} ^{m}}{\delta{t}} =\sum_{i,j,k}& \Bigg[\frac{1}{2h}(\mathcal{F}_{i+\frac{1}{2}}^{m^*}-\mathcal{F}_{i-\frac{1}{2}}^{m^*})+\frac{1}{2h}(\mathcal{F}_{j+\frac{1}{2}}^{m^{**}}-\mathcal{F}_{j-\frac{1}{2}}^{m^{**}})+
\frac{1}{2h}(\mathcal{F}_{k+\frac{1}{2}}^{m+1}-\mathcal{F}_{k-\frac{1}{2}}^{m+1})\\
+&
+\frac{1}{2h}(\mathcal{F}_{i+\frac{1}{2}}^{m}-\mathcal{F}_{i-\frac{1}{2}}^{m})
+\frac{1}{2h}(\mathcal{F}_{j+\frac{1}{2}}^{m}-\mathcal{F}_{j-\frac{1}{2}}^{m})
+
\frac{1}{2h}(\mathcal{F}_{k+\frac{1}{2}}^{m}-\mathcal{F}_{k-\frac{1}{2}}^{m}) \Bigg].
\end{aligned}
\end{equation}
The right hand side of (\ref{sum}) is a telescoping series and, thus, we have
\begin{equation}
\begin{aligned}
\sum_{i,j}\frac{f_{i,j,k} ^{m+1}-f_{i,j,k} ^{m}}{\delta{t}} =0.
\end{aligned}
\end{equation}
This gives us
\begin{equation}\label{eq:conserv}
\sum_{i,j,k}f_{i,j,k} ^{m+1}=\sum_{i,j,k}f_{i,j,k} ^{m},\qquad \forall m = 0,\hdots, N_t-1,
\end{equation}
which proves that the DG3-CC scheme is conservative in the discrete sense.
\end{proof}

Next, we show the positivity of the DG-CC scheme, i.e. $f^0 \geq 0 $ implies $f^m \geq 0$ for all $m >0$. For this purpose, we assume that $\bF$ is Lipschitz continuous with Lipshitz constant $\Gamma$ independent of $t$, i.e., 
\begin{equation}\label{eq:Lipschitz}
\|\bF(x,t) - \bF(y,t)\| \leq \Gamma \|x-y\|,\qquad \forall x,y \in \Omega,~ t\in [0,T].
\end{equation} 
Such a condition also ensures unique solvability of the underlying ODE system \eqref{eq:ODE_compact}.
Then, we can use similar arguments as in \cite[Th. 4.1]{Roy2018a} to obtain the following result
\begin{theorem}\label{th:positivity}
The DG3-CC scheme is positive under the Courant-Friedrichs-Lewy (CFL)-like condition 
\begin{equation}\label{CFL}
\delta t < \min \left \lbrace \dfrac{2}{\Gamma}, \dfrac{2h^2}{V} \right \rbrace,
\end{equation}
where $\Gamma$ is the Lipschitz constant given in \eqref{eq:Lipschitz} and 
\begin{equation}\label{value_V}
\begin{aligned}
V = \dfrac{h \underline B}{e^{2h\underline B/\overline C}-1}+
\dfrac{h \overline B}{1-e^{-2h\overline B/\overline C}}, \qquad
\mbox{where } &\underline B=\min_{x,t} \lbrace \bF(x,t)\rbrace,\; \overline B=\max_{x,t} \lbrace\bF(x,t)\rbrace, ~\overline C = \max_{i}\sigma_i^2
\end{aligned}
\end{equation}
\end{theorem}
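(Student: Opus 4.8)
The plan is to reduce the statement to standard facts about $M$-matrices and entrywise nonnegative matrices, following and extending the argument of \cite[Th.~4.1]{Roy2018a}. First, fixing a time level and adopting the convention $(\mathcal{L}_\ell f)_i := h^{-1}(\mathcal{F}_{i+1/2}-\mathcal{F}_{i-1/2})$ for the one–dimensional Chang–Cooper flux–difference operator in the $\ell$-th direction, I would note that each of the three substeps of \eqref{eq:DG4CC} takes the form $(I-\tfrac{\delta t}{2}\mathcal{L}_\ell)f^{\text{new}}=(\text{already-computed terms})$. From \eqref{eq:numflux}–\eqref{eq:coefd} the numerical flux splits as $\mathcal{F}_{i+1/2}=C^{+}_{i+1/2}f_{i+1}-C^{-}_{i+1/2}f_i$, and the whole point of the exponential–fitting weight $\delta_i$ is that $C^{-}_{i+1/2}=\tfrac{\sigma_i^2}{2h}\cdot\tfrac{w}{e^{w}-1}$ and $C^{+}_{i+1/2}=\tfrac{\sigma_i^2}{2h}\cdot\tfrac{w}{1-e^{-w}}$ with $w=w_{i+1/2}$, both strictly positive for every sign of the drift since $w/(e^w-1)>0$ and $w/(1-e^{-w})>0$ for all real $w$. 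Hence each $\mathcal{L}_\ell$ is tridiagonal with nonnegative off–diagonal entries, negative diagonal entries $-h^{-1}(C^{-}_{i+1/2}+C^{+}_{i-1/2})$, and — using $C^{+}_{i+1/2}-C^{-}_{i+1/2}=B_{i+1/2}$ — row sums $h^{-1}(B_{i+1/2}-B_{i-1/2})$, which by \eqref{eq:Lipschitz} have modulus at most $\Gamma$.

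Then I would read off the two thresholds in \eqref{CFL}. The bound $\delta t<2/\Gamma$ makes $I-\tfrac{\delta t}{2}\mathcal{L}_\ell$ a strictly diagonally dominant matrix with positive diagonal and nonpositive off–diagonal entries, hence a nonsingular $M$-matrix, so $(I-\tfrac{\delta t}{2}\mathcal{L}_\ell)^{-1}\ge 0$ entrywise; the flux–zero rows \eqref{nfdis} only delete one off–diagonal entry and therefore merely reinforce the dominance. The bound $\delta t<2h^2/V$ makes $I+\tfrac{\delta t}{2}\mathcal{L}_\ell$ entrywise nonnegative: its off–diagonal entries are already $\ge 0$, while its diagonal $1-\tfrac{\delta t}{2h}(C^{-}_{i+1/2}+C^{+}_{i-1/2})$ is $\ge 0$ as soon as $C^{-}_{i+1/2}+C^{+}_{i-1/2}\le V/h$, and this last inequality is exactly what one gets by bounding $w\mapsto w/(e^w-1)$ (decreasing) and $w\mapsto w/(1-e^{-w})$ (increasing) over the admissible range via $\underline B,\overline B$ and $\overline C=\max_i\sigma_i^2$ — i.e.\ it is how $V$ in \eqref{value_V} is built. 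The boundary rows (where the row sums are $h^{-1}B_{3/2}$ and $-h^{-1}B_{N_x-1/2}$ rather than differences) stay under control because $\delta t<2h^2/V$ forces $\delta t<2h/\max|\bF|$ once $h$ is small, so those cases are routine.

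Finally I would close by induction on $m$: assuming $f^m\ge 0$, I would eliminate the auxiliary levels $f^{m^*},f^{m^{**}}$ from \eqref{eq:DG4CC}, and since the operators acting in different coordinate directions commute, regroup the resulting one–step map into a composition of commuting one–dimensional factors $(I-\tfrac{\delta t}{2}\mathcal{L}_\ell)^{-1}(I+\tfrac{\delta t}{2}\mathcal{L}_\ell)$, each entrywise nonnegative by the previous paragraph, whence $f^{m+1}\ge 0$. I expect the real obstacle to sit precisely in this elimination in three dimensions: coupling three (rather than two) directional solves leaves, besides the clean commuting product, a residual lower–order term proportional to $\delta t^3\,\mathcal{L}_1\mathcal{L}_2\mathcal{L}_3 f^m$ whose sign is not fixed a priori, so the crux of the proof is to show that under \eqref{CFL} this term is dominated by the product part — again through the monotonicity bounds that produced $V$ — so that nonnegativity survives. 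This is the step where the two–dimensional reasoning of \cite[Th.~4.1]{Roy2018a} must genuinely be extended rather than merely quoted; the remainder (conservativeness is already Lemma~\ref{lemma:conservative}, and positivity of the auxiliary levels is not needed) is bookkeeping.
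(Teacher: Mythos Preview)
Your reconstruction of the $M$-matrix/nonnegativity argument is exactly what the paper intends: the paper gives no proof of its own here but simply refers to \cite[Th.~4.1]{Roy2018a}, and your Steps~1--2 (positivity of the Chang--Cooper flux coefficients, strict diagonal dominance of $I-\tfrac{\delta t}{2}\mathcal{L}_\ell$ from $\delta t<2/\Gamma$, nonnegativity of the diagonal of $I+\tfrac{\delta t}{2}\mathcal{L}_\ell$ from $\delta t<2h^2/V$, with $V$ arising precisely from the monotonicity of $w/(e^w-1)$ and $w/(1-e^{-w})$) are that argument. So on the level of approach you are aligned with the paper.

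Two remarks. First, your assertion that the directional operators commute is in general false here: the drift component $\bF_2$ and, in the abstract setup, the $\sigma_i(\bX)$ depend on all coordinates, so $\mathcal{L}_1$ and $\mathcal{L}_2$ do not commute. Fortunately you do not need commutativity: eliminating $f^{m^*},f^{m^{**}}$ by back-substitution (no reordering) already gives
\[
\Bigl(I-\tfrac{\delta t}{2}\mathcal{L}_1\Bigr)\Bigl(I-\tfrac{\delta t}{2}\mathcal{L}_2\Bigr)\Bigl(I-\tfrac{\delta t}{2}\mathcal{L}_3\Bigr)f^{m+1}
=\Bigl[\Bigl(I+\tfrac{\delta t}{2}\mathcal{L}_1\Bigr)\Bigl(I+\tfrac{\delta t}{2}\mathcal{L}_2\Bigr)\Bigl(I+\tfrac{\delta t}{2}\mathcal{L}_3\Bigr)-\tfrac{\delta t^3}{4}\mathcal{L}_1\mathcal{L}_2\mathcal{L}_3\Bigr]f^{m},
\]
with the operator orderings fixed; the product part is still an ordered composition of entrywise nonnegative maps, so the regrouping you wanted survives without commutativity.

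Second, you are right that the genuine 3D obstacle is the residual $-\tfrac{\delta t^3}{4}\mathcal{L}_1\mathcal{L}_2\mathcal{L}_3 f^m$, which has no sign; this term is absent in the two-dimensional case treated in \cite{Roy2018a} and the paper does not address it. Your proposed resolution (``dominate it by the product part via the same monotonicity bounds'') is plausible but not yet an argument: you would need to show that the nonnegative diagonal margin left by the product, of order $1-O(\delta t/h^2)$, absorbs the residual, whose entries scale like $\delta t^3/h^6$. Under $\delta t<2h^2/V$ these are comparable rather than asymptotically separated, so the domination is delicate and may in fact require a somewhat sharper constant in the CFL condition than the one stated. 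That is the step that still needs to be written out; everything else in your plan is solid.
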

Next, we state a discrete stability property of the DG3-CC scheme that can be proved using similar arguments as in \cite[Th. 4.3]{Roy2018a}.

\begin{theorem}\label{th:stability1}
The solution $f_{i,j,k}^m$ obtained using the DG3-CC scheme for the FP equation \eqref{eq:opt_FP}  with a source $g(x,t)$, under the CFL-like condition (\ref{CFL}), satisfies the following $L^1$ stability result
$$
\|f^{m}\|_1 \leq  \|f^0\|_1 + \delta{t} \sum_{n=0}^m \max(\|g^n\|_1, \|g^{n-1/2}\|_1), 
\qquad m = 0,\hdots N_t-1,
$$
where $\| \cdot \|_1$ is the discrete $L^1$ norm.
\end{theorem}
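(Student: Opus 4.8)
The plan is to mimic the strategy of \cite[Th.~4.3]{Roy2018a}, exploiting the conservativeness established in Lemma~\ref{lemma:conservative} and the positivity-preserving structure of the DG3-CC update established in Theorem~\ref{th:positivity}. First I would write each of the three fractional steps of \eqref{eq:DG4CC} (now with an added source term $g$ on the right-hand side, evaluated at the appropriate time level $t_m$, $t_{m^{*}}$, or $t_{m+1}$) in the matrix form $M_\ell\, f^{(\ell)} = N_\ell\, f^{(\ell-1)} + \delta t\, g^{(\ell)}$, where $\ell$ indexes the sub-step, $M_\ell$ is the implicit operator acting in a single coordinate direction, and $N_\ell$ collects the explicit contributions from the already-computed levels. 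The key structural facts I would extract, exactly as in the cited reference, are: (i) under the CFL condition \eqref{CFL} each $M_\ell$ is an M-matrix whose column sums equal $1$, so $\|M_\ell^{-1}\|_{1\to 1}\le 1$ acting on the relevant (nonnegative) vectors; and (ii) each $N_\ell$ has nonnegative entries with column sums equal to $1$, by the telescoping/conservative structure used in Lemma~\ref{lemma:conservative} together with the zero-flux boundary conditions \eqref{nfdis}.

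Next I would propagate the $L^1$ norm through one full time step. Taking discrete $L^1$ norms (sums of absolute values over all grid nodes), using that positivity lets me drop absolute values on $f$, and using the column-sum-$1$ property of $N_\ell$ and the contraction property of $M_\ell^{-1}$, each sub-step gives
\begin{equation*}
\|f^{(\ell)}\|_1 \le \|f^{(\ell-1)}\|_1 + \delta t\,\|g^{(\ell)}\|_1 .
\end{equation*}
Chaining the three sub-steps $t_m \to t_{m^{*}} \to t_{m^{**}} \to t_{m+1}$ and bounding the two distinct source evaluations appearing there by $\max(\|g^m\|_1,\|g^{m-1/2}\|_1)$ yields
\begin{equation*}
\|f^{m+1}\|_1 \le \|f^m\|_1 + \delta t\,\max(\|g^m\|_1,\|g^{m-1/2}\|_1),
\end{equation*}
and then summing this telescoping inequality from $0$ to $m-1$ (and reindexing) produces the claimed bound. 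I would note that when $g\equiv 0$ this collapses to $\|f^m\|_1 = \|f^0\|_1$, consistent with Lemma~\ref{lemma:conservative}.

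The main obstacle is step (i)--(ii): verifying carefully that the DG3-CC operators genuinely have the M-matrix and unit-column-sum properties near $\partial\Omega$, where the Chang--Cooper numerical fluxes \eqref{eq:numflux}--\eqref{eq:coefd} are modified by the zero-flux conditions \eqref{nfdis}. The Chang--Cooper weighting $\delta_i$ is precisely engineered so the interior entries have the right signs, and the CFL condition \eqref{CFL} with $V$ as in \eqref{value_V} is exactly what is needed to keep the diagonal of $M_\ell$ dominant and the off-diagonal (explicit) entries nonnegative; but the boundary rows must be checked separately, since there one of the flux terms is dropped rather than reflected. Once the discrete integration-by-parts/telescoping identity is confirmed to hold at the boundary (it does, by the same computation as in Lemma~\ref{lemma:conservative}), the rest is the routine norm-propagation argument sketched above, so I would cite \cite[Th.~4.3]{Roy2018a} for the detailed matrix estimates and present only the adaptation to the three-step Douglas--Gunn splitting and the two-source bookkeeping.
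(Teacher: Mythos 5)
Your proposal follows essentially the same route as the paper, which does not write out a proof but simply invokes the argument of \cite[Th.~4.3]{Roy2018a}: positivity under the CFL condition \eqref{CFL}, the unit-column-sum/M-matrix structure of the implicit and explicit operators in each Douglas--Gunn sub-step, and a telescoping norm propagation with the source bookkeeping. Your sketch is a faithful (and somewhat more detailed) expansion of that citation, including the correct caveat about checking the boundary rows, so it is consistent with the paper's intended proof.
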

We now analyze the consistency properties of the DG3-CC scheme. For this purpose, we note that the the DG3-CC scheme given in \eqref{eq:DG4CC}-\eqref{nfdis} can be written in one step as
we obtain
\begin{equation}\label{eq:DG_comb}
\begin{aligned}
\frac{f_{i,j,k} ^{m+1}-f_{i,j,k} ^{m}}{\delta{t}}&= \frac{1}{2h}\left[{(\bar{D}^{1}_{x} + \bar{D}^{2}_{x} + \bar{D}^{3}_{x})(f_{i,j,k}^m+f_{i,j,k}^{m+1})}\right]\\
&-\frac{\delta{t}}{4h^2}\left[{(\bar{D}^1_x\bar{D}^2_x + \bar{D}^1_x\bar{D}^3_x  + \bar{D}^2_x\bar{D}^3_x )(f_{i,j,k}^{m+1}-f_{i,j,k}^m)}\right]\\
&+\frac{\delta{t}^2}{8h^3}\left[{(\bar{D}^1_x\bar{D}^2_x\bar{D}^3_x  )(f_{i,j,k}^{m+1}-f_{i,j,k}^m)}\right],
\end{aligned}
\end{equation}
where 
\begin{equation}
\begin{aligned}
\bar{D}^1_x f_{i,j,k}^m = &D_+C_{i-\frac{1}{2},j,k}^{m^*}D_-f_{i,j,k}^{m} + D_+B_{i-\frac{1}{2},j,k}^{m^*}M_\delta f_{i,j,k}^{m},\\
\bar{D}^2_x f_{i,j,k}^m = &D_+C_{i,j-\frac{1}{2},k}^{m^{**}}D_-f_{i,j,k}^{m} + D_+B_{i,j-\frac{1}{2},k}^{m^{**}}M_\delta f_{i,j,k}^{m},\\
\bar{D}^3_x f_{i,j,k}^m = &D_+C_{i,j,k-\frac{1}{2},l}^{m^{***}}D_-f_{i,j,k}^{m} + D_+B_{i,j,k-\frac{1}{2},l}^{m^{***}}M_\delta f_{i,j,k}^{m},\\
\end{aligned}
\end{equation}
and for an index $r\in \lbrace i,j,k\rbrace$
\begin{equation*}
\begin{aligned}
&D_+ f_r = (f_{r+1}-f_{r})/h,\\
&D_- f_r = (f_{r}-f_{r})/h,\\
&M_{\delta}f_r = (1-\delta_{r-1})f_{r} + \delta_{r-1}f_{r-1}.\\
\end{aligned}
\end{equation*}

The first term on the right hand side of \eqref{eq:DG_comb} corresponds to the Crank-Nicholson method with CC discretization for the spatial operator. Using similar arguments as in \cite[Lemma 3.2 and Th. 3.6]{MB}, using a Taylor series expansion, one can show that under the CFL condition (\ref{CFL}), the truncation error corresponding to the first term on the right hand side of \eqref{eq:DG_comb}  is $\mathcal{O}(\delta{t}^2+h^2)$. For the other two terms, using similar arguments as in \cite[Lemma 3.2 and Th. 3.6]{MB} and \cite[Lemma 4.2]{Roy2018a}, one can show that the truncation error is $\mathcal{O}(\delta{t}^2+\delta t^2 h^2)$. Defining the overall truncation error as 
\begin{equation}\label{trunc_error}
\begin{aligned}
\pmb\varphi_{i,j,k}^{m+1} &:=
\frac{f_{i,j,k} ^{m+1}-f_{i,j,k} ^{m}}{\delta{t}}- \frac{1}{2h}\left[{(\bar{D}^{1}_{x} + \bar{D}^{2}_{x} + \bar{D}^{3}_{x})(f_{i,j,k}^m+f_{i,j,k}^{m+1})}\right]\\
&+\frac{\delta{t}}{4h^2}\left[{(\bar{D}^1_x\bar{D}^2_x + \bar{D}^1_x\bar{D}^3_x  + \bar{D}^2_x\bar{D}^3_x )(f_{i,j,k}^{m+1}-f_{i,j,k}^m)}\right]-\frac{\delta{t}^2}{8h^3}\left[{(\bar{D}^1_x\bar{D}^2_x\bar{D}^3_x )(f_{i,j,k}^{m+1}-f_{i,j,k}^m)}\right],
\end{aligned}
\end{equation}
we obtain the following result for the truncation error estimate of the DG3-CC scheme
\begin{lemma}\label{th:truncation}
The truncation error (\ref{trunc_error}) of the DG3-CC scheme (\ref{eq:DG4CC})-(\ref{nfdis}) is of order $\mathcal{O}(\delta{t}^2+h^2)$ under the CFL-like condition (\ref{CFL}).
\end{lemma}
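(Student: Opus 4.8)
The plan is to work from the one-step reformulation \eqref{eq:DG_comb} of the DG3-CC scheme and to split the truncation error \eqref{trunc_error}, evaluated on the exact solution $f$ of \eqref{eq:opt_FP}, into the three contributions associated with the three groups of terms on the right-hand side of \eqref{eq:DG_comb}: I would write $\pmb\varphi^{m+1}=\pmb\varphi^{\mathrm{CN}}+\pmb\varphi^{(2)}+\pmb\varphi^{(3)}$, where $\pmb\varphi^{\mathrm{CN}}$ collects $\tfrac{f^{m+1}-f^{m}}{\delta t}-\tfrac{1}{2h}(\bar{D}^1_x+\bar{D}^2_x+\bar{D}^3_x)(f^{m}+f^{m+1})$, while $\pmb\varphi^{(2)}$ and $\pmb\varphi^{(3)}$ are the $\tfrac{\delta t}{4h^2}$- and $\tfrac{\delta t^2}{8h^3}$-weighted products of the discrete operators $\bar{D}^p_x$ applied to $f^{m+1}-f^{m}$. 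Throughout I would use that $f$ carries the derivatives needed for the Taylor expansions below: by Proposition~\ref{th:proposition1} and the $H^2(\Omega)$ spatial regularity noted there, bootstrapped using that $\bF$ is polynomial (hence $C^\infty$) in $x$ together with the corresponding parabolic regularity in $t$, the solution $f$ is as smooth as required on $\overline{Q}$.

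First I would treat $\pmb\varphi^{\mathrm{CN}}$ exactly as in \cite[Lemma 3.2 and Th. 3.6]{MB}. The time part is the Crank--Nicolson rule centred at $t_{m+1/2}$, and a Taylor expansion about $t_{m+1/2}$ gives $\tfrac{f^{m+1}-f^{m}}{\delta t}-\tfrac12(\partial_t f^{m}+\partial_t f^{m+1})=\mathcal{O}(\delta t^2)$ (here the linearity of the FP operator in $f$ makes the averaged flux term equal to $\nabla\cdot\mathcal{F}[f^{m+1/2}]$ up to $\mathcal{O}(\delta t^2)$). The space part is the Chang--Cooper discretization of $\nabla\cdot\mathcal{F}$: expanding the exponential weight $\delta_r$ through $w^m_{r+1/2}=2hB_{r+1/2}/\sigma_r^2$ and Taylor expanding $B_{r+1/2}$ and $f_{r\pm1}$ about the cell centre yields $\tfrac1h\bar{D}^p_x f=\partial_{x_p}\mathcal{F}_p[f]+\mathcal{O}(h^2)$ for $p=1,2,3$, uniformly in $(x,t)$ under \eqref{CFL}. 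Since $f$ solves $\partial_t f=\nabla\cdot\mathcal{F}[f]$ pointwise, combining the two expansions gives $\pmb\varphi^{\mathrm{CN}}=\mathcal{O}(\delta t^2+h^2)$.

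Next I would dispose of $\pmb\varphi^{(2)}$ and $\pmb\varphi^{(3)}$, where the gain comes from the explicit $\delta t$ factors: for the exact solution $f^{m+1}-f^{m}=\delta t\,\partial_t f(\cdot,\tau_m)=\mathcal{O}(\delta t)$ with bounded derivatives, and by the same expansion each $\bar{D}^p_x$ applied to a smooth grid function reproduces $h\,\partial_{x_p}\mathcal{F}_p[\cdot]$ up to $\mathcal{O}(h^3)$, so $\bar{D}^p_x\bar{D}^q_x(f^{m+1}-f^{m})=\mathcal{O}(h^2\delta t)$ and $\bar{D}^1_x\bar{D}^2_x\bar{D}^3_x(f^{m+1}-f^{m})=\mathcal{O}(h^3\delta t)$; hence $\pmb\varphi^{(2)}=\tfrac{\delta t}{4h^2}\mathcal{O}(h^2\delta t)=\mathcal{O}(\delta t^2)$ and $\pmb\varphi^{(3)}=\tfrac{\delta t^2}{8h^3}\mathcal{O}(h^3\delta t)=\mathcal{O}(\delta t^3)$, and a more careful bookkeeping of the $\mathcal{O}(h^3)$ remainders gives the sharper $\mathcal{O}(\delta t^2+\delta t^2h^2)$ quoted in the text, which is in any case absorbed into $\mathcal{O}(\delta t^2+h^2)$. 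Adding the three contributions proves the lemma. I expect the main obstacle to be precisely the spatial consistency estimate $\tfrac1h\bar{D}^p_x f=\partial_{x_p}\mathcal{F}_p[f]+\mathcal{O}(h^2)$: the Chang--Cooper weights $\delta_r$ are nonlinear functions of $h$ through $w_{r+1/2}$ and involve the coefficients $B_{r+1/2}$ at the intermediate sweep levels, so controlling their Taylor remainders uniformly in $(x,t)$ is delicate — this is exactly where the CFL-type bound \eqref{CFL} enters, as in \cite[Lemma 3.2]{MB} and \cite[Lemma 4.2]{Roy2018a}, keeping $|w_{r+1/2}|$ and the weights in the regime where the remainder is genuinely $\mathcal{O}(h^2)$. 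Once those two lemmas are invoked, the remaining manipulations are routine Taylor expansions.
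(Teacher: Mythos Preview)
Your proposal is correct and follows essentially the same route as the paper: you split the truncation error according to the three groups of terms in the one-step form \eqref{eq:DG_comb}, bound the Crank--Nicolson/Chang--Cooper part by $\mathcal{O}(\delta t^2+h^2)$ via Taylor expansion and \cite[Lemma~3.2, Th.~3.6]{MB}, and handle the remaining splitting terms by exploiting the extra $\delta t$ factors together with \cite[Lemma~4.2]{Roy2018a} to get $\mathcal{O}(\delta t^2+\delta t^2 h^2)$. This is exactly the decomposition and the set of references the paper uses; your write-up simply fleshes out the Taylor-expansion bookkeeping that the paper leaves implicit.
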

Using Lemma \ref{th:truncation}, Theorem \ref{th:stability1} and arguments as in \cite{Roy2016,Roy2018a}, we obtain the following convergence error estimate of the DG3-CC scheme
\begin{theorem}\label{convergence_DG4CC}
The DG3-CC scheme \eqref{eq:DG4CC}-\eqref{nfdis} is convergent with an error of order $\mathcal{O}(\delta{t}^2+h^2)$ under the CFL condition (\ref{CFL}) in the discrete $L^1$ norm. 
\end{theorem}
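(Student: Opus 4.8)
The plan is to run the classical Lax--Richtmyer argument: consistency (Lemma~\ref{th:truncation}) together with the discrete $L^1$ stability (Theorem~\ref{th:stability1}) yields convergence. First I would introduce the grid restriction of the exact solution, $\bar f_{i,j,k}^m := f(x_{1i},x_{2j},x_{3k},t_m)$, where $f$ solves \eqref{eq:opt_FP}, and define the error $e_{i,j,k}^m := \bar f_{i,j,k}^m - f_{i,j,k}^m$. By the very definition of the truncation error in \eqref{trunc_error}, $\bar f$ satisfies the one-step form \eqref{eq:DG_comb} of the DG3-CC scheme with an added right-hand side $\pmb\varphi_{i,j,k}^{m+1}$, whereas $f_{i,j,k}^m$ satisfies the same relation with zero right-hand side. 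Since the scheme is linear in the unknown (the Chang--Cooper coefficients $B$, $C$, $\delta$ depend only on $\bF$, hence on $x$, $t$, $\bT$, $u$, but not on $f$), subtracting the two identities shows that $e_{i,j,k}^m$ solves the DG3-CC scheme driven by the source $g:=\pmb\varphi$ and with vanishing initial data $e^0\equiv 0$. The same applies, stage by stage, at the two auxiliary half-steps $t_{m^*}$, $t_{m^{**}}$.

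Next I would apply Theorem~\ref{th:stability1} to this error equation. Under the CFL-like condition \eqref{CFL} it gives
\[
\|e^m\|_1 \le \|e^0\|_1 + \delta t \sum_{n=0}^m \max\bigl(\|\pmb\varphi^n\|_1,\|\pmb\varphi^{n-1/2}\|_1\bigr) = \delta t \sum_{n=0}^m \max\bigl(\|\pmb\varphi^n\|_1,\|\pmb\varphi^{n-1/2}\|_1\bigr),
\]
because $e^0\equiv 0$. By Lemma~\ref{th:truncation} we have $|\pmb\varphi_{i,j,k}^n| \le C(\delta t^2 + h^2)$ pointwise (and likewise at the intermediate stages $n^*$, $n^{**}$), so summing against the discrete volume element $h^3$ over the $\mathcal{O}(h^{-3})$ interior nodes yields $\|\pmb\varphi^n\|_1 \le |\Omega|\,C(\delta t^2 + h^2)$, uniformly in $n$. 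Inserting this bound and using $m\,\delta t \le T_f$ gives $\|e^m\|_1 \le T_f\,|\Omega|\,C(\delta t^2 + h^2)$ for every $m \le N_t - 1$, which is exactly the claimed $\mathcal{O}(\delta t^2 + h^2)$ convergence in the discrete $L^1$ norm.

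The point requiring the most care is the regularity input behind Lemma~\ref{th:truncation}: the $\mathcal{O}(\delta t^2 + h^2)$ truncation estimate rests on Taylor expansions of $f$ in both time and space, so one must know that $f$ is smooth enough --- at least $f(\cdot,t)\in H^2(\Omega)$ as noted after Proposition~\ref{th:proposition1}, upgraded to the higher regularity needed for second-order expansions by the usual parabolic bootstrap with smooth data $\bF$, $\bS$, $f_0$. A secondary technical point is the bookkeeping of the truncation error at the auxiliary half-steps of the Douglas--Gunn splitting, so that the terms $\pmb\varphi^{n-1/2}$ appearing in Theorem~\ref{th:stability1} are genuinely controlled; this is handled exactly as in \cite[Lemma 4.2]{Roy2018a}, where the intermediate stages are shown to contribute only $\mathcal{O}(\delta t^2 + \delta t^2 h^2)$ and hence do not degrade the order. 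With these two ingredients in hand, the convergence statement is an immediate consequence of the stability--consistency combination above.
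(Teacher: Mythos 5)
Your proposal is correct and follows exactly the route the paper intends: the paper gives no written proof, simply invoking Lemma \ref{th:truncation}, Theorem \ref{th:stability1} and the arguments of the cited references, which amount to the same Lax-type stability-plus-consistency argument you spell out (error satisfies the linear scheme with the truncation error as source and zero initial data, then apply the discrete $L^1$ stability bound). Your explicit remarks on the regularity needed for the Taylor expansions and on the bookkeeping at the Douglas--Gunn half-steps are the right points of care and are consistent with how the referenced works handle them.
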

For the adjoint equations \eqref{eq:opt_adj} and \eqref{eq:opt_adj2}, we use the D-G scheme for the time discretization in the first term, one sided finite difference discretization for the second term, and central difference for the third term on the left hand side of \eqref{eq:opt_adj} and \eqref{eq:opt_adj2}.

\subsection{A projected NCG optimization scheme\label{sec:optimization}}

For solving the optimization problems \eqref{eq:min_problem} and \eqref{eq:second_min_problem}, we use a projected non-linear conjugate gradient (NCG) scheme (see for e.g., \cite{MA,Pal2020,Pal2021,Roy2016,Roy2018a}). It falls under the class of non-linear optimization schemes, where the objective functional nonlinear yet differentiable with respect to the optimization variables. The NCG scheme has been used to solve several finite and infinite dimensional optimization problems and has been demonstrated to provide fast and accurate solutions of the optimality system, even for finite dimensional optimization problems (for e.g., see the discussion in \cite{MA,Pal2020,Pal2021}. For non-linear optimization schemes involving non-differentiable objective functionals, one can use proximal methods or semi-smooth Newton schemes (see for e.g., \cite{Gupta1,Gupta2,Roy2020}). To describe the NCG scheme for solving the minimization problems \eqref{eq:min_problem} and \eqref{eq:second_min_problem}, we generically denote the reduced functional corresponding to either of the minimization problems as $\hat{J}$, and the associated optimization variable as $\bP$. We start with an initial guess $\bP_0$ for the optimization problem and, correspondingly, calculate 
$$
d_0=g_0:= \nabla_{\bP}\hat{J}(\bP_0),
$$
where $\nabla_{\bP}\hat{J}$ is given by \eqref{eq:opt_cond} or \eqref{eq:opt_cond2}. The search directions are then obtained recursively as follows
\begin{equation}
d_{k+1} = -g_{k+1}+\beta_kd_k,
\end{equation}
where $g_k=\nabla\hat{J}(u_k),~k=0,1,\hdots$. The parameter $\beta_k$ is chosen according to the formula of Hager-Zhang \cite{hag:zha}  given by
\begin{equation}\label{eq:HG}
\beta_k^{HG} = \frac{1}{d_k^Ty_k}\left({y_k-2d_k\frac{\|y_k\|_{l^2}^2}{d_k^Ty_k}}\right)^Tg_{k+1},
\end{equation}
where $y_k = g_{k+1}-g_k$.
Next, a conjugate gradient descent scheme is used as follows to update the optimization variable iterate 
\begin{equation}\label{globGradOpt}
\bP_{k+1} = \bP_k + \alpha_k \,  d_k ,
\end{equation}
where $k$ is an index of the iteration step and $\alpha_k >0$ is a steplength obtained using a backtracking line search algorithm. We use the following Armijo condition of sufficient decrease of $\hat{J}$ for the backtracking line search
\begin{equation}\label{arm}
\hat{J}(\bP_k+\alpha_kd_k)\leq \hat{J}(\bP_k)+\delta\alpha_k\langle\nabla_{\bP}\hat{J}(\bP_k),d_k\rangle_{L^2},
\end{equation}
where $0 < \delta < 1/2$ and the scalar product $\langle u,v\rangle_{L^2} $ is the discrete $l^2$ inner product in $\R^6$ for the minimization problem \eqref{eq:min_problem}, and represents the standard $L^2([0,T;H^1(\Omega))^3$ inner product for the minimization problem \eqref{eq:second_min_problem}. Finally, the gradient update step is combined with a projection step onto the admissible sets in the following way
\begin{equation}\label{eq:localGradOpt}
\bP_{k+1} 
= P_{U}\left[ \bP_k + \alpha_k \, d_k \right] ,
\end{equation}
where 
$$
P_{U}\left[\bP\right] = \left(\max\lbrace 0,\min \lbrace N_i, \bP_i\rbrace\rbrace,~ \forall i = 1, \cdots, s \right),
$$
with $U = U_{ad}$ or $V_{ad}$, $s = 6$ or 3 and $N_i = M_i$ or $D_i$, corresponding to the minimization problems \eqref{eq:min_problem} and \eqref{eq:second_min_problem}, respectively.
The projected NCG scheme can be summarized in the following algorithm:

\begin{algorithm}[Projected NCG Scheme]\label{algo:algo1}\ 
\begin{enumerate}
\item Input: initial approx. $\bP_0$. Evaluate $d_0 = -\nabla_{\bP}\hat{J}(\bP_0)$, index $k=0$, maximum $k=k_{max}$, 
tolerance =$tol$.
\item While $(k<k_{max})$ do
\item Set $\bP_{k+1} = P_{U}\left[ \bP_k + \alpha_k \, d_k \right]$, where $\alpha_k$ is obtained using a line-search algorithm.
\item Compute $g_{k+1} = \nabla_{\bP}\hat{J}(\bT_{k+1})$.
\item Compute $\beta_k^{HG}$ using (\ref{eq:HG}).
\item Set $d_{k+1}=-g_{k+1}+\beta_k^{HG}d_k$.
\item If $\|\bP_{k+1}-\bP_k\|_{l^2} < tol$, terminate.
\item Set $k=k+1$.
\item End while.
\end{enumerate}
\end{algorithm}

\section{Global uncertainty and sensitivity analysis of optimal parameter set} \label{sec:uncertainty}

Once we obtain the optimal parameter sets, we next want to determine the most sensitive parameters with respect to the tumor volume. In this context, it should be noted that the patient-specific parameters can be considered as random variables due to the uncertainties in experimental data \cite{Par49}. It is well-known that any uncertainty in the chosen parameter values may result in inconsistency when it comes to the model's prediction of resulting dynamics. Also, the degree of uncertainty guides the significance of the inconsistency introduced \cite{Helton50}. As such, uncertainty analysis should be used as a tool to quantify the uncertainty in the model output that is a result of the uncertainty in the input parameters. Now, sensitivity analysis, which naturally follows uncertainty analysis, helps in assessing how the overall inconsistency in the model output can be attributed to different input sources. Taken together, in context of accurate assessment of treatments, uncertainty and sensitivity analyses aims to perform the following: (i) identify the key patient-specific parameters, among all input parameters, whose sensitivity significantly contribute to the tumor volume and (ii) rank the identified parameters depending on how much they contribute to this sensitivity.

Although one may carry out a local sensitivity analysis, where the sensitivity of one parameter is studied separately by keeping rest of the parameters fixed at their baseline values, such a method may not be accurate in assessing uncertainties \cite{Hoare51}. Hence, we propose a multi-dimensional parameter space globally that allows all uncertainties to be identified simultaneously. To facilitate this, we employ two efficient statistical tools - Latin hypercube sampling (LHS) method and partial rank correlation coefficient (PRCC) analysis; see \cite{Marino52}. Briefly, in the LHS method, we start six uncertain patient-specific parameter set $\bT$, that are associated with the mathematical model under study. Then, we use Monte Carlo simulation technique to generate $M$ random numbers for each of the six uncertain parameters to produce a $(M\times 6)$ matrix. We call this matrix as the LHS matrix. As a thumb rule, We choose $M$ such that $M > (\frac{4}{3})k$. Note that each row of the LHS matrix can be used as an input vector to generate the uni-dimensional output measure. Hence, we generate $M$ different values for the output measure. The output variable is thus a vector with dimension $(M\times 1)$. Finally, we compute the PRCC between each uncertain parameter and the output variable to identify the parameters that have significant PRCC values. Below, we summarize the steps involved in the LHS-PRCC analysis.

\begin{algorithm}[LHS-PRCC Scheme]\label{algo:algo2}\ 
\begin{enumerate}

\item[Step 1:] For each uncertain parameter in the set $\bT$, specify a PDF. In this way, the variability in the PDF becomes a direct measure of the variability of the uncertain parameter.

\item[Step 2:] To ensure that the sampling distribution of the values for each uncertain parameter adequately reflects the shape of the chosen PDF, divide each PDF into $M$ equi-probable and non-overlapping intervals. 

\item[Step 3:] Randomly draw a number from each interval corresponding to each uncertain parameter exactly once to make sure that the entire range for each parameter is explored. In this context, the drawings are done independently for each parameter. This results in $M$ different values for each of the $k$ uncertain parameters.

\item[Step 4:] Create a LHS matrix with dimension $(M\times 6)$ using the values generated in Step 3. In this matrix, the numbers in each column are not arranged in any particular order. Thus, each row of the LHS matrix represents six random numbers with each random number representing a particular uncertain parameter.

\item[Step 5:] Using each row of the LHS matrix obtained in Step 4, compute the tumor volume $V$, known as the output measure. This results in $M$ different values of the output measure, noting that there are $M$ rows in the LHS matrix. Call this as the output vector having dimension $(M\times 1)$.

\item[Step 6:] Rank transform the LHS matrix, i.e., transform the values in each column of the LHS matrix to ranks. Denote the resulting ranked LHS matrix as $X_R=[X_{1R},X_{2R},\cdots,X_{kR}]$. Note that each $X_{iR},$ $i=1,\cdots,6,$ represents the rank transform of the $i$-th uncertain parameter. Similarly, rank transform the output vector and denote the ranked output vector as $Y_R$.

\item[Step 7:] For each uncertain parameter, fit two multiple linear regression (MLR) models. The first one is the MLR of $X_{iR}, i=1,2,\cdots,6,$ on all $\{X_{jR}: j=1,2,\cdots,6 \ \ \text{and} \ \ j\neq i\}$. The second one is the MLR of $Y_R$ on all $\{X_{jR}: j=1,2,\cdots,6 \ \ \text{and} \ \ j\neq i\}$.

\item[Step 8:]  For each of the two fitted MLR models, calculate the residuals. For the $i$-th uncertain parameter, the PRCC is obtained by calculating the Pearson's correlation coefficient between these two sets of residuals. Compute the PRCC value for each uncertain parameter.

\item[Step 9:] For each uncertain parameter, use the student's t-test and the corresponding $p$-value to check if the PRCC value is significantly different from zero.

\item[Step 10:] Identify the sensitive parameters, i.e., the parameters having large PRCC values (e.g., $>0.5\ \text{or}\ <-0.5$) and small $p$-values (e.g., $<0.01, <0.05,\ \ \text{or}\ < 0.10$ ). These are the parameters that significantly affect the tumor volume in a colon cancer patient. A PRCC value with a positive (negative) sign implies that the corresponding parameter is directly (inversely) related to the output measure.

\item[Step 11:] Rank the identified sensitive parameters based on the magnitude of their PRCC values.
\end{enumerate}
\end{algorithm}

\section{Numerical results} \label{sec:results}

In this section, we present the results of numerical simulations that validate the effectiveness of the FP framework. We first consider the parameter estimation problem given in \eqref{eq:min_problem}.  For this purpose, we choose our domain $\Omega = (0,6)^3$ and discretize it using $N_{x_i} = 51$ points for $i = 1,2,3$. The final time $t$ is chosen to be 4 and the maximum number of time steps $N_t$ is chosen to be 50. The patient data is represented by the target PDFs $f^*_i(x),~ i = 1,\cdots,N$ with $N=10,20$, where $f^*_i$ are described by a normal distribution about the measured mean value $\mathbb{E}[f^*_i]$ and variance 0.05. We perform a 4D interpolation to obtain the data function $f^*(x,t)$ at all discrete times $t_k,~ k = 1,\cdots, N_t$. The regularization parameters are chosen to be $\alpha = 1,~ \beta = 0.02$. For the set $U_{ad}$, the value of the vector $M = (M_1,\cdots,M_6)$ is given as $(1.5,0.05,0.2,1.5,0.5,1)$. This is motivated by the maximum of the biological range of values of the parameters provided in \cite{Cser2019}. For the parameter estimation process, the initial guess of the parameter set $\bT$ in the NCG algorithm is given by $\bT_0 = (0.1,0.1,0.1,0.1,0.1,0.1)$.

For the values of $\sigma_i$, we analyze the data given in \cite{Cser2019}, corresponding to the measurement of $V$ in 10 mice on days 3, 5, 7, 9, 11, 13, 15, 17, 19. We first compute the mean of $V$ for each of the corresponding days in the sample. We remark that in past, the standard deviation $\sigma$ of the data corresponding to a dynamical variable $D_y$ has been given by the form $mD_y$. While being a reasonable approximation, this expression is not accurate and moreover, such an expression will lead to degenerate elliptic coefficients in the FP equation \eqref{eq:FP}. Thus, to overcome these issues, we use a data fitting method to come up with a more accurate form of $\sigma$ such that it is not degenerate. We assume that standard deviation of the aforementioned dataset is given by the form $ m(\bar{V}^d)^{ra}+\eps$, where $\eps > 0$ and $\bar{V}^d$ is the mean of $\bar{V}$ on day $d$. We find that for the choice of $ra=1.2,~ m = 0.5,\eps=0.001$, the standard deviation of the data is well-fitted by $m(\bar{V}^d)^{ra}+\eps$. Thus, in \eqref{eq:FP}, we choose
\[
\sigma_i(x) = 0.5(x_i^{1.2}+0.001),~ i = 1,2,3.
\]

\subsection{Test Case 1: Synthetic Data} In this test case, we generate synthetic data measurement, that represents a hypothetical colon cancer patient, by solving the ODE \eqref{eq:NODE} in the time interval $t = [0,4]$ with $N = 10,20$, and with the non-dimensional parameter set $\bT = (1.3400, 0.0350, 0.1200, 1.1400, 0.2473, 0.5000)$. The values of the constants used in converting the ODE system \eqref{eq:ODE} to its non-dimensional form given in \eqref{eq:NODE} are given as $k_1 = \frac{1}{10},~k_2 = 3,~k_3 = 100,~k_4 = \frac{1}{10}$. The data is given as $(\bar{V}_i,\bar{B}_i, \bar{T}_i) = (\bar{V}(t_i),\bar{B}(t_i), \bar{T}(t_i)) ,~i = 1,\cdots,N$, for specified times $t_i$, with $(\bar{V}(0),\bar{B}(0), \bar{T}(0) = (1,1,1))$. The initial condition represents a cancer-free state of the patient. The corresponding PDFs $f^*_i$ are given by normal distribution functions with mean $\bar{V}_i$ and variance 0.05. 

\begin{figure}[H]
\centering
\subfloat[$V$ - MC plots]{\includegraphics[width=0.35\textwidth, height=0.30\textwidth]{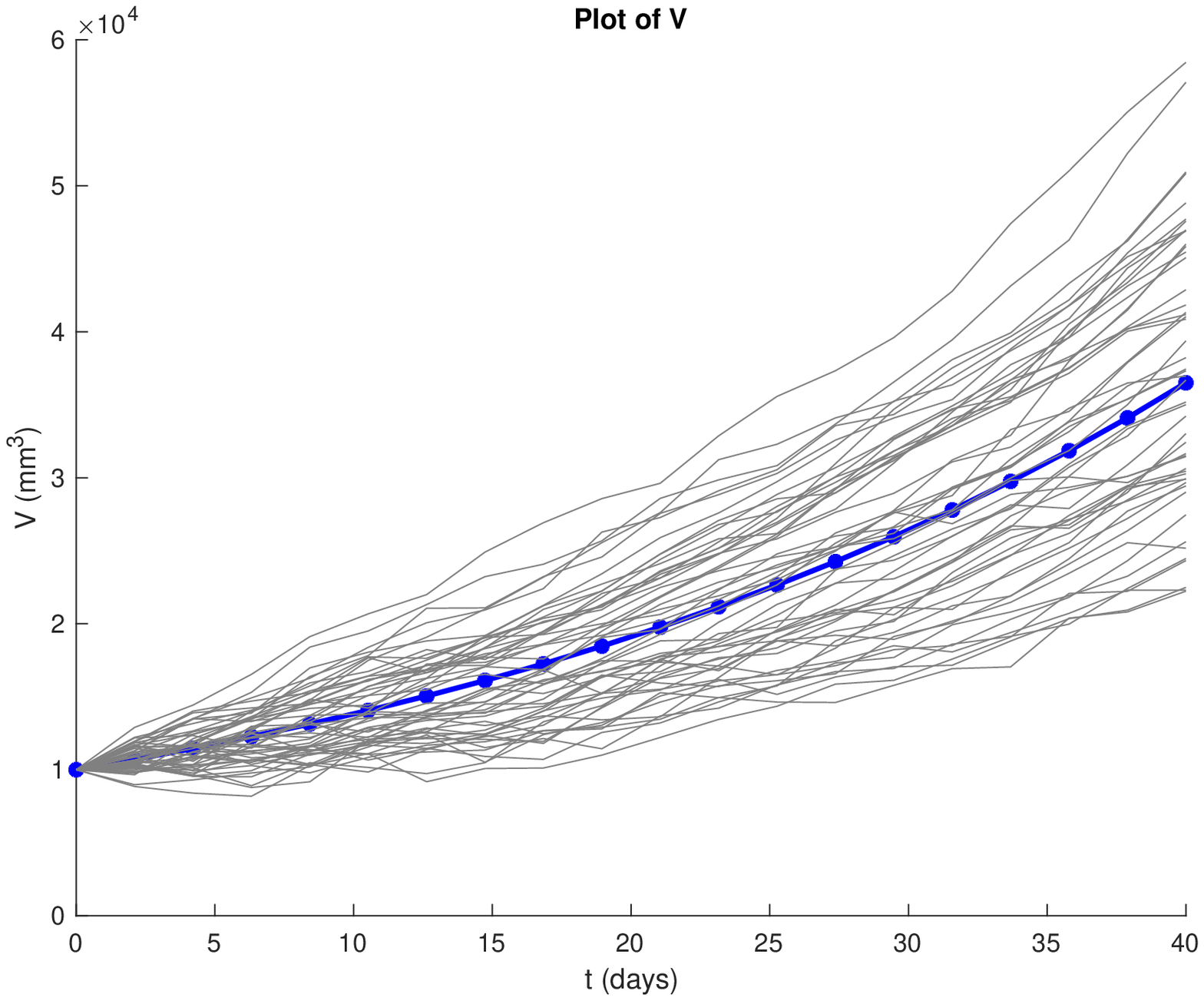}\label{V10_MC}}
\subfloat[$B$ - MC plots]{\includegraphics[width=0.35\textwidth, height=0.30\textwidth]{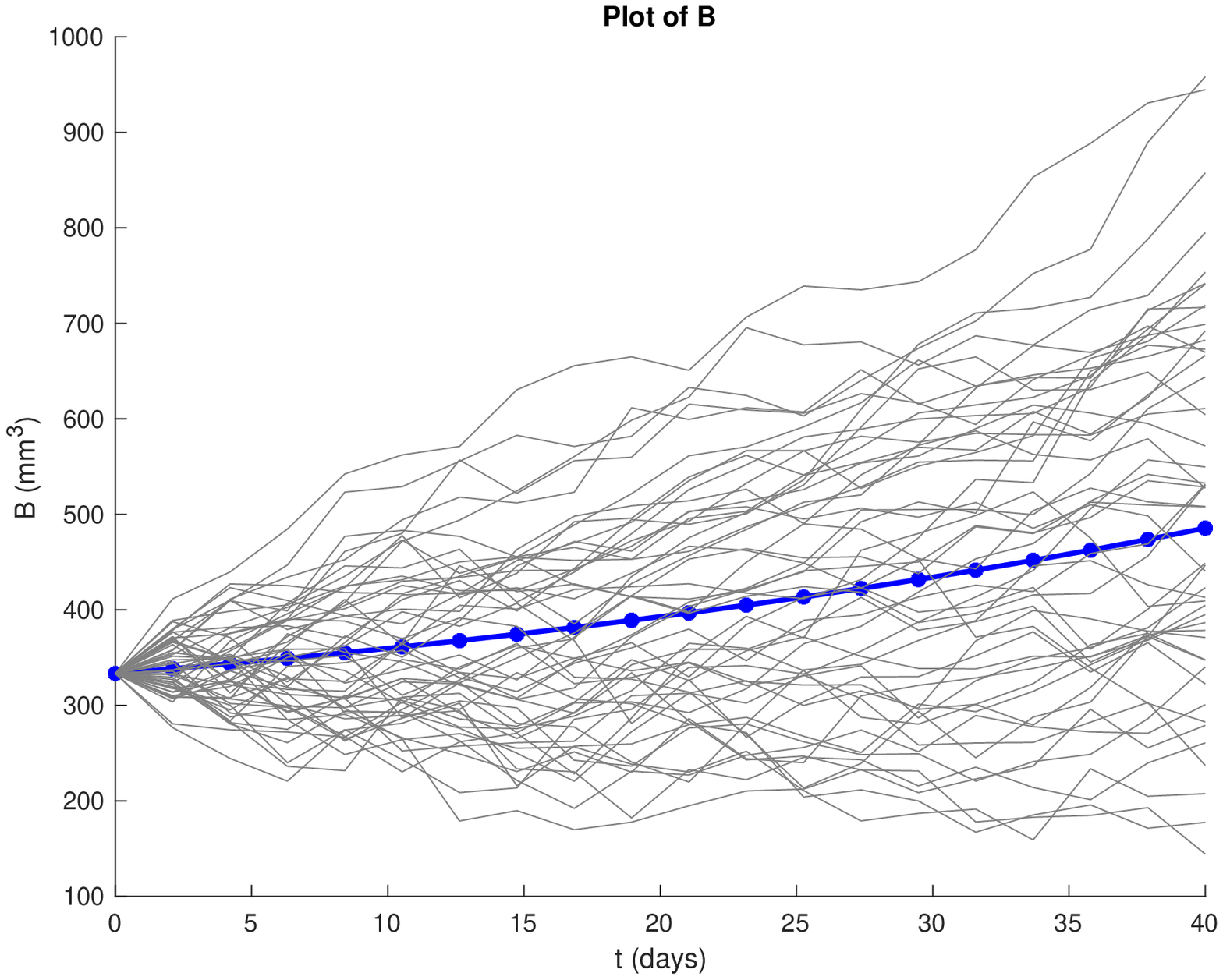}\label{B10_MC}}
\subfloat[$T $- MC plots]{\includegraphics[width=0.35\textwidth, height=0.30\textwidth]{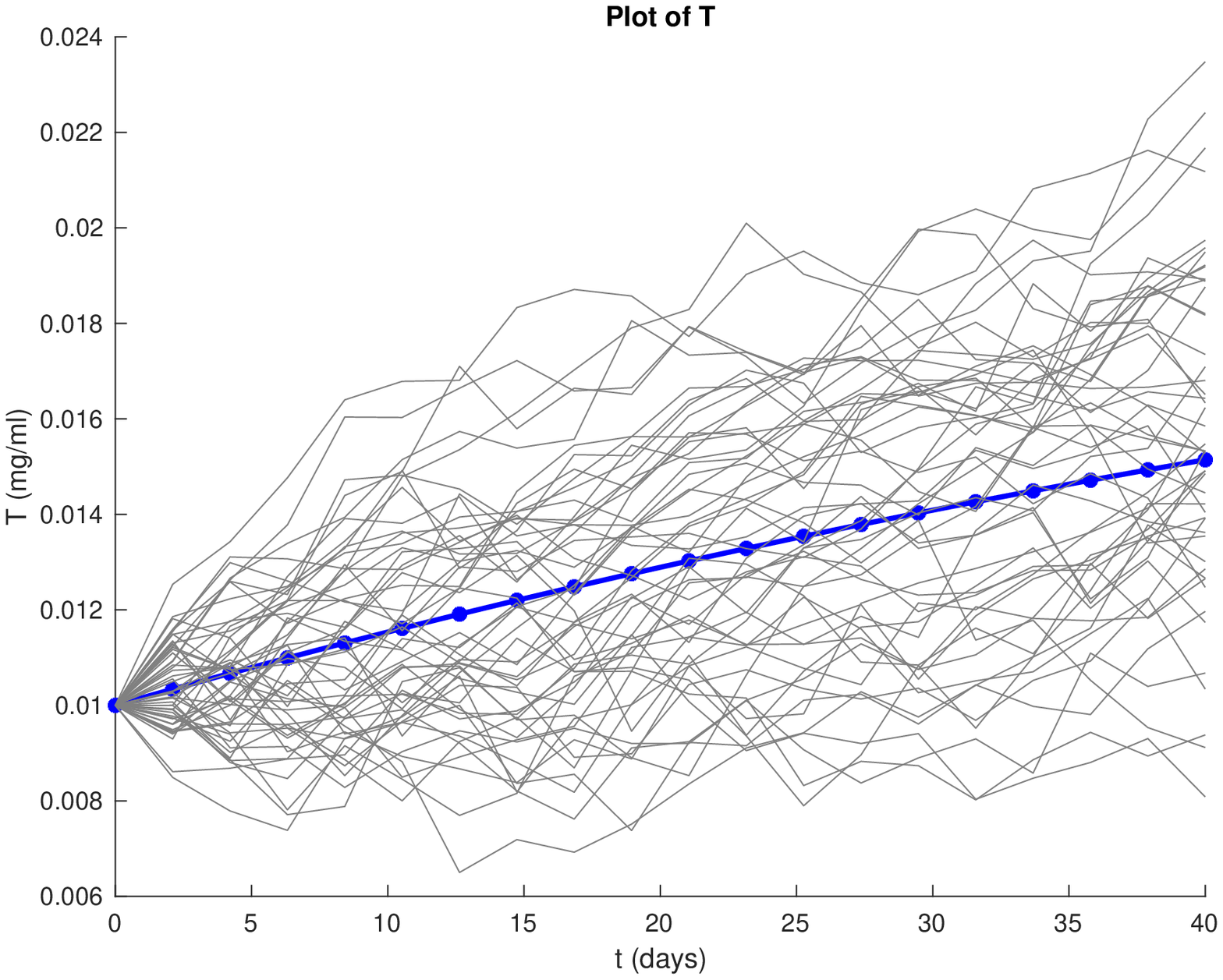}\label{T10_MC}}\\
\subfloat[$V$]{\includegraphics[width=0.35\textwidth, height=0.30\textwidth]{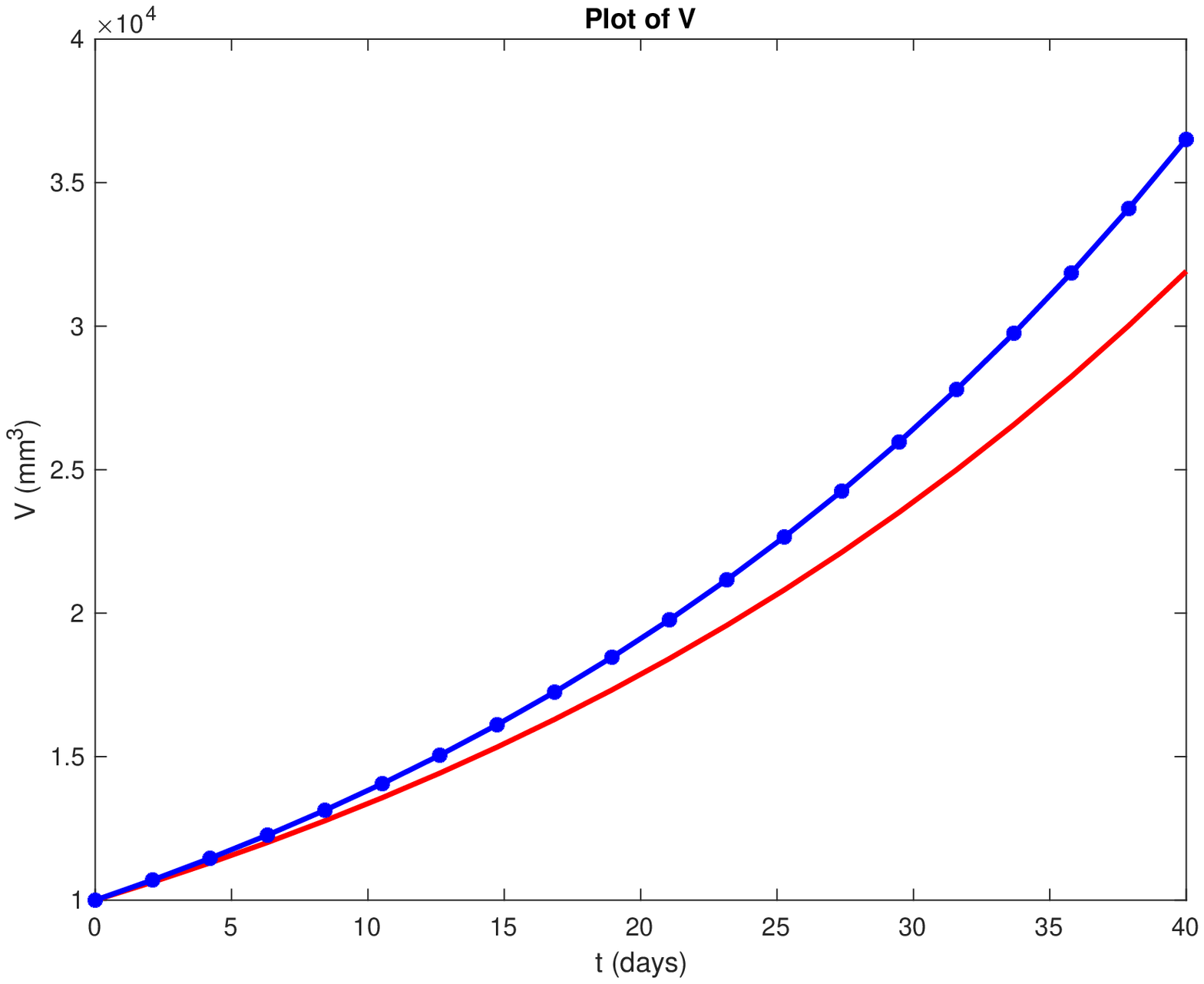}\label{V10}}
\subfloat[$B$]{\includegraphics[width=0.35\textwidth, height=0.30\textwidth]{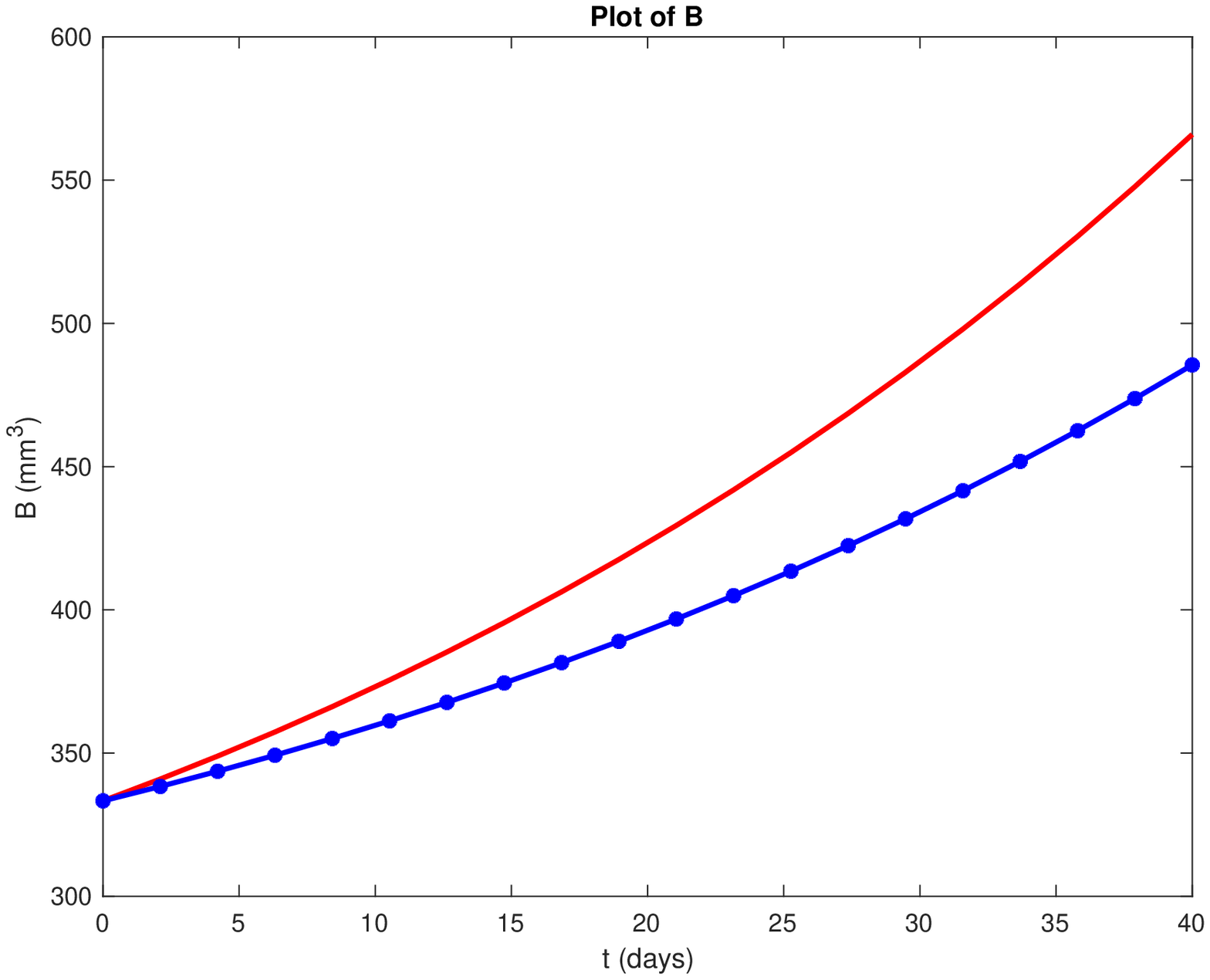}\label{B10}}
\subfloat[$T$]{\includegraphics[width=0.35\textwidth, height=0.30\textwidth]{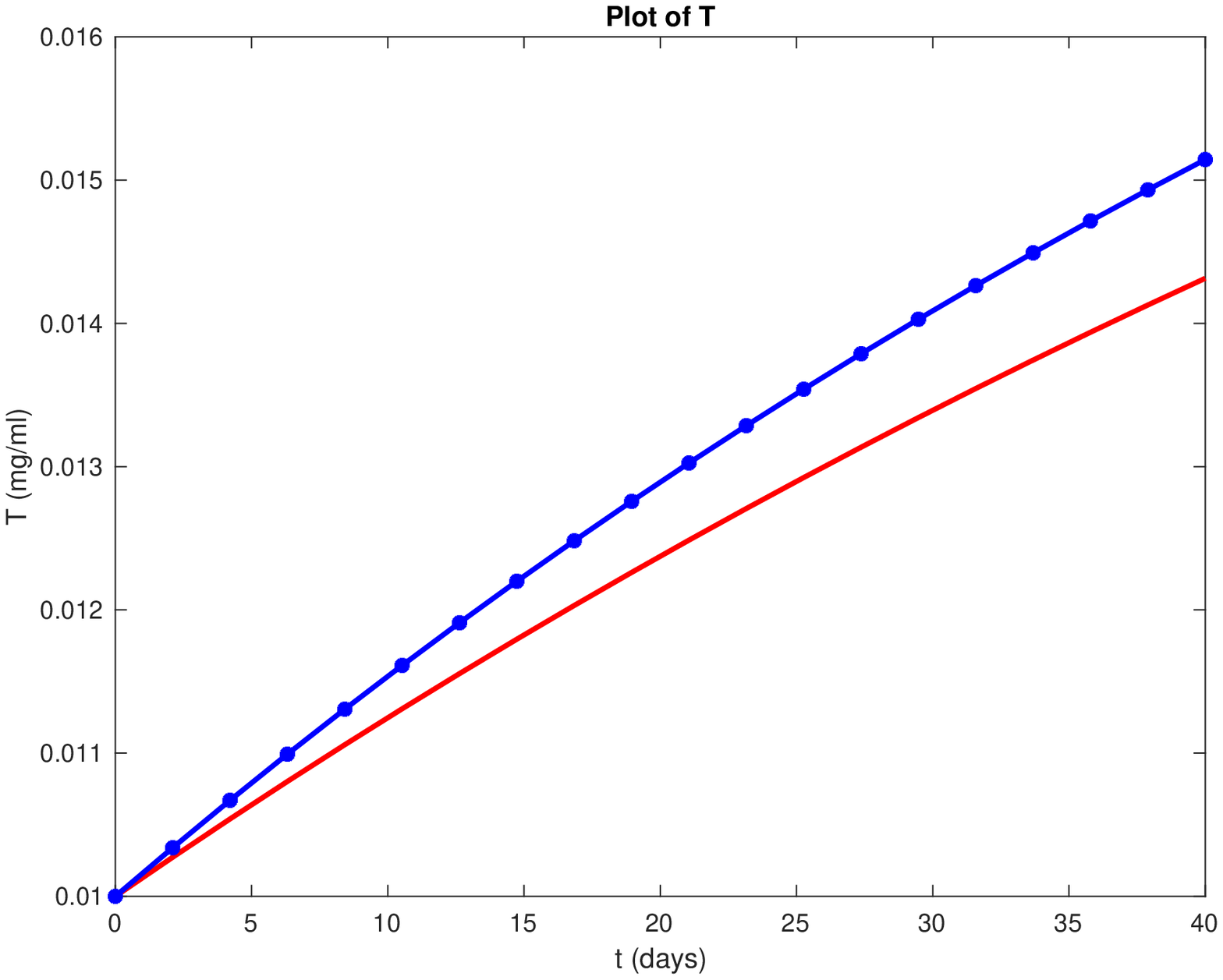}\label{T10}}
 \caption{Test Case 1: Monte-Carlo simulation and mean trajectory plots of the profiles of $V,B,T$ with $N=10$. }
    \label{fig:test_case1}
  \end{figure}
  
 The obtained optimal parameter set $\bT^*$ for $N=10$ is  $(0.9349,    0.042,    0.1478,    0.8505, 0.18,    0.6726)$ and with $N=20$ is $( 1.0359  , 0.0383,    0.1940,    1.0877,    0.2000,   0.5775)$. We observe that the optimal parameter estimates get closer to the true parameter values with the increase in $N$, which is expected due to availability of additional data. Figure \ref{fig:test_case1} shows the 50 Monte-Carlo (MC) simulation plots and profiles of $V,B,T$ with $N=10$. The plots in the first row show the Monte-Carlo simulations for the solution of the stochastic ODE \eqref{eq:ItoODE} with the true parameter set. The plots in the second row show the mean trajectories of $V,B,T$ with the true and optimal parameter set. The blue curve in each of the figures show the plot of the true mean value of the corresponding random variables $(V,B,T)$. The red curve shows the plot of the mean value of the corresponding variables obtained by solving the \eqref{eq:ItoODE} with the optimal parameter set. The Monte-Carlo simulation shows a large variance in the data due to measurement errors. Using such a data and the FP framework, the optimal parameter set $\bT^*$ leads to the mean value of the variables being close to the true mean value. This demonstrates that even in the presence of significant variations in the measurement data, the obtained optimal parameter set is close to the true values and provides accurate mean values of the random variables $(V,B,T)$.

\begin{figure}[H]
\centering
\subfloat[$V$ - MC plots]{\includegraphics[width=0.35\textwidth, height=0.30\textwidth]{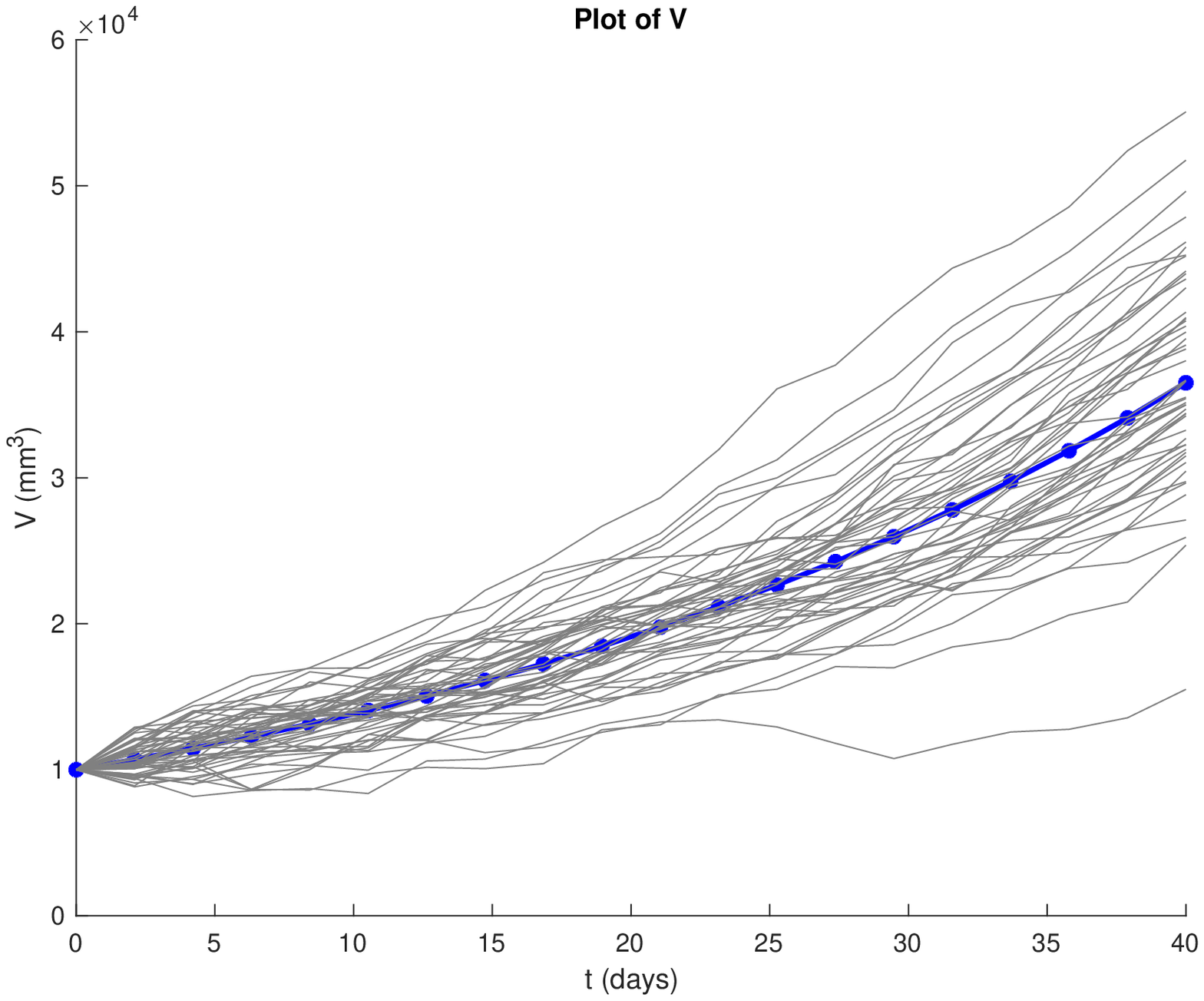}\label{V20_MC}}
\subfloat[$B$ - MC plots]{\includegraphics[width=0.35\textwidth, height=0.30\textwidth]{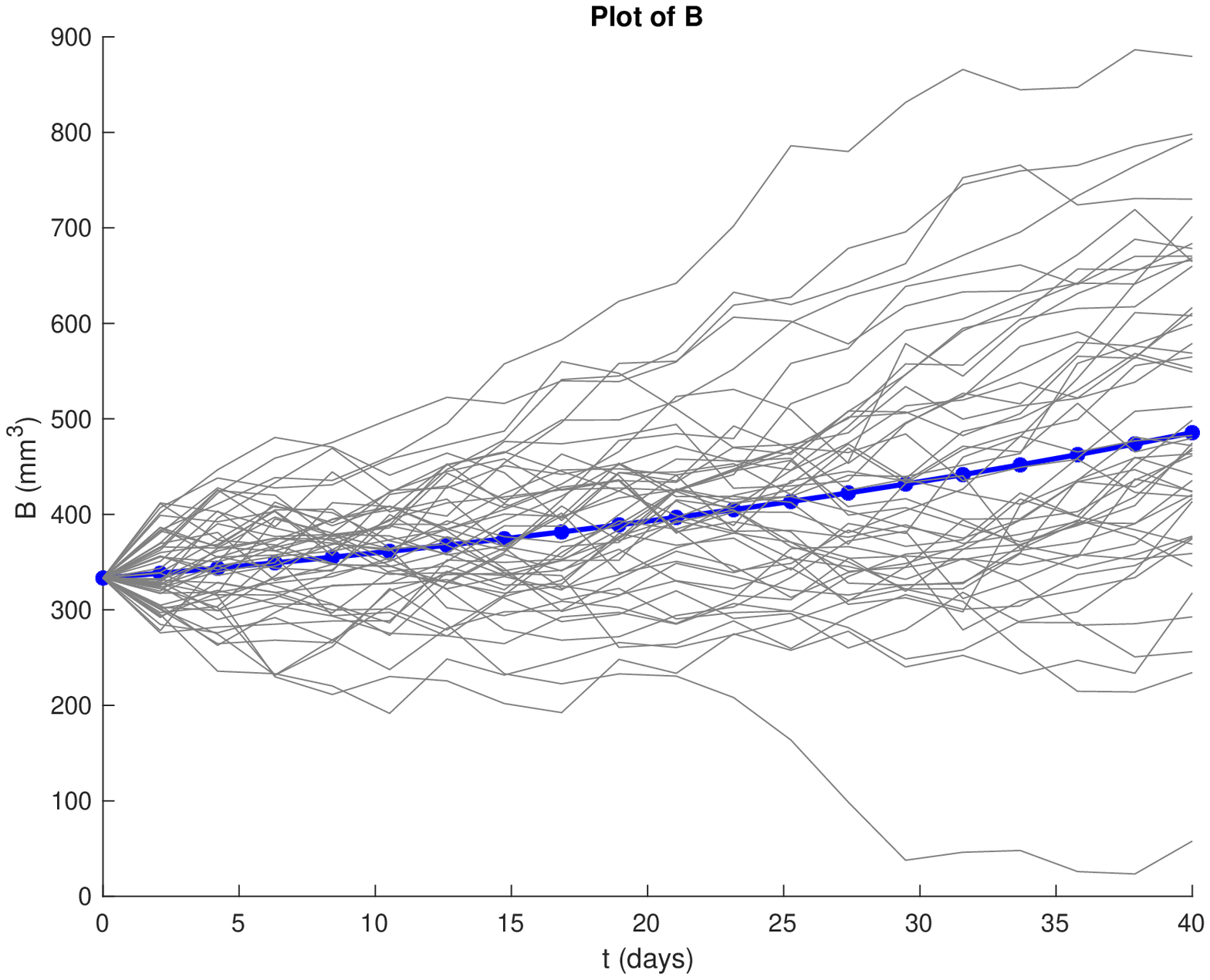}\label{B20_MC}}
\subfloat[$T $- MC plots]{\includegraphics[width=0.35\textwidth, height=0.30\textwidth]{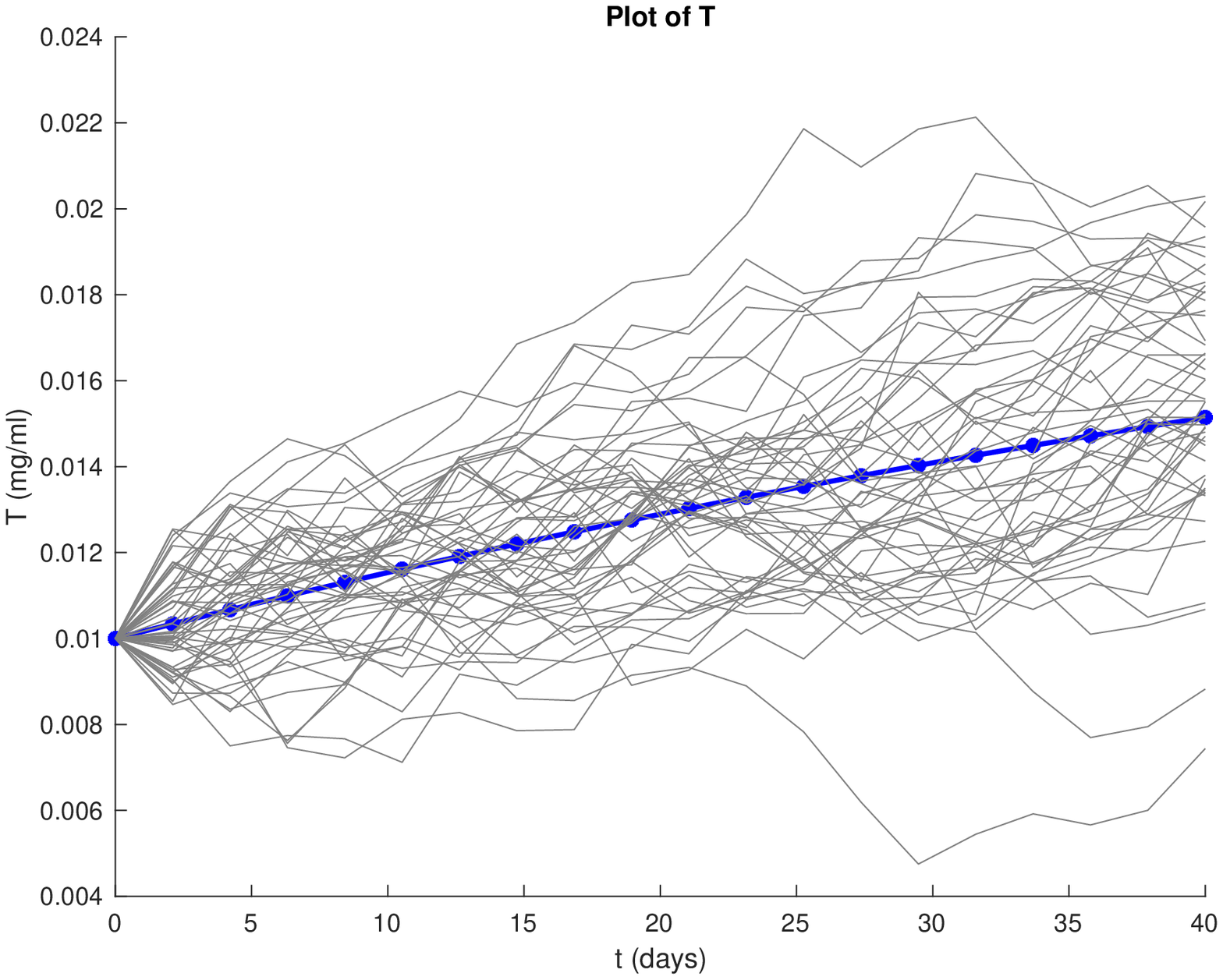}\label{T20_MC}}\\
\subfloat[$V$]{\includegraphics[width=0.35\textwidth, height=0.30\textwidth]{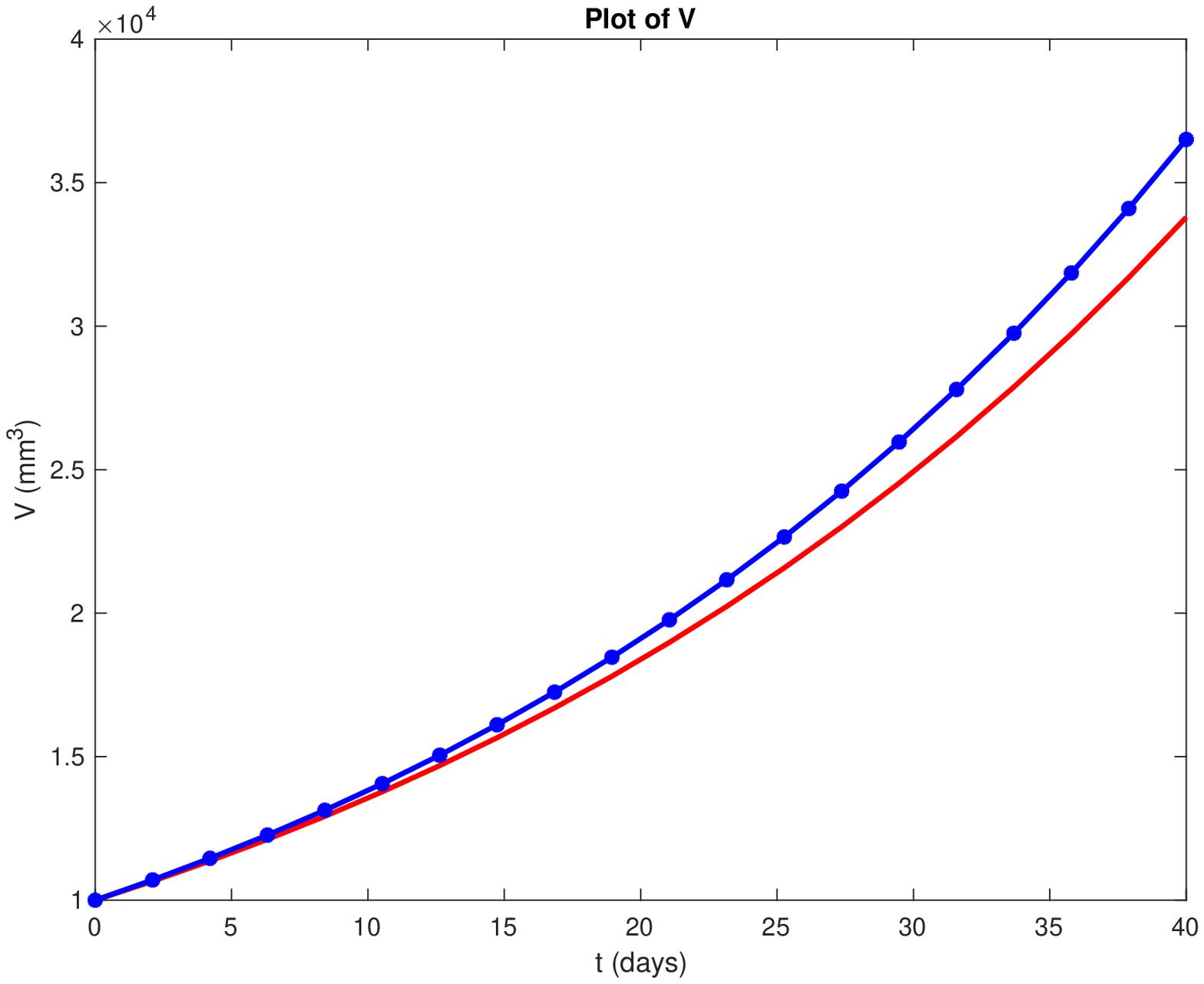}\label{V20}}
\subfloat[$B$]{\includegraphics[width=0.35\textwidth, height=0.30\textwidth]{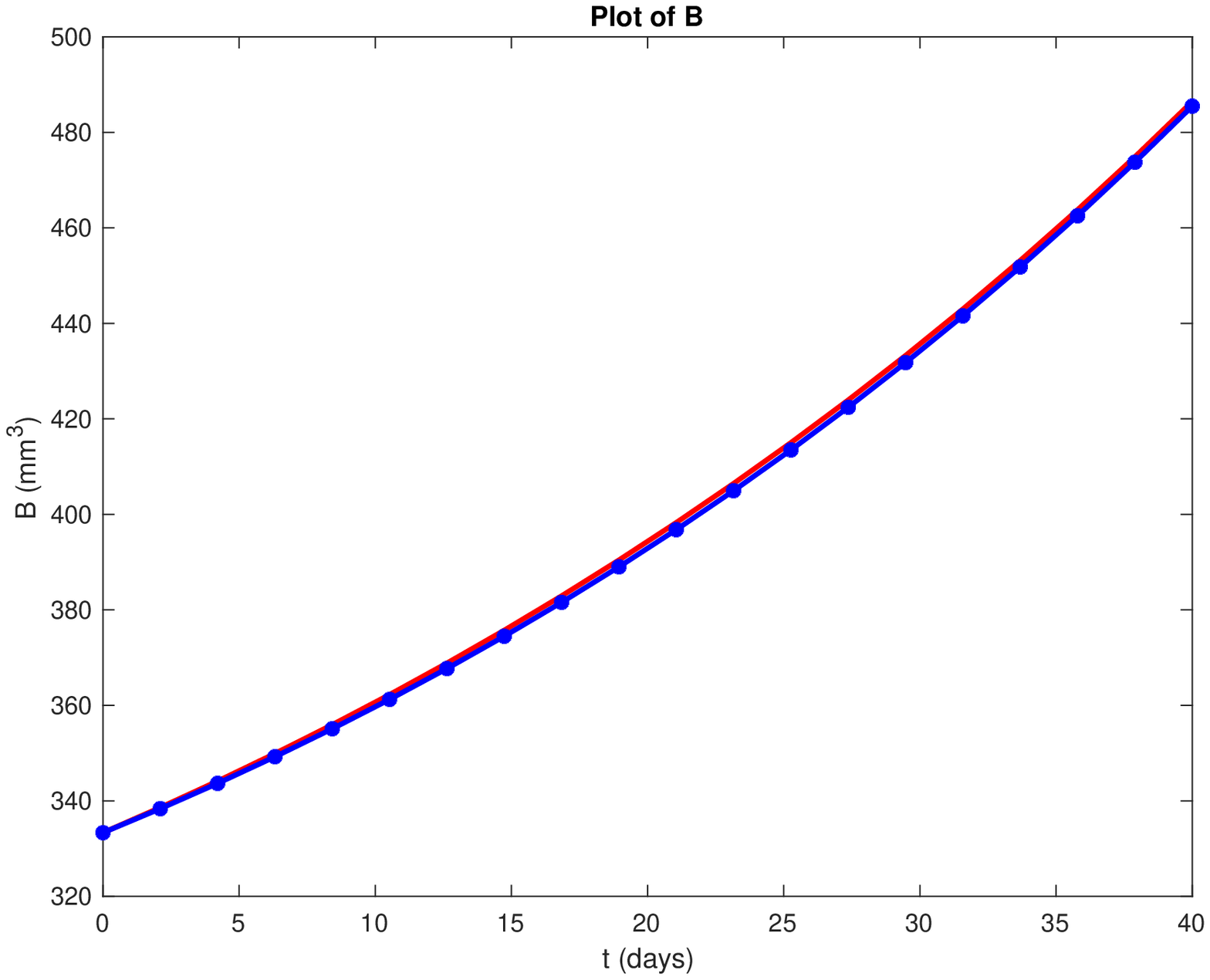}\label{B20}}
\subfloat[$T$]{\includegraphics[width=0.35\textwidth, height=0.30\textwidth]{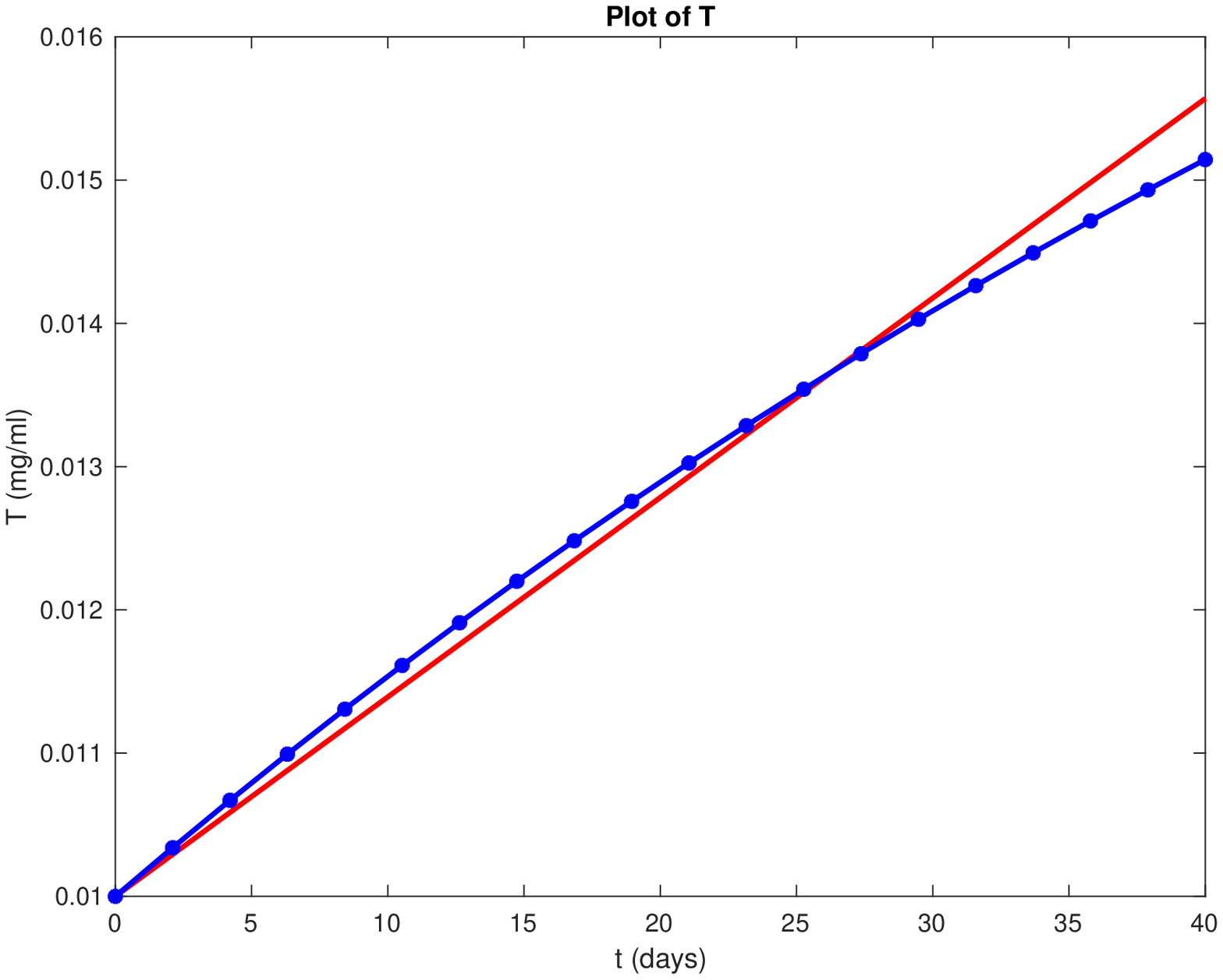}\label{T20}}
 \caption{Test Case 1: Monte-Carlo simulation and mean trajectory plots of the profiles of $V,B,T$ with $N=20$. }
    \label{fig:test_case1_20}
  \end{figure}
  Furthermore, in Figure \ref{fig:test_case1_20}, we have the simulations for $N=20$. We observe that with an increase in the number of data points, the obtained mean value of the random variables get closer to the true mean value.
  
Next, we run the LHS-PRCC algorithm and investigate the sensitivity of the optimal parameter set with respect to the tumor volume $V$ at the final time $T=4$. For this purpose, we assume each parameter to follow a Weibull distribution \cite{Pal2020,Pal2021} and consider the number of equiprobable intervals, $M$, to be 100. The $p-$values and the PRCC values for the cases $N=10$ and $N=20$ are given in Table \ref{tab:pval1}.

\begin{table}[H]
\begin{center}
\begin{tabular}{V{3}cV{3}lrV{3}lrV{3}}
\hlineB{3}
\textbf{Parameter} & $N=10$ & & $N=20$ & \\ \cline{2-3} \cline{4-5}
& $p-$value & PRCC value & $p-$value & PRCC value\\
\hlineB{3}
$c$  &       2.652e-30 &  0.870                    & 1.842e-29 &  0.864 \\
\hline
$c_e$  &   0.629 &-0.050                            & 0.913 &0.011   \\
\hline
$c_v$  &   0.302 & 0.107                            & 0.622 &-0.051   \\
\hline
$c_T$  &   0.840 & 0.021                            & 0.368 &0.093  \\
\hline
$q_T$  &   0.696 & 0.041                           & 0.125 &0.158  \\
\hline
$\gamma$  &  1.301e-44 & 0.938              & 2.933e-48 &0.949\\
\hlineB{3}
\end{tabular}
\end{center}
\caption{$p-$values and PRCC values for the optimal parameter set $\bT^*$}
\label{tab:pval1}

\end{table}
We observe that the parameters $c$ and $\gamma$ have $p-$values close to 0 and high PRCC values, which make them the most sensitive variables with respect to the tumor volume $V$. This is expected as these two parameters directly influence the rate of increase of $V$. As far as the other parameters are concerned, we note that their $p-$values are high and PRCC values are low. Hence, these parameters are not sensitive to the output $V$. Between the parameters $c$ and $\gamma$, since the PRCC value of $\gamma$ is higher than that of $c$, we can say that the parameter $\gamma$ is more sensitive to $V$ than the parameter $c$. 

We next determine the optimal treatment strategies for curing the colon cancer patient. For this purpose, we note that since $c,\gamma$ are the most sensitive parameters with respect to the tumor volume $V$, it is enough to use a combination drug that can control the effects of $c,\gamma$. So we test a combination therapy comprising of Bevacizumab and Capecitabine that directly affects $T$ and $V$, and so we don't consider $u_2$ in \eqref{eq:ODE} by setting $\alpha_2 = 0$. This is because the joint effect of $c\gamma$ is similar for both $V$ and $B$ of \eqref{eq:ODE}, and it is enough to consider a drug that controls either $V$ or $B$ directly. 

We start off with the values of $(\bar{V},\bar{B},\bar{T})$, as obtained in Figure \ref{fig:test_case1_20} at the final non-dimensionalized time $t=4$. This stage is represented by the presence of colon cancer in the patient. Our target is to use the optimal combination therapy to bring down the levels of $(\bar{V},\bar{B},\bar{T})$ to (1,1,1) (that represents a cancer-free stage) after dosage administration for a subsequent non-dimensionalized time period of $t\in[0,1.4]$, which corresponds to the actual time of $\tau=14$ days. This time period corresponds to the standard combination treatment cycle with Bevacizumab and a chemotherapy drug for colon cancer \cite{Chong2010}. Thus, we consider $f_d(x)$ to be a Gaussian centered about $(1,1,1)$ at time $t=1.4$. We next solve the minimization problem \eqref{eq:second_min_problem} to obtain the optimal profiles for $u_1$ and $u_3$, representing dosages of Capecitabine and Bevacizumab, respectively. The values of $\alpha_1,\alpha_3$ are chosen to be $2.135 \cdot 10^{-6}$ \cite{Frances2011} and 0.16 \cite{Cser2019}, respectively. The maximum tolerable dosage for Capecitabine is taken to be $1000$ mg/m$^2$/day and for Bevacizumab is taken to be $0.7$ mg/kg/day, which implies, $D_1 = 0.02135$ and $D_3 = 1.12$.

\begin{figure}[H]
\centering
\subfloat[$V$]{\includegraphics[width=0.35\textwidth, height=0.30\textwidth]{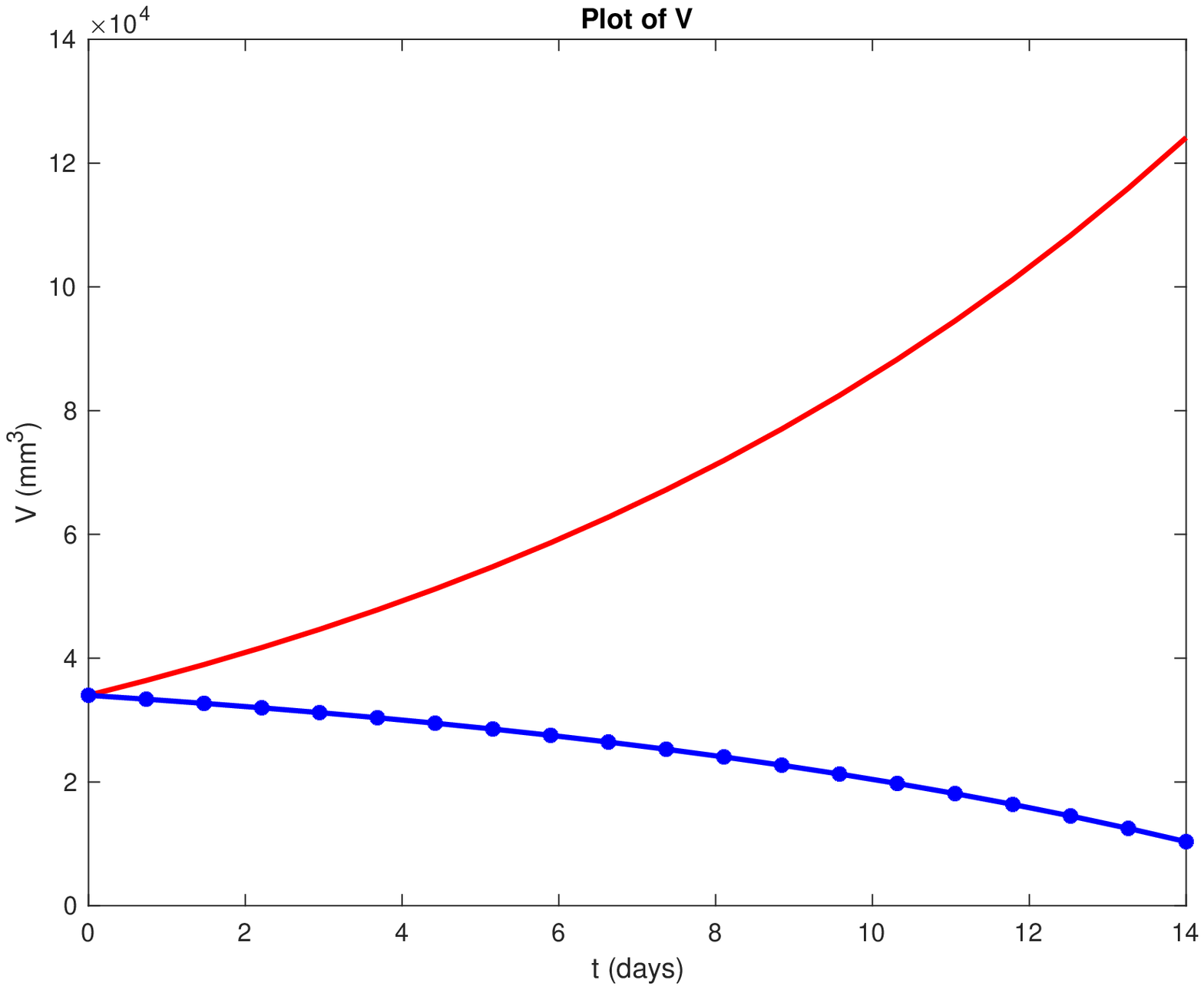}\label{V20_t}}
\subfloat[$B$]{\includegraphics[width=0.35\textwidth, height=0.30\textwidth]{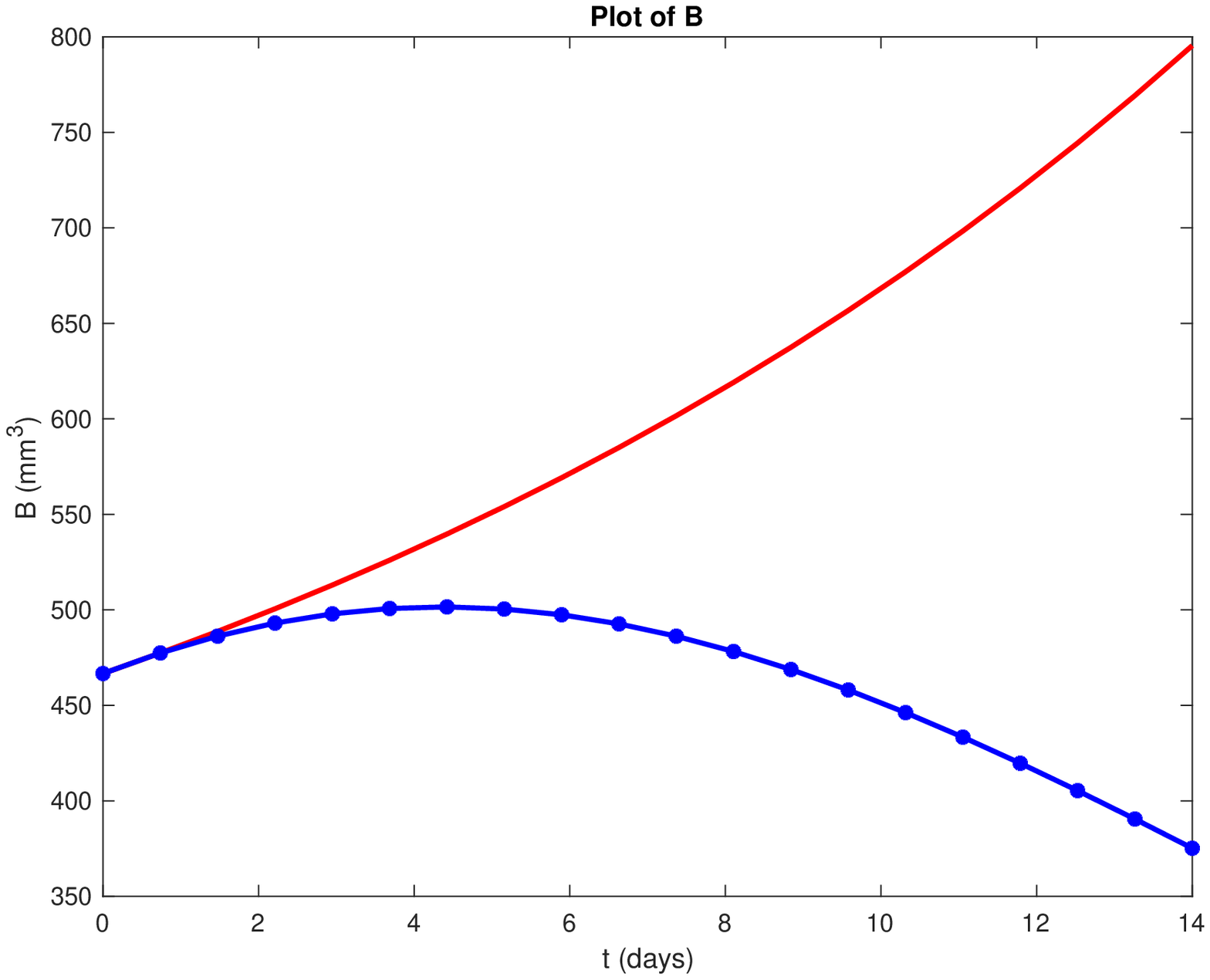}\label{B20_t}}
\subfloat[$T$]{\includegraphics[width=0.35\textwidth, height=0.30\textwidth]{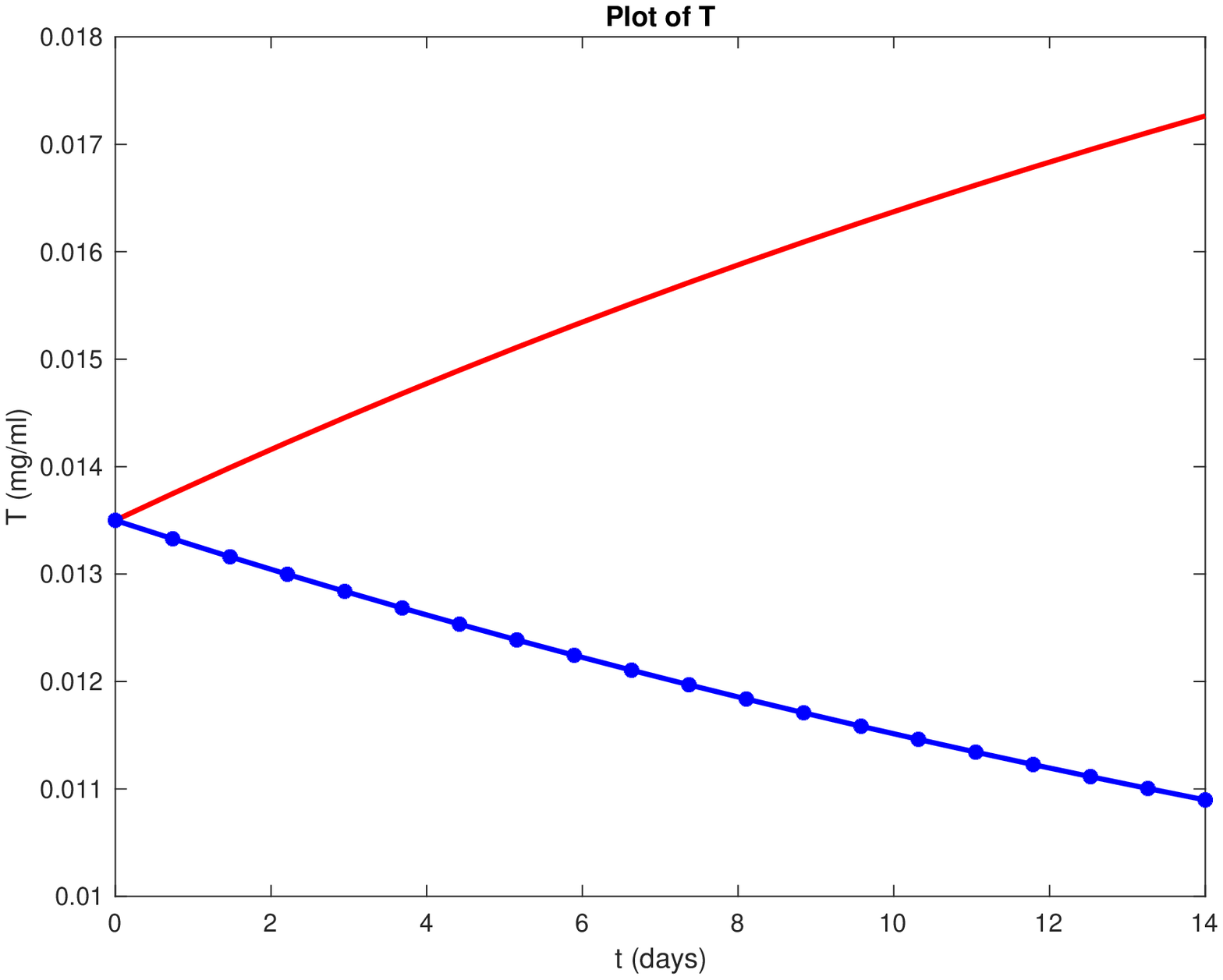}\label{T20_2}}
 \caption{Test Case 1: Mean trajectory plots of the profiles of $V,B,T$ with and without the combination therapy. Red curves indicate the profiles without treatment, blue curves indicate the profiles with treatment }
    \label{fig:test_case1_treat}
  \end{figure}

Figure \ref{fig:test_case1_treat} shows the plots of the mean PDFs with and without treatment. The red curves denotes profiles of $V,B,T$ without treatment, whereas the blue curves denotes profiles with the effect of treatments. We clearly note that in the absence of the combination therapy, the values of $V,B,T$ clearly rise uncontrolled, which means that the tumor is rapidly spreading. On the administration of the combination therapy, we observe the control of $V,B,T$ to the desired cancer-free state. Figure \ref{fig:test_case1_u} shows the dosage patterns of Capecitabine and Bevacizumab. We note that initially, Capecitabine is administered with higher dosages but over time the dosage is significantly reduced. On the other hand, the dosage of Bevacizumab is low during the initial phases of the treatment and subsequently increases with the decrease of Capecitabine.

\begin{figure}[H]
\centering
\subfloat[$u1$: Capecitabine]{\includegraphics[width=0.35\textwidth, height=0.30\textwidth]{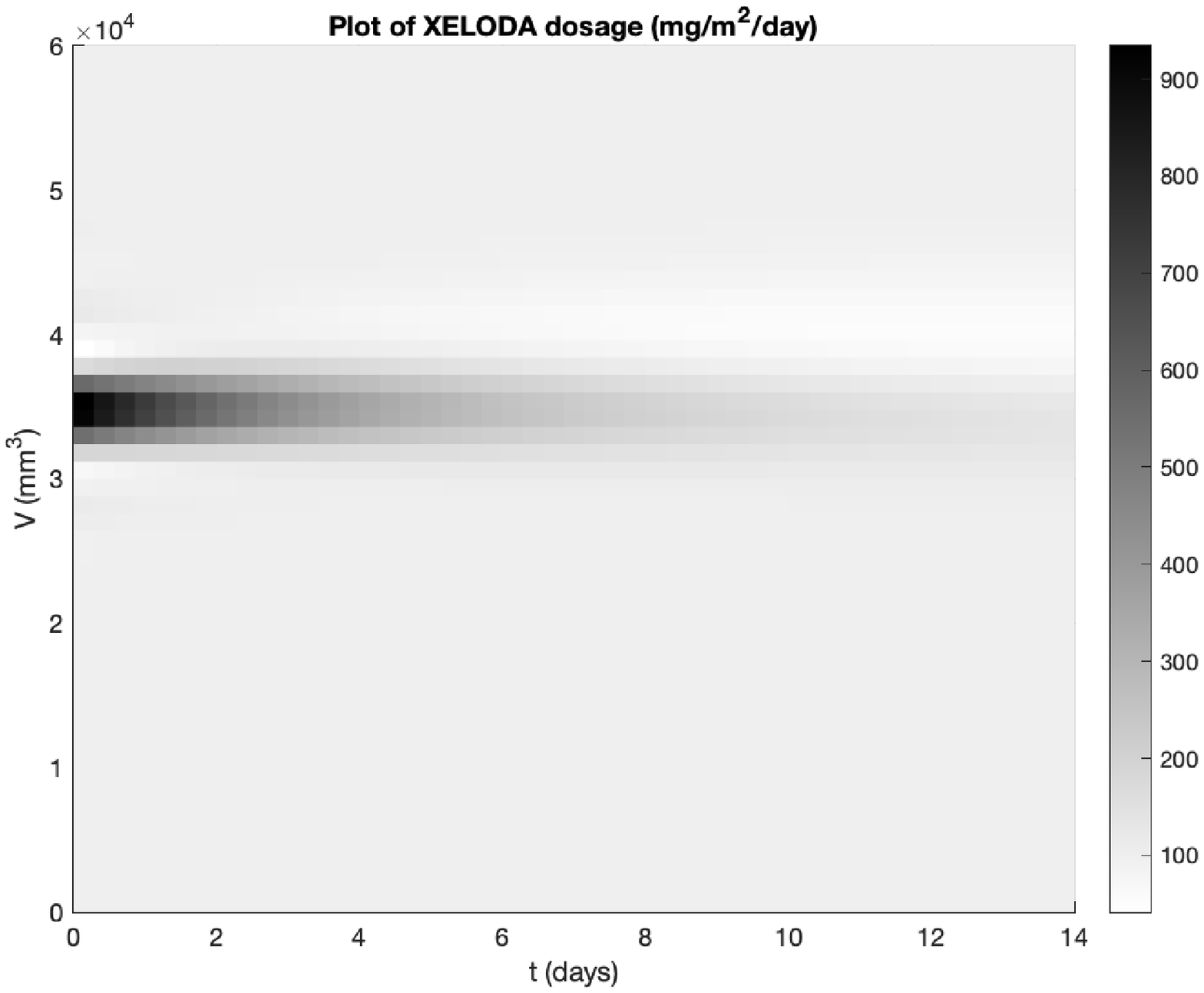}\label{u1}}
\subfloat[$u3$: Bevacizumab]{\includegraphics[width=0.35\textwidth, height=0.30\textwidth]{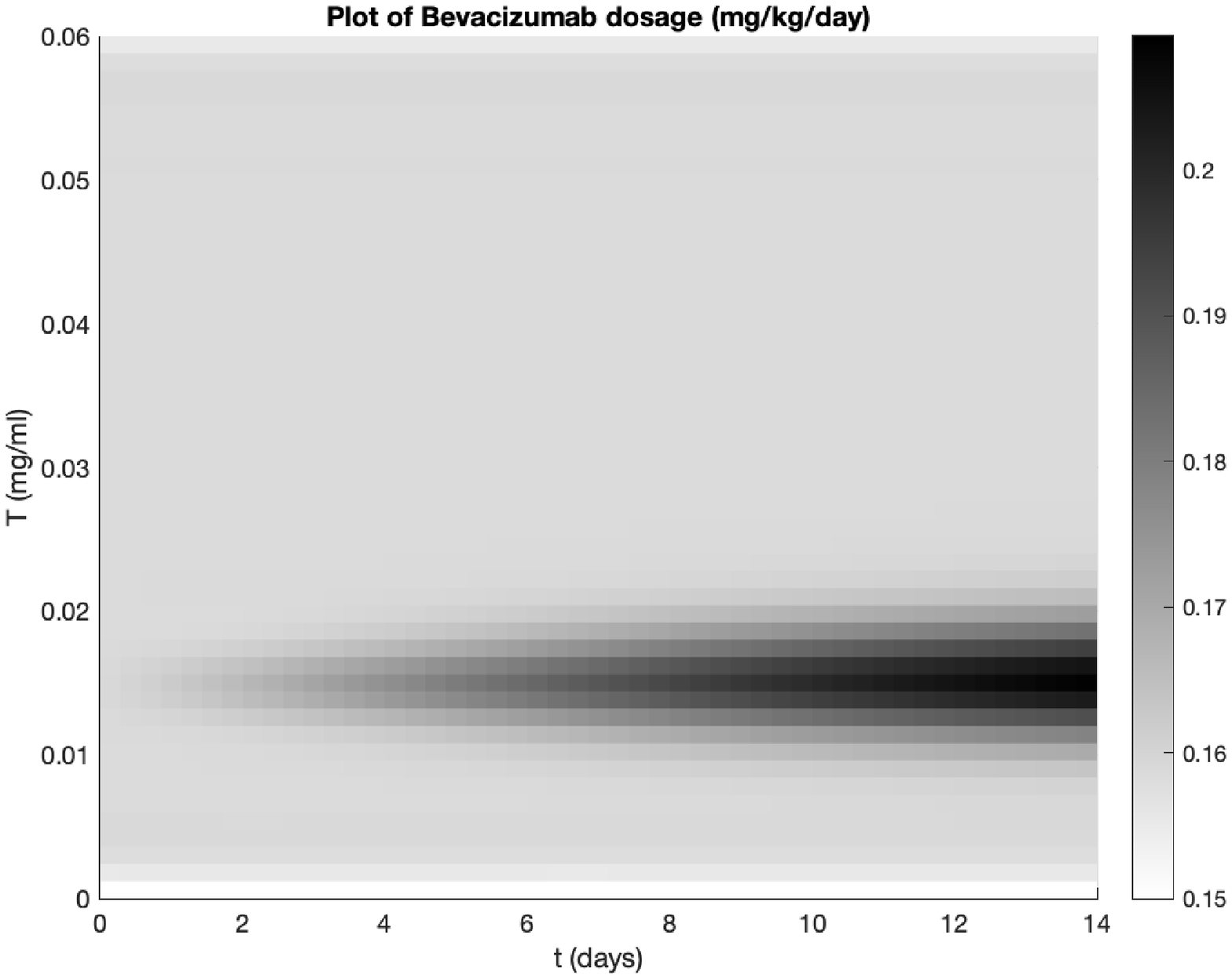}\label{u2}}
 \caption{Test Case 1: Feedback optimal combination treatment profiles }
    \label{fig:test_case1_u}
  \end{figure}

\subsection{Test Case 2: Real data}
 In Test Case 2, we use real data based on experiments in \cite{Cser2019,Sapi2015}. In \cite{Sapi2015}, mice specimens were transplanted subcutaneously with C38 colon adenocarcinoma, and small animal MRI was used to measure the tumor volume $\bar{V}$ in days 3, 5, 7, 9, 11, 15, 17, 19, 21, 23. The corresponding data for $\bar{B}$ and $\bar{T}$ were taken from \cite{Cser2019}. With each of the the data points $(\bar{V}_i,\bar{B}_i,\bar{T}_i)$ as the mean of PDFs $f_i^*$, with variance 0.05, the mathematical data function $f^*(x,t)$ was generated using 4D interpolation. For implementing our NCG algorithm, we used the time interval $t = [0,2.3]$. The values of the constants used in converting the ODE system \eqref{eq:ODE} to its non-dimensional form given in \eqref{eq:NODE} are given as $k_1 = \frac{1}{4},~k_2 = 50,~k_3 = 100,~k_4 = \frac{1}{10}$.

\begin{figure}[H]
\centering
\subfloat[$V$]{\includegraphics[width=0.35\textwidth, height=0.30\textwidth]{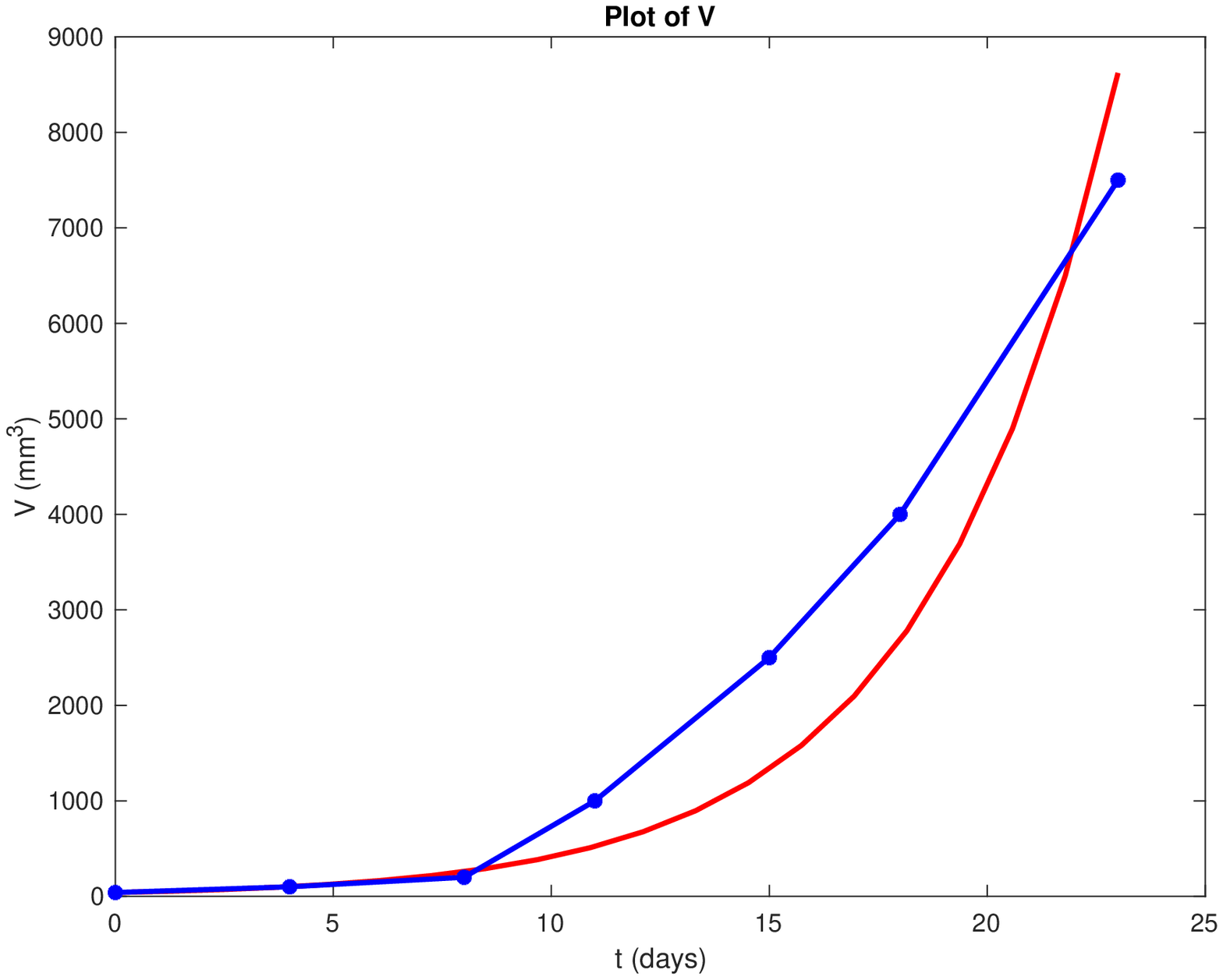}\label{Vreal}}
\subfloat[$B$]{\includegraphics[width=0.35\textwidth, height=0.30\textwidth]{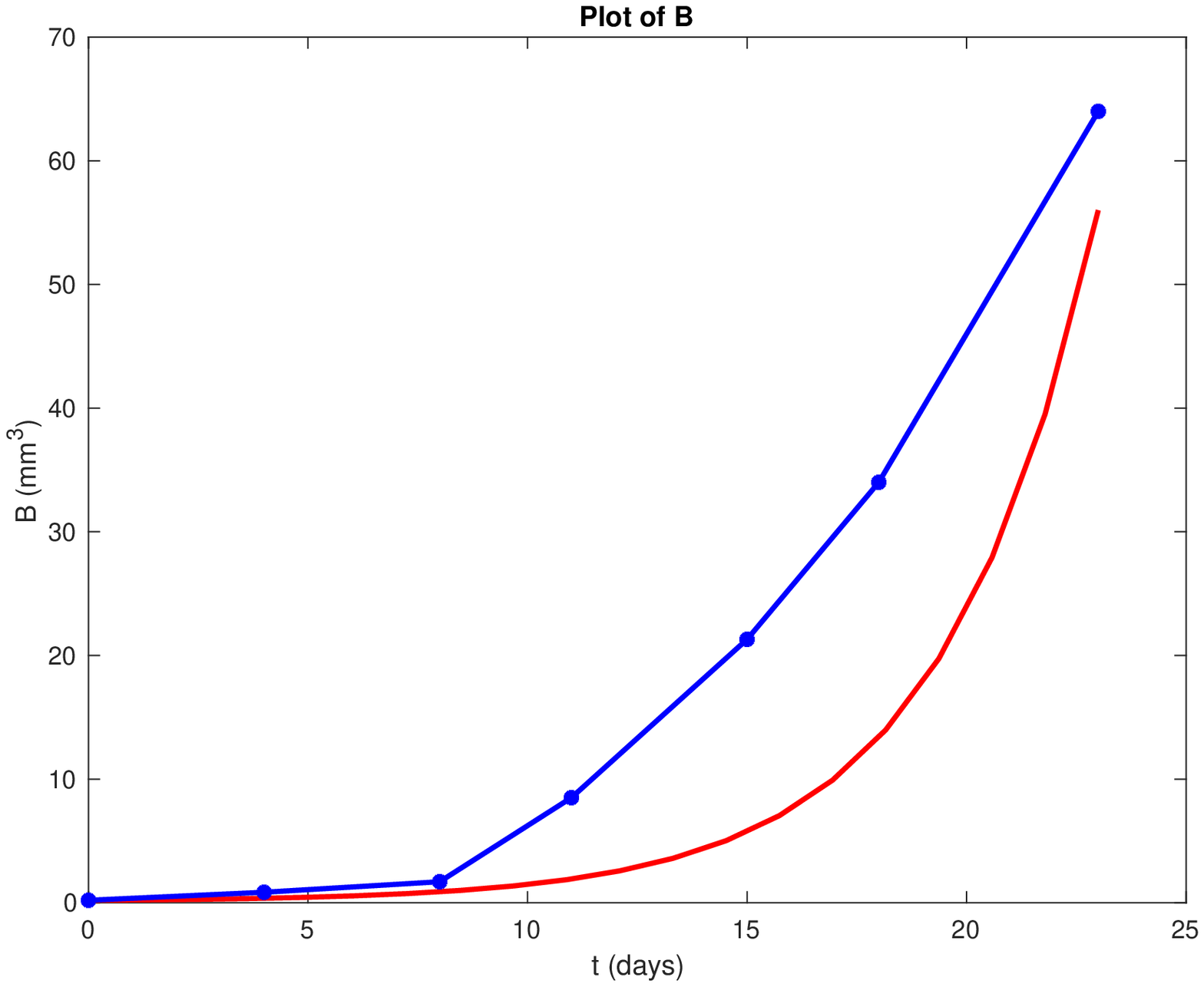}\label{Breal}}
\subfloat[$T$]{\includegraphics[width=0.35\textwidth, height=0.30\textwidth]{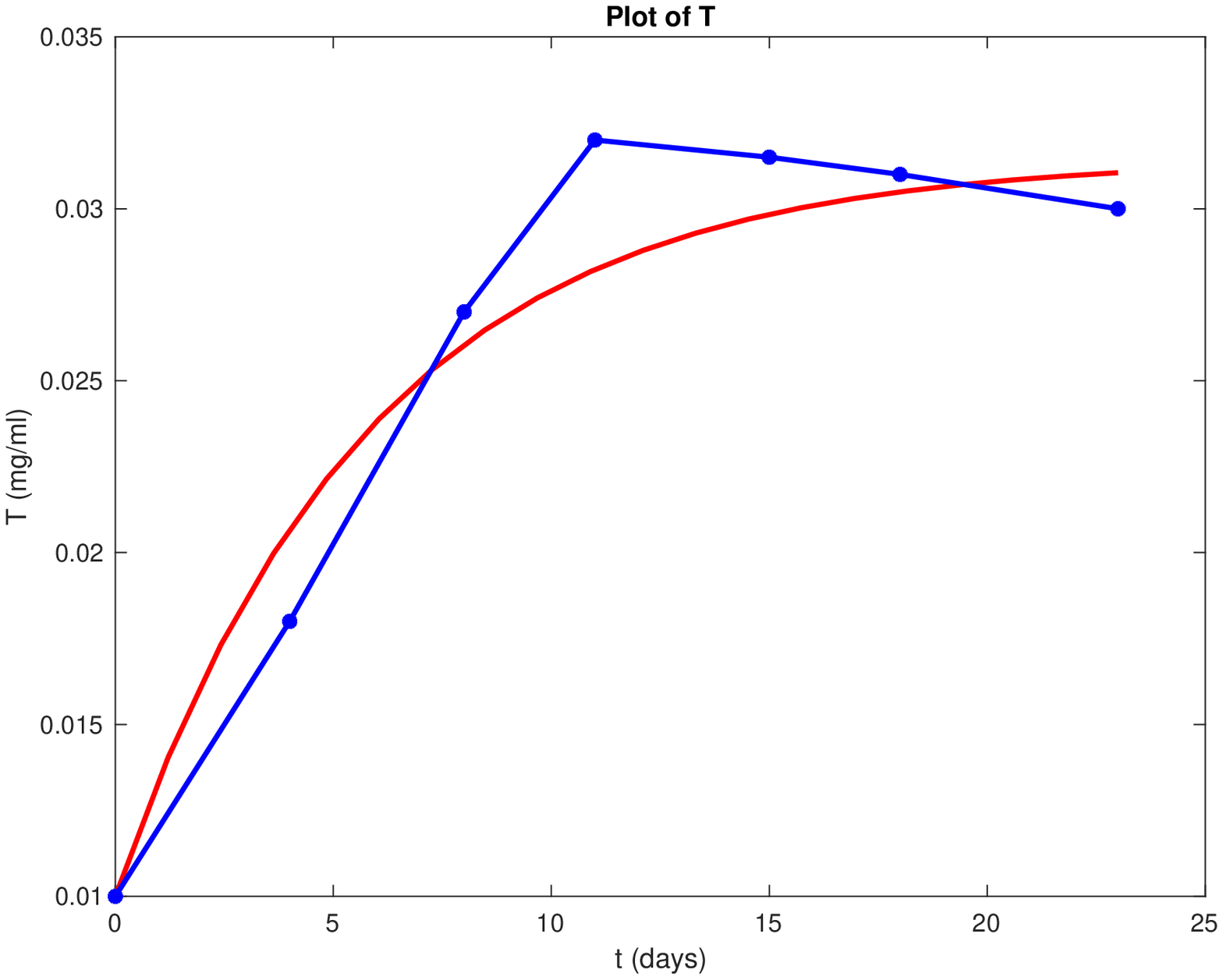}\label{Treal}}
 \caption{Test Case 2: Mean trajectory plots of the profiles of $V,B,T$ with 10 real data set points. }
    \label{fig:test_case2}
  \end{figure}
 
 The obtained value of the optimal parameter set is $\bT^* = (1.5000,   0.067,    0.2000,    1.2227,    0.3855,    0.6000)$. Figure \ref{fig:test_case2} shows the plots of the mean value of the variables $V,B,T$ using the optimal parameter set (red curve) and the true dataset points linearly interpolated (blue curve). We again observe that the obtained optimal parameter values lead to a good fit of the mean variable values to the dataset. The results of the LHS-PRCC analysis, based on Weibull density for each parameter and number of equiprobable intervals to be 100, are presented in Table \ref{tab:pval2}. From the $p-$values, it is clear that the parameters $c$ and $\gamma$ are highly sensitive to the tumor volume $V$ at the final time. From the PRCC values, we can conclude that the parameter $\gamma$ is more sensitive to $V$ than the parameter $c$.
 
 \begin{table}[H]
 
\begin{center}
\begin{tabular}{V{3}cV{3}cV{3}cV{3}}
\hlineB{3}
\textbf{Parameter} &$p-$value & PRCC value  \\
\hlineB{3}
$c$  & 7.166e-33 &  0.886\\
\hline
$c_e$  & 0.557 &-0.061  \\
\hline
$c_v$  & 0.287& 0.110  \\
\hline
$c_T$  &  0.179& 0.139 \\
\hline
$q_T$  &  0.311& -0.105 \\
\hline
$\gamma$  &  3.040e-66 & 0.979 \\
\hlineB{3}
\end{tabular}
\end{center}
\caption{$p-$values and PRCC values for the optimal parameter set $\bT^*$ corresponding to the real data}
\label{tab:pval2}

\end{table}

Since $c,\gamma$ are again determined to be the most sensitive parameters with respect to $V$, we again determine optimal combination therapies using Capecitabine and Bevacizumab as in Test Case 1.

\begin{figure}[H]
\centering
\subfloat[$V$]{\includegraphics[width=0.35\textwidth, height=0.30\textwidth]{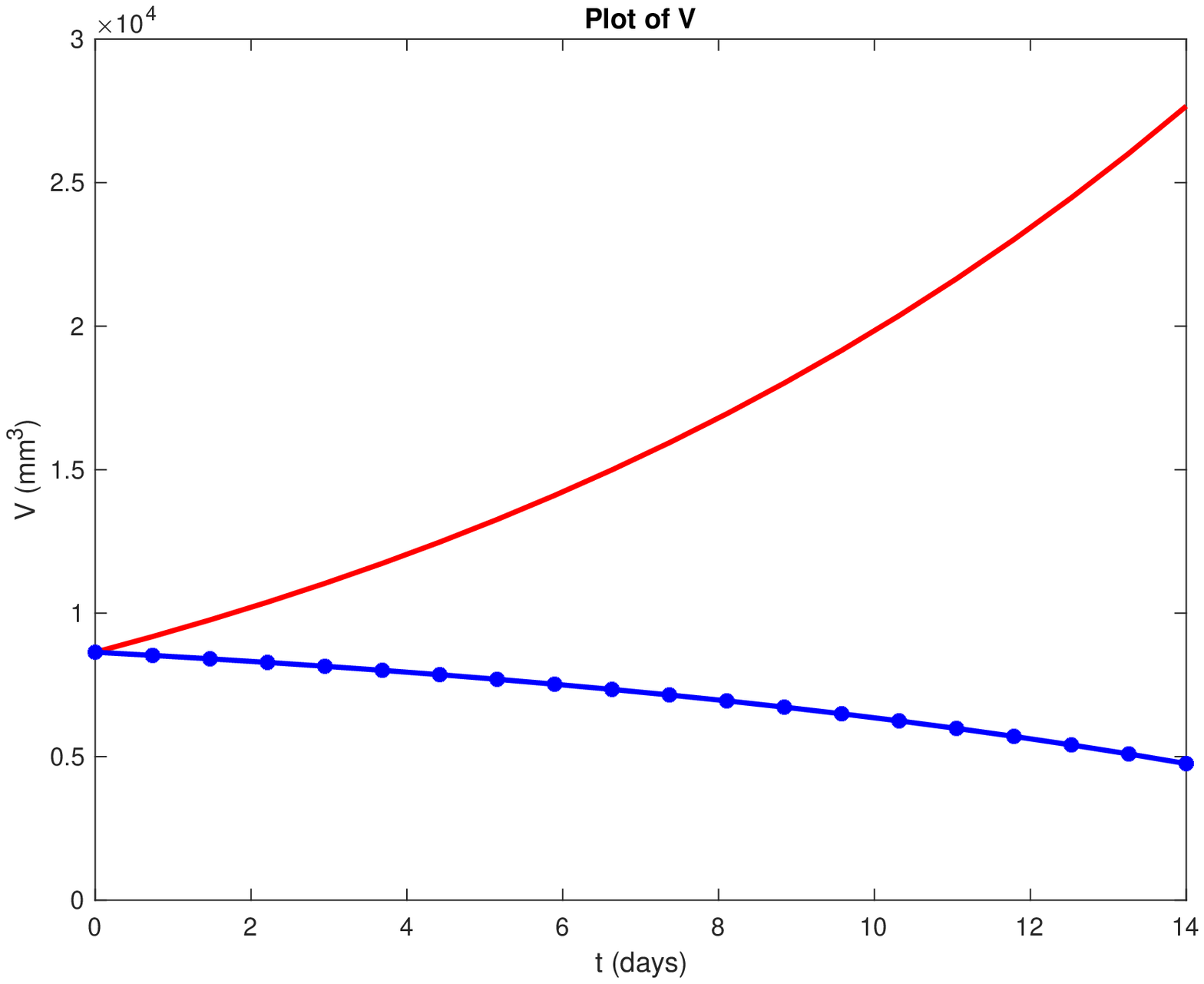}\label{V20_t2}}
\subfloat[$B$]{\includegraphics[width=0.35\textwidth, height=0.30\textwidth]{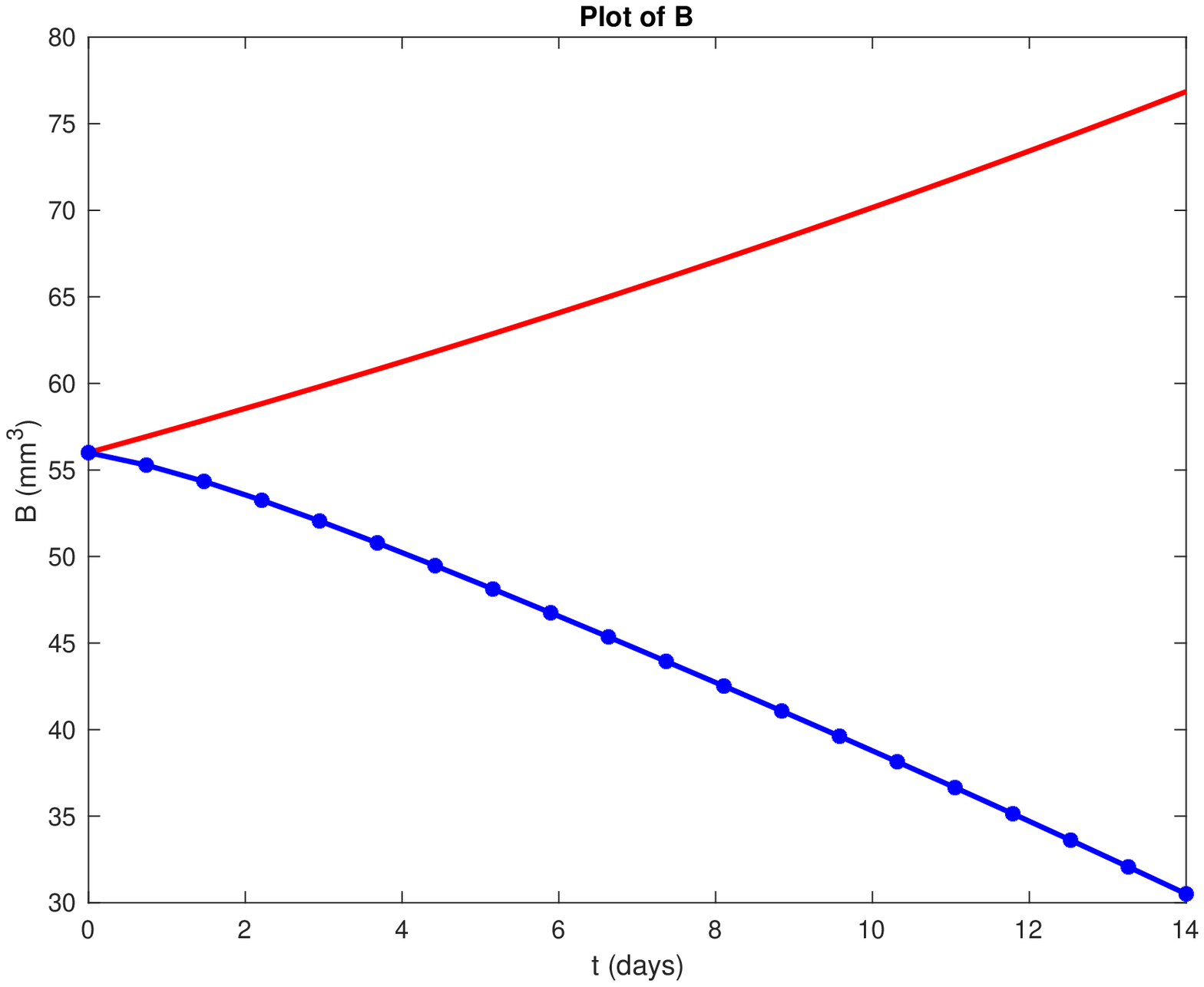}\label{B20_t2}}
\subfloat[$T$]{\includegraphics[width=0.35\textwidth, height=0.30\textwidth]{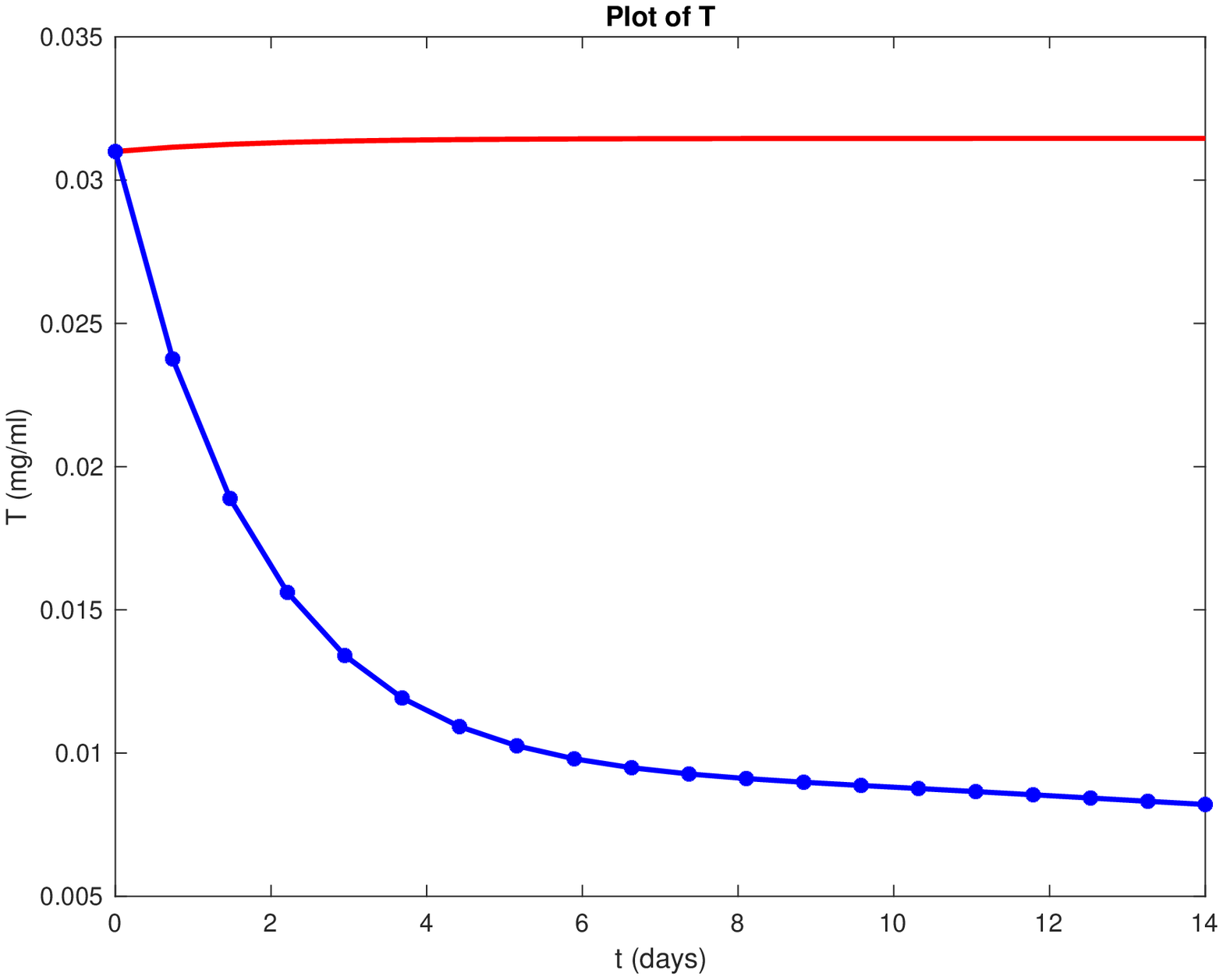}\label{T20_22}}
 \caption{Test Case 2: Mean trajectory plots of the profiles of $V,B,T$ with and without the combination therapy. Red curves indicate the profiles without treatment, blue curves indicate the profiles with treatment }
    \label{fig:test_case2_treat}
  \end{figure}

Figure \ref{fig:test_case2_treat} shows the plots of the mean PDFs with and without treatment. The red curves denotes profiles of $V,B,T$ without treatment, whereas the blue curves denotes profiles with the effect of treatments. We again note that in the absence of the combination therapy, the values of $V,B,T$ clearly rise uncontrolled, but on the administration of the combination therapy, $V,B,T$ is driven to the desired cancer-free state. Figure \ref{fig:test_case2_u} shows the dosage patterns of Capecitabine and Bevacizumab. As in Test Case 1, we note that initially, Capecitabine is administered with higher dosages but over time the dosage is significantly reduced. On the other hand, the dosage of Bevacizumab is low during the initial phases of the treatment and subsequently increases with the decrease of Capecitabine.

\begin{figure}[h]
\centering
\subfloat[$u1$: Capecitabine]{\includegraphics[width=0.35\textwidth, height=0.30\textwidth]{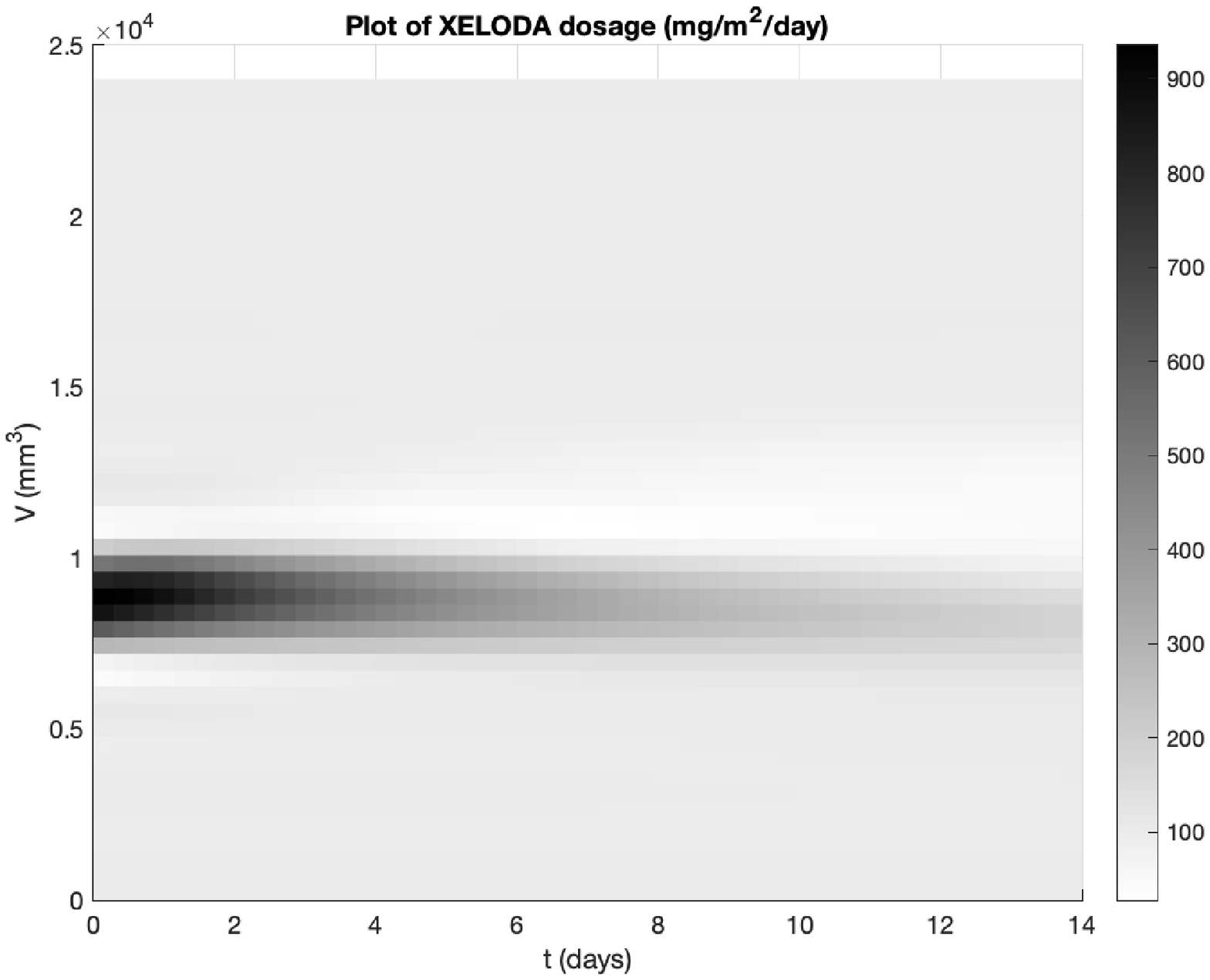}\label{u11}}
\subfloat[$u3$: Bevacizumab]{\includegraphics[width=0.35\textwidth, height=0.30\textwidth]{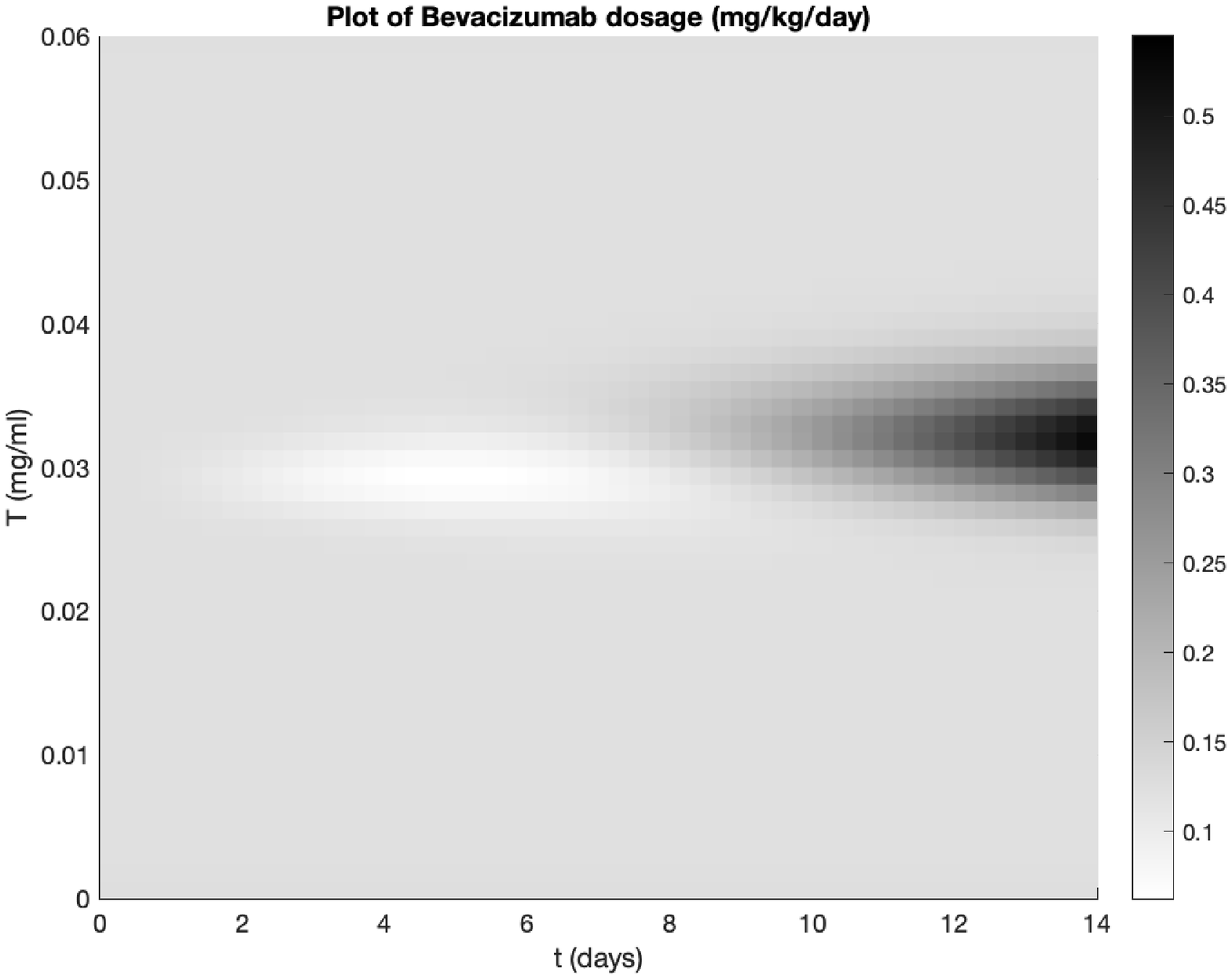}\label{u22}}
 \caption{Test Case 2: Feedback optimal combination treatment profiles for the real data case}
    \label{fig:test_case2_u}
  \end{figure}

\subsection{Discussion}
From the two aforementioned experiments, we note that a feedback combination strategy of Bevacizumab and Capecitabine provides an optimal treatment regime for colon cancer. This is consistent with previous experimental studies (e.g., see \cite{Ilic2016}), where it has been demonstrated that such combination therapies are potentially more effective than single therapies. However, it has also been noted (e.g., see \cite{Mohile2013}) that Bevacizumab, in combination with other chemotherapy drugs, might elevate secondary risk factors, like higher bleeding risk, proteinuria, hypertension, and arterial thromboses, especially for older people. Thus, it is quite important to administer a low dosage of both Bevacizumab and Capecitabine. From the numerical results, we see that for controlling colon cancer-induced angiogenesis, the maximum dosage of Capecitabine required was approximately 900 mg/m$^2$/day and of Bevacizumab was approximately 0.55 mg/kg/day. This is significantly lower than the current acceptable practice of administering 1250 mg/m$^2$/day dosage level of Capecitabine and 10 mg/kg repeated over a two week period. Thus, a feedback strategy and daily dosage administration results in a lower toxicity level and prevention of secondary risk factors. This observation is consistent with the results presented in \cite{Cser2019}. Finally, we observe from the numerically results that the concentration of Capecitabine decreases over time, and Bevacizumab takes over as the primary treatment mode till the end of the treatment cycle. This suggests that once colon cancer is controlled, it might be feasible to use Bevacizumab as the primary treatment in combination with a small chemotherapy dosage after the initial cycle (see https://www.avastin.com/patient/mcrc/treatment/length-of-time.html).

\section{Conclusion}
 
 In this paper, we presented a new framework for obtaining personalized optimal treatment strategies in colon cancer-induced angiogenesis. We considered the angiogenic pathway dynamics proposed in \cite{Cser2019} and extended it to a stochastic process to account for the random perturbations. We characterized the stochastic process using the PDF, whose evolution is governed by the FP equation. The coefficients in the FP equation represented the unknown patient specific parameters that we estimate using the patient data, by formulating a PDE-constrained optimization problem. The numerical discretization of the FP equations were done using a time-splitting scheme and Chang-Cooper spatial discretization method. We proved the properties of conservativeness, positivity and second order convergence of the numerical scheme. We also solved the optimality system using a projected NCG scheme. Furthermore, we studied the sensitivity analysis of the optimal parameters with respect to the tumor volume using the LHS-PRCC method. This in turn, helped us to incorporate appropriate combination therapies into the FP model. We solved an optimal control problem to obtain the optimal combination therapy. Numerical experiments, involving Bevacizumab and Capecitabine, with synthetic data and real data using experimental mice demonstrates that optimal combination therapies for cure of colon cancer-induced angiogenesis can be obtained real-time with high accuracy.

\section*{Acknowledgments} 

S. Roy and S. Pal express their thanks to National Cancer Institute of the National Institutes of Health (Award Number R21CA242933) for supporting this research. The research of Pan laboratory has been supported by National Institutes of Health Grant (Award Number R01 CA185055)


\begin{thebibliography}{9999}


\bibitem{Algoul2010} S. Algoul, M. S. Alam, M. A. Hossein and M. A. A. Majumder, Feedback Control of Chemotherapy Drug Scheduling for Phase Specific Cancer
Treatment, \textsl{2010 IEEE Fifth International Conference on Bio-Inspired Computing: Theories and Applications (BIC-TA)}, 1443-1450, 2010.

\bibitem{Almeida2018} F. K. de Almeida and D. D. Rosa, Adjuvant Dose-Dense Chemotherapy for Breast Cancer: Available Evidence and Recent Updates, \textsl{Breat Care}, 13(6):447-452, 2018.

\bibitem{MA} M. Annunziato and A.  Borz{\`i}, A Fokker-Planck control framework for multidimensional 
stochastic process.
\textsl{Journal of Computational and Applied Mathematics}, 237:487--507, 2013.

\bibitem{Annun2021} M. Annunziato and A.  Borz{\`i}, A Fokker--Planck approach to the reconstruction of a cell membrane potential. \textsl{SIAM Journal on Scientific Computing}, 43(3):B623–B649, 2021.

\bibitem{Argy2016} K. Argyri, D. D. Dionysiou, F. D. Misichroni and G. S. Stamatakos. Numerical simulation of vascular tumour growth under antiangiogenic treatment: addressing the paradigm of single-agent bevacizumab therapy with the use of experimental data, \textsl{Biology Direct}, 11:12, 2016.

\bibitem{Bald1985}  D. Balding and D. L. S. McElwain. A mathematical model of tumour-inducedcapillary growth, \textsl{Journal of Thoeretical Biology},114:53--73, 1985.

\bibitem{Baud2015} T. A. Baudino, Targeted cancer therapy: the next generation of
cancer treatment.  \textsl{Current Drug Discovery Technologies}, 12(1):3--20, 2015.

\bibitem{Byrne1995} H. M. Byrne and M. A. J. Chaplain. Mathematical models for tumour angiogenesis:numerical simulations and nonlinear wave solutions, \textsl{Bulletin of Mathematical Biology}, 57:461--486, 1995.

\bibitem{Borzi2020} A. Borz\'i and L. Gr\"une. Towards a solution of mean-field control problems using model predictive control, \textsl{IFAC-PapersOnLine}, 53(2):4973-4978, 2020.

\bibitem{Capa2009} V. Capasso and D. Morale. Stochastic modelling of tumour-induced angiogenesis , \textsl{Journal of Mathematical Biology}, 58(1-2):219--233, 2009.


\bibitem{CC} J. S. Chang and G. Cooper. A practical difference scheme for Fokker-Planck equations.
\textsl{Journal of Computational Physics}, 6:1--16, 1970.

\bibitem{Chap1} M. A. J. Chaplain and Stuart AM. A model mechanism for the chemotactic response of endothelial cells to tumour angiogenesis factor, \textsl{IMA Journal of Mathematics Applied in Medicine and Biology}, 10:149--168, 1993.

\bibitem{Chap2} M. A. J. Chaplain. Mathematical  modelling  of  angiogenesis, \textsl{Journal of Neurooncology}, 50:37--51, 2000.

\bibitem{Chap3} M. A. J. Chaplain, S. R. McDougall, and A. R. A. Anderson. Mathematical modeling of tumor-induced angiogenesis,
\textsl{Annual Review of Biomedical Engineering}, 8:233--257,
2006.

\bibitem{Chap4} A. Stephanou, S. R. McDougall, A. R. Anderson and M. A. Chaplain. Mathematical modelling of flow in 2D and 3D vascular networks: applications to anti-angiogenic and chemotherapeutic drug strategies, \textsl{Mathematical and Computer Modelling}, 41:1137-1156, 2005.

\bibitem{Chong2010} G. Chong and N. C. Tebbutt. Using bevacizumab with different chemotherapeutic regimens in metastatic colorectal cancer: balancing utility with low toxicity, \textsl{Therapeutic Advances in Medical Oncology}, 2(5): 309–317, 2010.


\bibitem{Cser2019} D. Csercsik and L. Kov\'acs. Dynamic modeling of the angiogenic switch and its inhibition by Bevacizumab, \textsl{Complexity}, 9079104, 2019.

\bibitem{Dick2008} J. E. Dick. Stem cell concepts renew cancer research, \textsl{Blood}, 112(13):4793-4807, 2008.

\bibitem{Ferr2003} N. Ferrara, H. Gerber, and J. LeCouter. The biology of VEGF and its receptors, \textsl{Nature Medicine}, 9(6):669--676,
2003.

\bibitem{Ferr2005} N. Ferrara, K. J. Hillan, and W. Novotny. Bevacizumab
(Avastin), a humanized anti-VEGF monoclonal antibody for
cancer therapy, \textsl{Biochemical and Biophysical Research Communications}, 333(2):328--335, 2005.


\bibitem{Fitz2013} C. Fitzmaurice, D. Dicker, et. al. The global burden of cancer, \textsl{JAMA Oncology}, 1(4):505-527, 2013.

\bibitem{Folk1} J. Folkman. The role of angiogenesis in tumor growth, \textsl{Seminars in Cancer Biology},  3(2):65-71, 1992.

\bibitem{Folk2} J. Folkman. Role of angiogenesis in tumor growth and metastasis, \textsl{Seminars in Oncology}, 29(6):15--18, 2002.

\bibitem{Frances2011} N. Frances, L. Claret, R. Bruno and A. Iliadis. Tumor growth modeling from clinical trials reveals synergistic
anticancer effect of the capecitabine and docetaxel combination
in metastatic breast cancer, \textsl{Cancer Chemotherapy and Pharmacology}, 68:1413-1419, 2011.



\bibitem{Gatenby2009} R. A. Gatenby, A. S. Silva, R. J. Gillies and B. R. Frieden. Adaptive therapy, \textsl{Cancer Research}, 69:4894-4903, 2009.

\bibitem{Gupta1} M. Gupta, R. K. Mishra and S. Roy. Sparse reconstruction of log-conductivity in current density impedance imaging, \textsl{Journal of Mathematical Imaging and Vision}, 62:189-205, 2020.

\bibitem{Gupta2} M. Gupta, R. K. Mishra and S. Roy. Sparsity-based nonlinear reconstruction of optical parameters in two-photon photoacoustic computed tomography, \textsl{Inverse Problems}, 
37:044001, 2021.


\bibitem{Hoare51}
A. Hoare, D. G. Regan and D. P. Wilson. Sampling and sensitivity analyses tools (SaSAT) for computational modelling. \textsl{Theoretical Biology and Medical Modelling}, 5(1):4, 2008. 



\bibitem{hag:zha}  William W. Hager, Hongchao Zhang, 
{ A new conjugate gradient method with guaranteed descent and an efficient line search},
\textsl{SIAM Journal on Optimization}, 16(1):170–192, 2005.




\bibitem{Helton50}
J. Helton, J. Johnson, W. Oberkampf and C. Storlie. A sampling-based computational strategy for the representation of epistemic uncertainty in model predictions with evidence theory. \textsl{Computer Methods in Applied Mechanics
and Engineering}, 196(37-40):3980--3998, 2007.


\bibitem{Ilic2016} I. Ilic, S. Jankovic and M. Ilic. Bevacizumab Combined with Chemotherapy Improves Survival for Patients with Metastatic Colorectal Cancer: Evidence from Meta Analysis, \textsl{PLoS One}, 11(8): e0161912, 2016.

\bibitem{Jack2000} T. L. Jackson and H. M. Byrne. A mathematical model to study the effects of drug resistance
and vasculature on the response of solid tumors to
chemotherapy, \textsl{Mathematical Biosciences}, 164:17-38, 2000.

\bibitem{Lions1969}
J.-L. Lions, { Quelque methodes de résolution des problemes aux limites non lin\'eaires}, \textsl{Paris, Dunod-Gauth. Vill.}, 1969.


\bibitem{Marino52}
S. Marino, I. B. Hogue, C. J. Ray and D. E. Kirschner. A methodology for performing global uncertainty and sensitivity analysis in systems biology. \textsl{Journal of Theoretical Biology}, 254(1):178--196, 2008. 


\bibitem{MB} M. Mohammadi and A. Borz{\`i}. Analysis of the Chang-Cooper discretization scheme for a class of 
Fokker-Planck equations.
\textsl{Journal of Numerical Mathematics}, 2015.

\bibitem{Moha2008} B. Mohammadi, V. Haghpanah and B. Larijani. A stochastic model of tumor angiogenesis, \textsl{Computers in Biology and Medicine}, 38(9):1007--1011, 2008.

\bibitem{Mohile2013} S. G. Mohile, M. Hardt, W. Tew, C. Owusu, H. Klepin, C. Gross, A. Gajra, S. M. Lichtman, T. Feng, K. Togawa, R. Ramani, V. Katheria, K. Hansen and A. Hurria. Toxicity of Bevacizumab in Combination with Chemotherapy in Older Patients, \textsl{Oncologist}, 18(4): 408–414, 2013.

\bibitem{Mousa2015} L. Mousa, M. E. Salem and S. Mikhail. Biomarkers of angiogenesis in colorectal cancer, \textsl{Biomarkers in Cancer}, 7(S1):13--19, 2015.

\bibitem{Pal2020} S. Pal and S. Roy. A new non-linear conjugate gradient algorithm for destructive cure rate model and a simulation study: illustration with negative binomial competing risks, \textsl{Communications in Statistics - Simulation and Computation}, doi: 10.1080/03610918.2020.1819321, 2020.

\bibitem{Pal2021} S. Pal and S. Roy. On the estimation of destructive cure rate model: A new study with exponentially weighted Poisson competing risks, \textsl{Statistica Neerlandica}, https://doi.org/10.1111/stan.12237, 2021.


\bibitem{Par49}
M. Paruggia. Sensitivity Analysis in Practice: A guide to assessing scientific models. \textsl{Journal of the American Statistical Association}, 101(473):398--399, 2006. 

\bibitem{Pierce2008} S. M. Pierce. Computational and mathematical modeling of angiogenesis, \textsl{Microcirculation}, 15(8):739--751, 2008.

\bibitem{Powa2013} G. G. Powathil, D. J. Adamson  and M. A. Chaplain. Towards predicting the response of a solid tumour to chemotherapy and radiotherapy treatments: clinical insights from a computational model. \textsl{PLoS Computational Biology}, 9:e1003120, 2013.

\bibitem{Rie2015} H. Rieger and M. Welter. Integrative models of vascular
remodeling during tumor growth, \textsl{Wiley Interdisciplinary
Reviews: Systems Biology and Medicine}, 7(3):113--129,
2015.


\bibitem{Roy2016} S. Roy,  M. Annunziato and A. Borz\`i.  A Fokker--Planck feedback control-constrained approach for modelling crowd motion. \textsl{Journal of Computational and Theoretical Transport}, 45(6):452--458, 2016.

\bibitem{Roy2017} S. Roy, A. Borz\`i and A. Habbal. Pedestrian motion constrained by FP-constrained Nash games. \textsl{Royal Society Open Science}, 4(9):170648, 2017.


\bibitem{Roy2018a} S. Roy,  M. Annunziato, A. Borz\`i and C. Klingenberg. A Fokker-Planck approach to control collective motion. \textsl{Computational Optimization and Applications}, 69(2):423--459, 2018.

\bibitem{Roy2020} S. Roy. A sparsity-based Fokker-Planck optimal control framework for modeling traffic flows, \textsl{AIP Conference Proceedings}, 2302:110007, 2020.

\bibitem{Schm2012} H. J. Schmoll, E. V. Cutsem, et. al. ESMO Consensus Guidelines for management of patients with colon and rectal cancer. a personalized approach to clinical decision making, \textsl{Annals of Oncology}, 23(10:2479-2516, 2012.

\bibitem{Sapi2015} J. S\'api, L. Kov\'acs, D. A. Drexler, P. Kocsis, D. Gaj\'ari and Z. S\'api. Tumor volume estimation and quasi- continuous
administration for most effective bevacizumab therapy, \textsl{PLoS ONE}, 10(11):1-20, 2015.

\bibitem{Sun2012} W. Sun. Angiogenesis in metastatic colorectal cancer and the benefits of targeted therapy, \textsl{Journal of Hematology and Oncology}, 5:63, 2012.

\bibitem{Stur2018} M. Sturrock, I. S. Miller, G. Kang, N. Hannis Arba'ie, A. C. O'Farrell, A. Barat, G. Marston, P. L. Coletta, A. T. Byrne and J. H. Prehn. Anti-angiogenic drug scheduling optimisation with application to colorectal cancer, \textsl{Scientific Reports}, 8:11182, 2018.

\bibitem{Tang2016} B. Tang, Y. Xiao, S. Tang and R. A. Cheke, A Feedback Control Model of Comprehensive Therapy for Treating Immunogenic Tumours, \textsl{International Journal of Bifurcation and Chaos}, 26(3):1650039-58, 2016.

\bibitem{Tao} 
T. Tao. Nonlinear dispersive equations: local and global analysis, \textsl{American Mathematical Society}, 2006.

\bibitem{Thal2016} V. Thalhofer, M. Annunziato and A. Borz\'i. Stochastic modelling and control of antibiotic subtilin production, \textsl{Journal of Mathematical Biology}, 73:727–749, 2016.

\bibitem{trobuch}
F. Tr\"oltzsch. Optimal control of partial differential equations: theory, methods and applications, \textsl{Americal Mathematical Society}, 2010.

\bibitem{Vila2017} G. Vilanova, I. Colominas and Hector Gomez. A mathematical model of tumour angiogenesis: growth, regression and regrowth, \textsl{Journal of the Royal Society Interface}, 14(126):20160918, 2017.

\bibitem{Yin2020} L. Yin, J. Li, D. Ma, D. Li and Y. Sun. Angiogenesis in primary colorectal cancer and matched metastatic tissues: Biological and clinical implications for anti-angiogenic therapies, \textsl{Oncology Letters}, 19(5):3558-3566, 2020.


\end{thebibliography}
\end{document}